%
%
\documentclass[11pt, letter]{article}
\usepackage[margin=1in]{geometry}



\usepackage{latexsym}
\usepackage{algorithm}
\usepackage{amsmath}
\usepackage{amsfonts}
\usepackage{amssymb}
\usepackage{bbm}
\usepackage{tabls}
\usepackage{theorem}
\usepackage[dvips]{graphicx}
\usepackage{color}
\usepackage{enumerate}
\usepackage{algorithmicx}
\usepackage[noend]{algpseudocode}
\newcommand{\ignore}[1]{}
\usepackage{sansmath}
\usepackage{hyperref}
\usepackage[round]{natbib}
\newcommand{\poly}{\operatorname{poly}}

\newcommand{\Prob}[2]{\underset{#1}{\ensuremath{\mathsf{Pr}}}\left[#2\right]}
\newcommand{\Expectation}[2]{\underset{#1}{\mathbb{E}} \left[ #2  \right]}

\newcommand{\pr}[1]{\left(#1\right)}
\newcommand{\NP}{\mathsf{NP}}


\newcommand{\OPT}[0]{{\ensuremath{\sf{OPT}}}}
\newcommand{\CP}[0]{{\ensuremath{\sf{CP}}}}

\newcommand{\diag}{\textup{diag}}

\DeclareMathOperator{\tr}{tr}

\newcommand{\E}[1]{\ensuremath{\mathbb{E}}\text{{$\left[#1\right]$}}}

\DeclareMathAlphabet{\altmathcal}{OMS}{cmsy}{m}{n}

\newtheorem{theorem}{Theorem}[section]
\newtheorem{lemma}[theorem]{Lemma}
\newtheorem{corollary}[theorem]{Corollary}

\newtheorem{definition}[theorem]{Definition}
\newtheorem{claim}[theorem]{Claim}
\newtheorem{observation}[theorem]{Observation}
\newtheorem{remark}[theorem]{Remark}

\newenvironment{proof}{\smallskip\noindent{\bf Proof}:}{$\hfill \Box$\\}

\newenvironment{proofof}[1]{\smallskip\noindent{\bf Proof of #1}:}{$\hfill \Box$\\}

\newcommand{\Tao}[1]{\color{green} Tao's Comment: {#1} \color{black}}

\def\I{\mathcal{I}}

\def\R{\mathbb{R}}
\def\BB{\altmathcal{B}}
\def\Z{\mathbb{Z}}

\def\U{\altmathcal{U}}

\def\A{\mathcal{A}}

\def\SS{\altmathcal{S}}

\newcommand{\cut}[1]{}

\title{Proportional Volume Sampling and Approximation Algorithms for $A$-Optimal Design}

\usepackage{times}





\author{Aleksandar Nikolov\thanks{University of Toronto. Email:anikolov@cs.toronto.edu.} and Mohit Singh\thanks{Georgia Institute of Technology. Email:mohitsinghr@gmail.com.  Supported in part by
National Science Foundation grant CCF-24067E5.} and Uthaipon Tao Tantipongpipat\thanks{Georgia Institute of Technology. Email:tao@gatech.edu. Supported in part by
National Science Foundation grants CCF-24067E5 and CCF-1740776 (Georgia Institute of Technology TRIAD), and by a
Georgia Institute of Technology ARC fellowship.}}
\usepackage{varioref}
\usepackage{amstext}

\begin{document}

\maketitle

\begin{abstract}
We study the $A$-optimal design problem where we are given vectors $v_1,\ldots, v_n\in \R^d$, an integer $k\geq d$, and the goal is to select a set $S$ of $k$ vectors that minimizes the trace of $\left(\sum_{i\in  S} v_i v_i^{\top}\right)^{-1}$. Traditionally, the problem is an instance of optimal design of experiments in statistics (\cite{pukelsheim2006optimal}) where each vector corresponds to a linear measurement of an unknown vector and the goal is to pick $k$ of them that minimize the average variance of the error in the maximum likelihood estimate of the vector being measured. The problem also finds applications in sensor placement in wireless networks~(\cite{joshi2009sensor}), sparse least squares regression~(\cite{BoutsidisDM11}), feature selection for $k$-means clustering~(\cite{boutsidis2013deterministic}), and matrix approximation~(\cite{de2007subset,de2011note,avron2013faster}). In this paper, we introduce \emph{proportional volume sampling} to obtain improved approximation algorithms for $A$-optimal design.

Given a matrix, proportional volume sampling involves picking a set of columns $S$ of size $k$ with probability proportional to $\mu(S)$ times $\det(\sum_{i \in S}v_i v_i^\top)$ for some measure $\mu$. Our main result is to show the approximability of the $A$-optimal design problem can be reduced to \emph{approximate} independence properties of the measure $\mu$. We appeal to hard-core distributions as candidate distributions $\mu$ that allow us to obtain improved approximation algorithms for the $A$-optimal design. Our results include a $d$-approximation when $k=d$, an $(1+\epsilon)$-approximation when $k=\Omega\left(\frac{d}{\epsilon}+\frac{1}{\epsilon^2}\log\frac{1}{\epsilon}\right)$ and $\frac{k}{k-d+1}$-approximation when repetitions of vectors are allowed in the solution. We also consider generalization of the problem for $k\leq d$ and obtain a $k$-approximation.

We also show that the proportional volume sampling algorithm gives approximation algorithms for other optimal design objectives (such as $D$-optimal design~\cite{singh2018approximate} and generalized ratio objective~\cite{mariet2017elementary}) matching or improving previous best known results. Interestingly, we show that a similar guarantee cannot be obtained for the $E$-optimal design problem.
We also show that the $A$-optimal design problem is $\NP$-hard to approximate within a fixed constant when $k=d$.
\end{abstract}



\section{Introduction}

Given a collection of vectors, a common problem is to select a subset of size $k\leq n$ that \emph{represents} the given vectors. To quantify the representability of the chosen set, typically one considers spectral properties of certain natural matrices defined by the vectors. Such problems arise as experimental design~(\cite{fedorov1972theory,pukelsheim2006optimal}) in statistics; feature selection~(\cite{boutsidis2013deterministic}) and sensor placement problems~(\cite{joshi2009sensor}) in machine learning; matrix sparsification~(\cite{batson2012twice,spielman2011graph}) and  column subset selection~(\cite{avron2013faster}) in numerical linear algebra. In this work, we consider the optimization problem of choosing the representative subset that aims to optimize the \emph{$A$-optimality criterion} in experimental design.

Experimental design is a classical problem in statistics~(\cite{pukelsheim2006optimal}) with recent applications in machine learning~(\cite{joshi2009sensor,wang2016computationally}). Here the goal is to estimate an unknown vector $w\in \R^d$ via linear measurements of the form $y_i= v_i^\top w + \eta_i$ where $v_i$ are possible experiments and $\eta_i$ is assumed to be small i.i.d. unbiased Gaussian error introduced in the measurement. Given a set $S$ of linear measurements, the maximum likelihood estimate $\hat{w}$ of $w$ can be obtained via a least squares computation. The error vector $w-\hat{w}$ has a Gaussian distribution with mean $0$ and  covariance matrix $\left(\sum_{i\in S} v_i v_i^\top\right)^{-1}$. In the optimal experimental design problem the goal is to pick a cardinality $k$ set $S$ out of the $n$ vectors such that the measurement error is minimized. Minimality is measured according to different criteria, which quantify the ``size'' of the covariance matrix. In this paper, we study the classical $A$-optimality criterion, which aims to minimize the average variance over directions, or equivalently the trace of the covariance matrix, which is also the expectation of the squared Euclidean norm of the error vector $w-\hat{w}$.

We let $V$ denote the $d\times n$ matrix whose columns are the vectors $v_1,\ldots, v_n$ and $[n]=\{1,\ldots, n\}$. For any set $S\subseteq [n]$, we let $V_S$ denote the $d\times |S|$ submatrix of $V$ whose columns correspond to vectors indexed by $S$. Formally, in the $A$-optimal design problem our aim is to find a subset $S$ of cardinality $k$ that minimizes the trace of $(V_SV_S^\top)^{-1} = \left( \sum_{i\in S} v_i v_i^\top\right)^{-1}$. We also consider the $A$-optimal design problem with repetitions, where the chosen $S$ can be a multi-set, thus allowing a vector to chosen more than once.

Apart from experimental design, the above formulation finds application in other areas such as sensor placement in wireless networks~(\cite{joshi2009sensor}), sparse least squares regression~(\cite{BoutsidisDM11}), feature selection for $k$-means clustering~(\cite{boutsidis2013deterministic}), and matrix approximation~(\cite{avron2013faster}). For example, in matrix approximation~(\cite{de2007subset,de2011note,avron2013faster}) given a $d \times n$ matrix $V$, one aims to select a set $S$ of $k$  such that the Frobenius norm of the Moore-Penrose pseudoinverse of the selected matrix $V_S$ is minimized. It is easy to observe that this objective equals the $A$-optimality criterion for the vectors given by the columns of $V$.



\subsection{Our Contributions and Results}
\cut{
\begin{table}[h!]
\begin{center}\begin{tabular}{|c|c|c|c|}
\hline
Paper & Case \(k=d\) & Without Repetition  & With Repetition \\\hline
\cite{avron2013faster} & \(n-d+1\) & \(\frac{n-d+1}{k-d+1}\) & \(\frac{n-d+1}{k-d+1}\) \\\hline
\cite{wang2016computationally}, sampling & - & \(1+\epsilon\), for \(k\geq\Omega(d^2/\epsilon)\) & \(1+\epsilon\) for \(k\geq\Omega(d\log d/\epsilon^2)\) \textsuperscript{[1]} \\\hline
\cite{wang2016computationally}, greedy & - & \(1+\frac{d(d+1)}{2(k-d+1)}\) & \(1+\frac{d(d+1)}{2(k-d+1)}\) \\\hline
\cite{allen2017near} & - & \(1+\epsilon\), for \(k\geq\Omega(d/\epsilon^2) \textsuperscript{[2]}\) & \(1+\epsilon\), for \(k\geq\Omega(d/\epsilon^2)\) \\\hline
Our paper & \(d\) \textsuperscript{[3]} & \(1+\epsilon\), for \(k\geq\Omega(\frac{d}{\epsilon}+\frac{\log \epsilon}{\epsilon^2})\) & \(\frac{k}{k-d+1}\) \textsuperscript{[3]} \\\hline
\end{tabular}\end{center}
\caption{Summary of approximation ratio of  \(A\)-optimal results. Any result in without repetition case implies with repetition case. We list applications of results to \(k=d\), if applicable, for comparison. \textsuperscript{[1]}By outputting \(S\) of size \(O(k)\) instead of \(k\). \textsuperscript{[2]} The ratio applies to \(A,D,E,V,G\)-optimality. We proved in this paper that this ratio is tight for \(E\)-optimality. \textsuperscript{[3]}The ratio tight up to the convex relaxation \eqref{eq:CP-obj}-\eqref{eq:CP-bounds}.}
\label{table:summary-result}
\end{table}
}
\begin{table}
\begin{center}\begin{tabular}{|c|c|c|}
\hline
Problem & Our result & Previous work \\\hline
Case \(k=d\) & \(d\) \textsuperscript{[1]}  & \(n-d+1\) (\cite{avron2013faster}) \\\hline
\begin{tabular}{@{}c@{}} Asymptotic \(k>>d\) \\ without Repetitions\end{tabular}   & \(1+\epsilon\), for \(k\geq\Omega\left(\frac{d}{\epsilon}+\frac{\log 1/\epsilon}{\epsilon^2}\right)\) & \(1+\epsilon\), for \(k\geq\Omega\left(\frac{d}{\epsilon^2}\right) \) (\cite{allen2017near}) \\\hline
\begin{tabular}{@{}c@{}} Arbitrary $k$ and $d$ \\ With Repetitions\end{tabular}   & \(\frac{k}{k-d+1}\) \textsuperscript{[1]} & \(n-d+1\) (\cite{avron2013faster}) \\\hline
\begin{tabular}{@{}c@{}} Asymptotic \(k>> d\) \\ With Repetitions\end{tabular}   & \(1+\epsilon\), for \(k\geq d+\frac{d}{\epsilon}\) \textsuperscript{[1]} & \(1+\epsilon\), for \(k\geq\Omega(\frac{d}{\epsilon^2})\)  (\cite{allen2017near}) \\\hline
\end{tabular}\end{center}
\caption{Summary of approximation ratios of  \(A\)-optimal results. We list the best applicable previous work for comparison. \textsuperscript{[1]}The ratios are tight with matching integrality gap of the convex relaxation \eqref{eq:CP-obj}-\eqref{eq:CP-bounds}. 
 }
\label{table:summary-result}
\end{table}

Our main contribution is to introduce the \emph{proportional volume sampling} class of probability measures to obtain improved approximation algorithms for the $A$-optimal design problem. We obtain improved algorithms for the problem with and without repetitions in regimes where $k$ is close to $d$ as well as in the asymptotic regime where $k\geq d$. The improvement is summarized in Table \ref{table:summary-result}.  Let $\altmathcal{U}_k$ denote the collection of subsets of $[n]$ of size exactly $k$ and $\altmathcal{U}_{\leq k}$ denote the subsets of $[n]$ of size at most $k$. We will consider distributions on sets in $\U_k$ as well as $\U_{\leq k}$ and state the following definition more generally.

\begin{definition}
Let $\mu$ be probability measure on sets in $\U_k$ (or $\U_{\leq k}$).  Then the proportional volume sampling with measure $\mu$ picks a set \cut{of vectors indexed by} ${S}\in \U_k$ (or $\U_{\leq k}$) with probability proportional to $\mu(S) \det(V_SV_S^\top)$.
\end{definition}

Observe that when $\mu$ is the uniform distribution and $k\leq d$ then we obtain the standard volume sampling~(\cite{DeshpandeR10}) where one picks a set $S$ proportional to $\det(V_SV_S^\top)$, or, equivalently, to the volume of the parallelopiped spanned by the vectors indexed by $S$.   The volume sampling measure has received much attention and efficient algorithms are known for sampling from it~(\cite{DeshpandeR10,guruswami2012optimal}). More recently, efficient algorithms were obtained even when $k\geq d$~(\cite{li2017column,singh2018approximate}). We discuss the computational issues of sampling from proportional volume sampling in Lemma~\ref{lem:sample} and Section~\ref{sec:deterministic}.

Our first result shows that approximating the $A$-optimal design problem can be reduced to finding distributions on $\U_k$ (or $\U_{\leq k}$) that are \emph{approximately independent}. First, we define the exact formulation of approximate independence needed in our setting.

\begin{definition}
Given integers $d\leq k\leq n$ and a vector $x\in [0,1]^n$ such that $1^\top x=k$, we call a measure $\mu$ on sets in $\U_k$ (or $\U_{\leq k}$), $\alpha$-approximate $(d-1,d)$-wise independent with respect to \(x\) if for any subsets $T, R\subseteq [n]$ with $|T|=d-1$ and $|R|=d$, we have
$$ \frac{\mathsf{Pr}_{\altmathcal{S}\sim \mu}[T\subseteq \SS]}{\mathsf{Pr}_{\altmathcal{S}\sim \mu}[R\subseteq \SS]}\leq \alpha \frac{x^T}{x^R}$$
where \(x^L:=\prod_{i\in L}x_i\) for any $L\subseteq [n]$.
We omit ``with respect to \(x\)" when the context is clear.  \end{definition}
Observe that if the measure $\mu$ corresponds to picking each element $i$ independently with probability $x_i$, then  $\frac{\mathsf{Pr}_{\altmathcal{S}\sim \mu}[T\subseteq \SS]}{\mathsf{Pr}_{\altmathcal{S}\sim \mu}[R\subseteq \SS]}= \frac{x^T}{x^R}$. However, this distribution has support on all sets and not just sets in $\U_k$ or $\U_{\leq k}$, so it is not allowed by the definition above.

Our first result reduces the search for approximation algorithms for
$A$-optimal design to construction of approximate $(d-1,d)$-wise
independent distributions. This result generalizes the connection
between volume sampling and $A$-optimal design established
in~\cite{avron2013faster} to proportional volume sampling, which
allows us to exploit the power of the convex relaxation and get a
significantly improved approximation.
\begin{theorem}\label{thm:nearind-to-Aopt}
Given integers $d\leq k\leq n$, suppose that for any a vector $x\in [0,1]^n$ such that $1^\top x=k$ there exists a distribution $\mu$ on sets in $\U_k$ (or $\U_{\leq k}$) that is $\alpha$-approximate $(d-1,d)$-wise independent. Then the proportional volume sampling with measure $\mu$ gives an $\alpha$-approximation algorithm for the $A$-optimal design problem.
\end{theorem}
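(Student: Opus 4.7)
The plan is to bound the expected objective of proportional volume sampling against the natural convex relaxation of $A$-optimal design,
\begin{align*}
\minimize\ \tr\!\left(\Bigl(\textstyle\sum_{i=1}^n x_i v_i v_i^\top\Bigr)^{-1}\right)\ \st\ \mathbf{1}^\top x = k,\ 0 \leq x \leq \mathbf{1},
\end{align*}
whose optimum lower-bounds $\Opt$ since the indicator vector of any integer optimum is feasible. Let $x^*$ be an optimal fractional solution, let $\mu$ be a distribution on $\U_k$ (or $\U_{\leq k}$) that is $\alpha$-approximate $(d-1,d)$-wise independent with respect to $x^*$, and sample $\SS$ by proportional volume sampling with measure $\mu$. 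It suffices to show $\mathbb{E}[\tr((V_\SS V_\SS^\top)^{-1})] \leq \alpha\cdot\tr(M(x^*)^{-1})$, where $M(x^*) := \sum_i x^*_i v_iv_i^\top$.

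The central algebraic step is a pair of Cauchy-Binet identities. Writing $\det(M)\tr(M^{-1}) = \tr(\textup{adj}(M))$, applying Cauchy-Binet to $\det(V_S V_S^\top)$, and then applying it once more to each $(d-1)\times(d-1)$ principal minor of $V_S V_S^\top$ (each of which is the Gram matrix of $V_S$ with a single row deleted), one obtains
\begin{align*}
\det(V_S V_S^\top) &= \sum_{R\subseteq S,\,|R|=d} \det(V_R)^2, \\
\det(V_S V_S^\top)\cdot \tr\!\bigl((V_S V_S^\top)^{-1}\bigr) &= \sum_{T\subseteq S,\,|T|=d-1} \det(V_T^\top V_T),
\end{align*}
whenever $V_S V_S^\top$ is invertible. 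Setting $V' := V\diag(\sqrt{x^*})$ and applying the same identities to $M(x^*) = V'(V')^\top$ gives fractional analogs in which each term carries an extra weight $(x^*)^R$ or $(x^*)^T$.

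Plugging these identities into the sampling weights and swapping the order of summation, the algorithm's expected objective becomes
\begin{align*}
\mathbb{E}\bigl[\tr((V_\SS V_\SS^\top)^{-1})\bigr] = \frac{\sum_{|T|=d-1} \det(V_T^\top V_T)\,\mathsf{Pr}_{\SS\sim\mu}[T\subseteq \SS]}{\sum_{|R|=d} \det(V_R)^2\,\mathsf{Pr}_{\SS\sim\mu}[R\subseteq \SS]},
\end{align*}
while $\tr(M(x^*)^{-1})$ equals the same ratio with marginals replaced by $(x^*)^T$ and $(x^*)^R$. To finish, I cross-multiply the desired inequality and expand into a double sum over pairs $(T,R)$; for each pair the approximate-independence hypothesis gives $\mathsf{Pr}_\mu[T\subseteq \SS]\cdot (x^*)^R \leq \alpha\cdot (x^*)^T\cdot \mathsf{Pr}_\mu[R\subseteq \SS]$, and because all coefficients $\det(V_T^\top V_T)\det(V_R)^2$ are nonnegative, the inequality is preserved termwise and thus after summation.

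The one technical point requiring genuine care is the Cauchy-Binet expansion for $\tr((V_S V_S^\top)^{-1})$, which reduces to the identity $\det(V_T^\top V_T) = \sum_{i=1}^d \det(V_T^{(-i)})^2$ for the $d\times(d-1)$ matrix $V_T$. Everything else is a rearrangement followed by a pointwise application of the independence bound. The argument is insensitive to whether $\mu$ is supported on $\U_k$ or $\U_{\leq k}$, because sets with $|S|<d$ satisfy $\det(V_SV_S^\top) = 0$ and therefore contribute nothing to either the sampling distribution or to the relevant marginals.
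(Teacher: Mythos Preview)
Your proposal is correct and follows essentially the same route as the paper: expand both the algorithm's expected objective and the relaxation value using the identity $\tr(M^{-1}) = E_{d-1}(M)/\det(M)$ together with Cauchy--Binet, swap the order of summation to obtain ratios of marginal-weighted sums over $(d-1)$- and $d$-subsets, then cross-multiply and apply the approximate-independence inequality term by term. The paper phrases the numerator via $E_{d-1}(V_SV_S^\top)$ and invokes its general Cauchy--Binet identity~\eqref{eq:cauchy-binet}, whereas you derive the same thing through $\tr(\textup{adj}(M))$ and the explicit $(d-1)$-minor expansion, but this is the same computation.
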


In the above theorem, we in fact only need an approximately independent distribution $\mu$ for the optimal solution $x$ of the natural convex relaxation for the problem, which is given in \eqref{eq:CP-obj}--\eqref{eq:CP-bounds}. The result also bounds the integrality gap of the convex relaxation by $\alpha$. Theorem~\ref{thm:nearind-to-Aopt} is proved in Section~\ref{sec:approx-independent}.

Theorem~\ref{thm:nearind-to-Aopt} reduces our aim to constructing distributions that have approximate $(d-1,d)$-independence. We focus our attention on the general class of \emph{hard-core distributions}.  We call $\mu$ a \emph{hard-core} distribution with parameter $\lambda\in \R^n_+$ if $\mu(S)\propto \lambda^S:= \prod_{i\in S} \lambda_i$ for each set in $\U_k$ (or $\U_{\leq k}$). Convex duality implies that hard-core distributions have the maximum entropy among all distributions which match the marginals of $\mu$~(\cite{boyd2004convex}). Observe that, while $\mu$ places non-zero probability on exponentially many sets, it is enough to specify $\mu$ succinctly by describing $\lambda$. Hard-core distributions over various structures including spanning trees~(\cite{gharan2011randomized}) or matchings~(\cite{kahn1996asymptotics,kahn2000asymptotics}) in a graph display \emph{approximate independence} and this has found use in combinatorics as well as algorithm design. Following this theme, we show that certain hard core distributions on $\U_k$ and $\U_{\le k}$  exhibit approximate $(d-1,d)$-independence when $k=d$ and in the asymptotic regime when $k>>d$.

\begin{theorem}\label{thm:k=dAndasymptotic}
Given integers $d\leq k\leq n$ and a vector $x\in [0,1]^n$ such that $1^\top x=k$, there exists a hard-core distribution $\mu$ on sets in $\U_k$ that is $d$-approximate $(d-1,d)$-wise independent when $k=d$. Moreover, for any $\epsilon >0$, if $k=\Omega\left(\frac{d}{\epsilon}+\frac{1}{\epsilon^2} \log{\frac{1}{\epsilon}}\right)$, then there is a hard-core distribution $\mu$ on $\U_{\leq k}$  that is $(1+\epsilon)$-approximate $(d-1,d)$-wise independent. Thus we obtain a $d$-approximation algorithm for the $A$-optimal design problem when $k=d$ and $(1+\epsilon)$-approximation algorithm when $k=\Omega\left(\frac{d}{\epsilon}+\frac{1}{\epsilon^2} \log{\frac{1}{\epsilon}}\right)$.
\end{theorem}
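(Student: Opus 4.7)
The plan is to build suitable hard-core distributions $\mu$ in each regime and then invoke Theorem~\ref{thm:nearind-to-Aopt}. In both cases the parameter $\lambda$ of $\mu$ will be read off from the vector $x$ supplied by the theorem.

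For the regime $k=d$, I would set $\lambda=x$ and use the hard-core distribution on $\U_k$, so $\mu(S)\propto x^S$ for $|S|=d$. Since every set in the support has size exactly $d$, for $|R|=d$ the event $R\subseteq\SS$ forces $\SS=R$, giving $\Pr_{\SS\sim\mu}[R\subseteq\SS]\propto x^R$. For $|T|=d-1$, the sets $\SS\supseteq T$ in the support are exactly $T\cup\{j\}$ with $j\notin T$, hence $\Pr_{\SS\sim\mu}[T\subseteq\SS]\propto x^T\sum_{j\notin T}x_j\le x^T\cdot k=d\,x^T$, using $\mathbf{1}^\top x=k=d$. Dividing gives $d$-approximate $(d-1,d)$-wise independence, and Theorem~\ref{thm:nearind-to-Aopt} delivers the $d$-approximation.

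For the regime $k\gg d$, I would take the hard-core distribution on $\U_{\le k}$ with parameters $\lambda_i=cx_i/(1-cx_i)$, for a scaling factor $c=1-\Theta(\epsilon)$ to be chosen. A key reformulation I would use is that this hard-core distribution is exactly the law of $\SS=\{i:X_i=1\}$, where the $X_i$ are independent Bernoulli$(cx_i)$ conditioned on $|\SS|\le k$. With this representation,
\[
\frac{\Pr_{\SS\sim\mu}[T\subseteq\SS]}{\Pr_{\SS\sim\mu}[R\subseteq\SS]}
=\frac{1}{c}\cdot\frac{x^T}{x^R}\cdot\frac{\Pr\!\left[\sum_{i\notin T}X_i\le k-d+1\right]}{\Pr\!\left[\sum_{i\notin R}X_i\le k-d\right]}.
\]
The factor $1/c$ is $1+O(\epsilon)$ by the choice of $c$, so the only remaining task is to bound the ratio of the two survival probabilities by $1+O(\epsilon)$. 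The mean of $\sum_{i\notin T}X_i$ equals $c(k-\sum_{i\in T}x_i)\le ck=(1-\Theta(\epsilon))k$, which lies below each threshold $k-d+1$ and $k-d$ by at least $\Omega(\epsilon k-d)$. Under the hypothesis $k=\Omega(d/\epsilon+\epsilon^{-2}\log(1/\epsilon))$, a standard Chernoff/Bernstein bound then forces both survival probabilities to exceed $1-\epsilon$, so the ratio is at most $1/(1-\epsilon)\le 1+O(\epsilon)$. Rescaling $\epsilon$ by a constant yields $(1+\epsilon)$-approximate $(d-1,d)$-wise independence, and Theorem~\ref{thm:nearind-to-Aopt} gives the $(1+\epsilon)$-approximation.

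The main obstacle I anticipate is the concentration step: both survival probabilities are close to $1$, and one must control their \emph{ratio}, not just each probability individually. Calibrating the shrinkage $c$ so that the effective mean $ck$ sits a distance $\Omega(\epsilon k)$ below $k-d$, while still paying only a factor $1/c=1+O(\epsilon)$ in the marginal ratio, is precisely where the quantitative hypothesis $k=\Omega(d/\epsilon+\epsilon^{-2}\log(1/\epsilon))$ enters (the $d/\epsilon$ term makes the deviation gap dominate $d$, the $\epsilon^{-2}\log(1/\epsilon)$ term makes the Chernoff tail at most $\epsilon$). A secondary subtlety is that coordinates with $x_i$ close to $1$ drive $\lambda_i\to\infty$; the Bernoulli reformulation sidesteps this because $cx_i<1$ always.
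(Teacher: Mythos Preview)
Your proposal is correct and follows essentially the same route as the paper. For $k=d$ your argument is identical to the paper's Lemma~\ref{lem:k=dresult}. For the asymptotic regime, the paper (Lemma~\ref{lm:asymp-ind}) uses the same hard-core parameter $\lambda_i=\frac{x_i}{\beta-x_i}$ with $\beta=1+\epsilon/4$ (i.e.\ your $c=1/\beta$), the same Bernoulli reformulation, and the same Chernoff bound; the only cosmetic difference is that the paper drops the size constraint in the numerator via $\Pr[\BB\supseteq T,\,|\BB|\le k]\le \Pr[\BB\supseteq T]$, whereas you keep the exact ratio and bound the numerator probability by $1$---both yield the factor $\beta/(1-\epsilon/4)$ and only the denominator probability actually requires the Chernoff argument.
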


The above theorem relies on two natural hard-core distributions. In the first one, we consider the hard-core distribution with parameter $\lambda= x$ on sets in $\U_k$ and in the second we consider the hard-core distribution with parameter $\lambda=\frac{(1-\epsilon)x}{1- (1-\epsilon)x}$ (defined co-ordinate wise) on sets in $\U_{\leq k}$. We prove the theorem in Section~\ref{sec:without-rep}. 

Our techniques also apply to the $A$-optimal design problem with repetitions where we obtain an even stronger result, described below. The main idea is to introduce multiple, possibly exponentially many, copies of each vector, depending on the fractional solution, and then apply proportional volume sampling to obtain the following result.

\begin{theorem}\label{thm:repetitions}
For all \(k\geq d\) and \(0<\epsilon\leq 1\), there is a $(\frac{k}{k-d+1}+\epsilon)$-approximation algorithm for the $A$-optimal design problem with repetitions. In particular, there is a $(1+\epsilon)$-approximation when $k\geq d+ \frac{d}{\epsilon}$.
\end{theorem}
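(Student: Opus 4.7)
\medskip\noindent\textbf{Proof plan for Theorem~\ref{thm:repetitions}.} The strategy is to reduce the with-repetitions case to an instance to which Theorem~\ref{thm:nearind-to-Aopt} applies, by replacing each input vector with many ``copies'' of small fractional weight so that a suitable hard-core distribution becomes nearly independent. First, I would solve the natural convex relaxation for the with-repetitions problem: minimize $\tr\bigl((\sum_{i=1}^n x_i v_i v_i^\top)^{-1}\bigr)$ subject to $\sum_i x_i = k$ and $x_i \geq 0$, obtaining an optimal fractional solution $x^*$ whose value $\CP$ lower bounds $\OPT$. Given a large integer $N$ to be chosen later, I would build an extended instance with $nN$ vectors $v_{i,j}=v_i$ for $i\in[n]$, $j\in[N]$, and the associated fractional vector $\tilde x \in \R^{nN}_+$ with $\tilde x_{i,j} = x^*_i/N$, so that $\sum_{i,j}\tilde x_{i,j}=k$ and every coordinate of $\tilde x$ is at most $\frac{1}{N}\max_i x^*_i$.

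Next I would invoke Theorem~\ref{thm:nearind-to-Aopt} on this extended instance using the hard-core distribution $\mu$ on size-$k$ subsets with parameter $\lambda=\tilde x$, that is, $\mu(S)\propto \prod_{(i,j)\in S}\tilde x_{i,j}$. The algorithm performs proportional volume sampling with $\mu$ on the extended instance, and outputs the multi-set obtained by projecting the chosen $S\subseteq[nN]$ back onto $[n]$. Because the copies corresponding to a common $i$ all equal $v_i$, the resulting Fisher information matrix is the same as the one an honest with-repetitions algorithm would produce. For this hard-core distribution, a direct computation using elementary symmetric polynomials $e_j$ gives
\[
\frac{\Pr_{S\sim\mu}[T\subseteq S]}{\Pr_{S\sim\mu}[R\subseteq S]}
\;=\;\frac{\tilde x^T}{\tilde x^R}\cdot\frac{e_{k-d+1}(\tilde x_{-T})}{e_{k-d}(\tilde x_{-R})}
\]
for every $(d-1)$-set $T$ and $d$-set $R$ in $[nN]$, so it suffices to bound the second factor by $\frac{k}{k-d+1}+\epsilon$.

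The core technical step is to control that ratio of symmetric polynomials. Expanding,
\[
e_j(\tilde x) \;=\; \sum_{\substack{m\in\Z_{\ge 0}^n\\ \sum_i m_i=j}}\prod_{i=1}^n \binom{N}{m_i}\Bigl(\frac{x^*_i}{N}\Bigr)^{m_i},
\]
and since $\binom{N}{m}/N^m \to 1/m!$ as $N\to\infty$, term-by-term one has $e_j(\tilde x)\to \frac{1}{j!}\bigl(\sum_i x^*_i\bigr)^j=\frac{k^j}{j!}$, uniformly in $x^*$ bounded above and $j\le k$. The same limit holds for $e_j(\tilde x_{-T})$ and $e_j(\tilde x_{-R})$ because removing $O(d)$ coordinates of size $O(1/N)$ changes every coefficient by a multiplicative $1\pm O(1/N)$. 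Consequently, choosing $N$ large enough (depending on $n,d,k,\epsilon,\|x^*\|_\infty$) forces
\[
\frac{e_{k-d+1}(\tilde x_{-T})}{e_{k-d}(\tilde x_{-R})} \;\le\; \frac{k^{k-d+1}/(k-d+1)!}{k^{k-d}/(k-d)!}\cdot(1+\epsilon') \;=\;\frac{k}{k-d+1}(1+\epsilon'),
\]
for arbitrarily small $\epsilon'>0$. This establishes $(\tfrac{k}{k-d+1}+\epsilon)$-approximate $(d-1,d)$-wise independence of $\mu$ with respect to $\tilde x$, and Theorem~\ref{thm:nearind-to-Aopt} then yields the desired $\alpha$-approximation against $\CP\le\OPT$.

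Finally, when $k\ge d+d/\epsilon$ one has $\frac{k}{k-d+1}=1+\frac{d-1}{k-d+1}\le 1+\epsilon$, so after a harmless rescaling of $\epsilon$ in the previous step, the same algorithm gives the claimed $(1+\epsilon)$-approximation. The main obstacle I anticipate is the algorithmic one: $N$ must be very large, so proportional volume sampling on $\binom{nN}{k}$ subsets cannot be carried out by brute force. However, since all $N$ copies of $v_i$ are identical and have equal weight $x^*_i/N$, the marginals of $\mu$ depend only on how many copies of each $i$ are taken, and the sampling can be performed efficiently by working directly with multiplicities; alternatively, letting $N\to\infty$ symbolically yields a limiting multi-set-valued distribution that can be sampled by the same recursive tricks used for the standard proportional volume sampler (cf.\ Lemma~\ref{lem:sample} and Section~\ref{sec:deterministic}). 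The rest of the argument is a straightforward calculation of ratios of symmetric polynomials and an application of Theorem~\ref{thm:nearind-to-Aopt}.
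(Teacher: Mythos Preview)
Your proposal is correct but takes a genuinely different route from the paper. You create $N$ identical copies of each $v_i$, split $x_i^*$ evenly among them, and use the hard-core measure $\mu(S)\propto\tilde x^S$ on $\U_k$; the approximate $(d-1,d)$-independence then follows from a limiting argument on ratios of elementary symmetric polynomials as $N\to\infty$. The paper instead first \emph{rounds} the fractional solution so that every copy carries the \emph{same} weight $1/q$ (by creating $qx_i'$ copies of $v_i$), and then takes $\mu$ to be the \emph{uniform} distribution on $\U_k$, i.e.\ plain volume sampling over the multiset of copies. With equal weights the ratio $\Pr_\mu[\SS\supseteq T]/\Pr_\mu[\SS\supseteq R]$ becomes an exact binomial-coefficient computation, yielding exactly $\frac{qk-d+1}{q(k-d+1)}\le\frac{k}{k-d+1}$ with no asymptotics; the extra $(1+\epsilon)$ factor comes entirely from the rounding step and is controlled by the explicit choice $q=2n/(\epsilon k)$.

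The tradeoffs: the paper's equalization trick buys an exact, clean bound and an explicit polynomial size for the ground set ($qk=2n/\epsilon$), and reduces the sampling to standard volume sampling. Your approach avoids the rounding step and works directly with the fractional optimum, but pays with a limiting argument whose required $N$ depends on $\|x^*\|_\infty$ and whose uniformity over all $T,R$ you should justify a bit more carefully (it does hold, since only the multiset of original indices in $T,R$ matters, giving finitely many cases). Both are fine; the paper's is shorter and gives the efficient implementation more directly.
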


We remark that the integrality gap of the natural convex relaxation is at least $\frac{k}{k-d+1}$ (see Section~\ref{sec:integralityA}) and thus the above theorem results in an exact characterization of the integrality gap of the convex program  \eqref{eq:CP-obj}--\eqref{eq:CP-bounds}, stated in the following corollary. The proof of Theorem~\ref{thm:repetitions} appears in Section~\ref{sec:effreplace}.

\begin{corollary}
  For any integers $k\geq d$, the integrality gap of the convex program  \eqref{eq:CP-obj}--\eqref{eq:CP-bounds} for the $A$-optimal design with repetitions is exactly $\frac{k}{k-d+1}$.
\end{corollary}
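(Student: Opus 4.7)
The plan is to establish matching upper and lower bounds on the integrality gap of the convex program \eqref{eq:CP-obj}--\eqref{eq:CP-bounds}, since the word ``exactly'' in the statement requires both directions. The upper bound is immediate from Theorem~\ref{thm:repetitions}: for every $\epsilon \in (0,1]$ that theorem produces an integer feasible multiset whose objective is at most $\bigl(\tfrac{k}{k-d+1}+\epsilon\bigr)$ times the convex-program optimum, and since the latter is a lower bound on the integer optimum, the integrality gap of any instance is at most $\tfrac{k}{k-d+1}+\epsilon$; letting $\epsilon\to 0$ yields the upper bound $\tfrac{k}{k-d+1}$.

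For the lower bound I would exhibit a one-parameter family of instances whose gap tends to $\tfrac{k}{k-d+1}$. The natural choice is $n=d$ with $v_1=e_1$ and $v_i=M\,e_i$ for $2\le i\le d$, where $\{e_1,\dots,e_d\}$ denotes the standard basis of $\R^d$ and $M$ is a large scaling parameter. Since the resulting covariance matrix is diagonal, the relaxation reduces to minimising $\tfrac{1}{x_1}+\sum_{i\ge 2}\tfrac{1}{M^2 x_i}$ subject to $\sum_i x_i=k$, $x_i\ge 0$; a short KKT calculation gives the fractional optimum $\tfrac{(M+d-1)^2}{kM^2}$, which tends to $\tfrac{1}{k}$ as $M\to\infty$. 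Any integer multiset $s$ of size $k$ must have $s_i\ge 1$ for every $i\in\{1,\dots,d\}$ (else $\sum_i s_i v_i v_i^\top$ is singular and the objective is $+\infty$), and hence $s_1 \le k-d+1$; therefore the integer objective is at least $\tfrac{1}{s_1}\ge \tfrac{1}{k-d+1}$, and this is attained up to $O(M^{-2})$ by taking $s_1=k-d+1$ and $s_i=1$ for $i\ge 2$. The ratio of integer to fractional optimum thus tends to $\tfrac{k}{k-d+1}$, and taking $M$ arbitrarily large witnesses a matching lower bound on the supremum defining the integrality gap.

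Combining the two bounds yields the corollary. There is no serious obstacle: the only point requiring attention is the observation that the invertibility constraint forces any integer solution on this instance to waste $d-1$ units of its budget on the long directions $e_2,\dots,e_d$, whereas the fractional solution can place vanishingly small mass there. This asymmetry is precisely what produces the multiplicative loss $\tfrac{k}{k-d+1}$ and shows that the algorithmic guarantee of Theorem~\ref{thm:repetitions} cannot be improved against the relaxation \eqref{eq:CP-obj}--\eqref{eq:CP-bounds}.
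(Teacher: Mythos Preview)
Your proposal is correct and follows essentially the same route as the paper: the upper bound is taken directly from Theorem~\ref{thm:repetitions} and a limit $\epsilon\to 0$, and the lower bound comes from a scaled-orthonormal-basis instance whose fractional optimum tends to $1/k$ while the integer optimum tends to $1/(k-d+1)$. The paper's lower-bound construction (Theorem~\ref{thm:IntegralityGapLowerBound}) is the same up to relabeling which coordinate is left unscaled; the only cosmetic difference is that the paper pads the instance with extra copies of each $v_i$ so that the same example also witnesses the gap for the \emph{without}-repetitions relaxation (where the box constraint $x_i\le 1$ would otherwise block your fractional solution with $x_1\approx k$). Since the corollary concerns the with-repetitions relaxation, your $n=d$ instance is already sufficient.
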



We also show that $A$-optimal design is $\NP$-hard for $k=d$ and moreover, hard to approximate
within a constant factor. 

\begin{theorem}\label{thm:hardness}
  There exists a constant $c > 1$ such that the $A$-optimal design
  problem is $\NP$-hard to $c$-approximate when $k=d$.
\end{theorem}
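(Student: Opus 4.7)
The plan is to prove $\NP$-hardness of constant-factor approximation by a polynomial-time gap-preserving reduction from a known APX-hard problem. A natural starting point is a combinatorial problem whose decision version is hard with a built-in constant-factor gap: for example, a gap version of $3$-SAT obtained via the PCP theorem, or an APX-hard combinatorial problem such as $3$-dimensional matching, exact cover by $3$-sets, or vertex cover on bounded-degree graphs. The reduction would encode the combinatorial structure of an instance $\mathcal{I}$ into a family of vectors $v_1,\ldots,v_n \in \R^d$ with $k=d$, so that feasibility of $\mathcal{I}$ corresponds to the existence of a well-conditioned $d$-subset.

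Concretely, I would build gadgets: each ``element'' or ``variable'' of $\mathcal{I}$ contributes one or more coordinate directions of $\R^d$, and each ``set'' or ``constraint'' contributes a small cluster of vectors aligned with those coordinates. The gadgets are designed so that in a YES instance there is a selection $S$ of $d$ vectors for which $V_S$ is nearly orthonormal, yielding $\tr((V_SV_S^\top)^{-1}) \le c_1$ for an explicit constant $c_1$; while in a NO instance, every $d$-subset must contain a cluster of vectors that are either linearly dependent or forced by combinatorial conflict to have a sub-Gram matrix with some eigenvalue at most $\delta<1$, which lifts to $\tr((V_SV_S^\top)^{-1}) \ge c_2$ with $c_2 > c_1$.

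The key steps, in order, are: (1) select the source hardness result and specify how the instance size determines $d$ and $n$; (2) prove the YES-case upper bound by exhibiting, from any combinatorial solution, an explicit well-conditioned basis of $d$ gadget vectors and computing its $A$-objective in closed form; (3) prove the NO-case lower bound by a pigeonhole-style argument showing that in any $d$-subset some gadget must be ``wrongly resolved,'' combined with a spectral bound (e.g., Cauchy interlacing) that converts the resulting local defect into a uniform lower bound on $\tr((V_SV_S^\top)^{-1})$ over all feasible $S$; (4) observe that the reduction runs in polynomial time and preserves a multiplicative gap $c = c_2/c_1 > 1$, so a $c$-approximation algorithm for $A$-optimal design would decide the gap version of the source problem.

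The main obstacle is step (3): turning a local spectral defect in a single gadget into an unconditional lower bound on the full $A$-objective. Bounding the smallest eigenvalue of $V_SV_S^\top$ from above by a constant uniformly over all $d$-subsets $S$ in NO instances is delicate because cross-gadget interactions can in principle compensate for a poorly-chosen cluster. Handling this may require either a careful block structure that decouples gadgets spectrally (so that the Gram matrix approximately splits into gadget-local blocks), or an amplification argument — for instance, taking a tensor power of the reduction — to boost any raw gap into a genuine constant factor $c>1$ as required by the statement.
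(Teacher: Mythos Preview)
Your high-level plan matches the paper's: a gap-preserving reduction from an APX-hard combinatorial problem, with YES instances yielding a well-conditioned basis and NO instances forcing a large trace of inverse. The paper's source problem is a gap version of Partition into Triangles (shown hard via Petrank's inapproximability of 3-Dimensional Matching), and the encoding is simply the vertex-edge incidence matrix of the input graph $G=([d],E)$: each edge $e=(i,j)$ becomes the $\{0,1\}$-vector $v_e = e_i + e_j$, with $n=|E|$ and $k=d$. There are no separate per-element and per-constraint gadgets as you envision; the graph itself is the whole instance.

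The substantive difference is in your step (3), which you correctly flag as the crux. The paper uses neither Cauchy interlacing, nor block-decoupling by design, nor tensor amplification. Instead it gives a direct combinatorial characterization: a $d$-subset $S$ of edges has $V_S$ invertible iff every connected component of $H=([d],S)$ is unicyclic with an odd cycle. It then constructs $V_S^{-1}$ explicitly, one column $u_e$ per edge, and reads off $\tr((V_SV_S^\top)^{-1}) = \sum_{e\in S} \|u_e\|_2^2$ in closed form. The outcome is that this trace equals $3d/4$ when every component is a triangle, and is at least $d - 3t/4$ whenever only $t$ components are triangles; thus the NO-case lower bound is an exact accounting of the Frobenius norm of the inverse, not a bound on a single small eigenvalue.

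This sidesteps both worries you raised: once $S$ is fixed, $V_SV_S^\top$ is genuinely block-diagonal along the connected components of $H$, so there is no cross-gadget interaction to control; and the explicit formula already gives a clean multiplicative gap $(4-\alpha)/3$, so no amplification is needed. Your sketch would plausibly work too, but the paper's route avoids the spectral bookkeeping entirely by choosing an encoding whose inverse can be written down by hand.
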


\paragraph{The $k \leq d$ case.} The $A$-optimal design problem has a
natural extension to choosing fewer than $d$ vectors: our objective in
this case is to select a set $S \subseteq [n]$ of size $k$ so that we
minimize $\sum_{i = 1}^k\lambda_i^{-1}$, where $\lambda_1, \ldots,
\lambda_k$ are the $k$ largest eigenvalues of the matrix
$V_SV_S^\top$. While this problem no longer corresponds to
minimizing the variance in an experimental design setting, we will
abuse terminology and still call it the $A$-optimal design problem. This is
 a natural formulation of the geometric problem of picking a set of vectors which
are as ``spread out'' as possible. If $v_1, \ldots, v_n$ are the points in
a dataset, we can see an optimal solution as a
maximally diverse representative sample of the dataset. Similar
problems, but with a determinant objective, have been widely studied
in computational geometry, linear algebra, and machine learning: for
example the largest volume simplex problem, and the maximum
subdeterminant problem (see~\cite{nikolov2015randomized} for
references to prior work). \cite{ccivril2009selecting} also studied
an analogous problem with the sum in the objective replaced by a
maximum (which extends $E$-optimal design).

While our rounding extends easily to the $k \le d$ regime, coming up
with a convex relaxation becomes less trivial. We do find such a
relaxation and obtain the following result whose proof appears in Section~\ref{sec:smallk}.

\begin{theorem}\label{thm:ksmallStatementAtIntro}
There exists a \(\poly(d,n)\)-time $k$-approximation algorithm for the $A$-optimal design problem when $k \le d$.
\end{theorem}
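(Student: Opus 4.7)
The strategy mirrors the $k\ge d$ case: build a convex relaxation whose optimum lower-bounds $\Opt$, solve it in polynomial time, and round via proportional volume sampling with a hard-core measure on $\U_k$. The starting point is the identity, valid whenever $|S|=k$ and $V_S$ has rank $k$,
$$\tr\bigl((V_S V_S^\top)^\dagger\bigr) = \frac{e_{k-1}(V_S V_S^\top)}{e_k(V_S V_S^\top)} = \frac{\sum_{i\in S}\det(V_{S\setminus i}^\top V_{S\setminus i})}{\det(V_S^\top V_S)},$$
obtained from the fact that a rank-$k$ PSD matrix has exactly $k$ nonzero eigenvalues, so $e_{k-1}/e_k=\sum 1/\lambda_i$.

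The rounding is easy once the relaxation is in hand. For any fractional $x\in[0,1]^n$ with $1^\top x=k$, sample $S\in\U_k$ with probability proportional to $x^S\det(V_S^\top V_S)$; this is proportional volume sampling with hard-core weights $\mu(S)=x^S$. A Cauchy--Binet expansion gives the identities $\sum_{|S|=k}x^S\det(V_S^\top V_S)=e_k(V\diag(x)V^\top)$ and $\sum_{|T|=k-1}x^T\det(V_T^\top V_T)=e_{k-1}(V\diag(x)V^\top)$. Computing the expectation of the objective directly then yields
$$\Expectation{S}{\tr\bigl((V_S V_S^\top)^\dagger\bigr)} = \frac{\sum_{|T|=k-1}x^T\det(V_T^\top V_T)\bigl(k-\sum_{i\in T}x_i\bigr)}{e_k(V\diag(x)V^\top)} \le k\cdot\frac{e_{k-1}(V\diag(x)V^\top)}{e_k(V\diag(x)V^\top)},$$
where the inequality uses $1^\top x=k$ and $\sum_{i\in T}x_i\ge 0$. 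Writing $f(x):=e_{k-1}(V\diag(x)V^\top)/e_k(V\diag(x)V^\top)$, the rounded cost is at most $k\cdot f(x)$.

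It remains to find a polynomial-time-solvable convex program whose optimum equals $\min\{f(x):1^\top x=k,\ 0\le x\le 1\}$. The first identity shows that $f(1_S)=\tr((V_SV_S^\top)^\dagger)$ on any integral vertex, so this minimum lower-bounds $\Opt$ regardless of convexity. The main obstacle is establishing convexity of $f$ (or of a suitable lift of it), since in the $k\le d$ regime the objective is a ratio of elementary symmetric polynomials rather than the manifestly convex trace-inverse used in the $k\ge d$ relaxation. My plan is to invoke hyperbolic polynomial theory: the polynomial $e_k(V\diag(x)V^\top-tI)$ is real-rooted in $t$ for every fixed $x$ (a consequence of repeated Rolle applied to the characteristic polynomial), and $f$ is, up to the constant $1/(d-k+1)$, its directional logarithmic derivative at $t=0$; convexity of this derivative on the hyperbolicity cone then yields convexity of $f$ in $x$. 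An alternative route is an explicit semidefinite lift via Schur complements that captures the rank-$k$ restriction of $V\diag(x)V^\top$. Polynomial-time evaluation and gradient oracles for $f$ follow from the Cauchy--Binet expansion, so the ellipsoid method (or an SDP solver) finds an optimum $x^\ast$. Combining $f(x^\ast)\le\Opt$ with the rounding bound yields expected cost at most $k\cdot\Opt$; derandomization via conditional expectations delivers a deterministic $k$-approximation in $\poly(n,d)$ time.
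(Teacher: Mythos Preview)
Your approach is essentially the same as the paper's: the relaxation $\min\{E_{k-1}(M(x))/E_k(M(x)): 1^\top x=k,\ 0\le x\le 1\}$ followed by proportional volume sampling with hard-core weights $x^S$ on $\U_k$, and your rounding analysis (the inequality yielding the factor $k$) is exactly the paper's $(k-1,k)$-wise independence computation.

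The one point of divergence is the step you flag as the ``main obstacle'': convexity of $f(x)=E_{k-1}(M(x))/E_k(M(x))$. You sketch a hyperbolic-polynomial argument and an SDP lift without completing either. The paper dispatches this in one line by citing the classical concavity of $M\mapsto E_k(M)/E_{k-1}(M)$ on PSD matrices of rank at least $k$ (Bullen--Marcus 1961, or Marcus--Lopes 1957 combined with Davis's theorem on symmetric concave spectral functions); since $1/z$ is convex and decreasing on positive reals, $f(x)=1/(E_k/E_{k-1})$ is convex. Your hyperbolic route can be made to work---$E_k$ is hyperbolic in direction $I$, and $E_{k-1}/E_k$ is the normalized directional derivative $\frac{1}{d-k+1}\,D_I E_k/E_k$, which on the hyperbolicity cone equals a sum of reciprocals of concave root functionals and is therefore convex---but this is a longer path to the same classical fact. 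There is no genuine gap in your proposal, only an unnecessarily indirect treatment of a step the literature already provides.
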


\paragraph{General Objectives.} Experimental design problems come with many different objectives including $A$, $D$, $E$, $G$, $T$, $V$, each corresponding to a different function of the covariance matrix of the error $w-\hat{w}$. We note that any algorithm that solves \(A\)-optimal objective can solve \(T\)-optimal objective by prepossessing vectors with a linear transformation. In addition, we show that the  proportional volume sampling algorithm gives approximation algorithms for other optimal design objectives (such as $D$-optimal design~\cite{singh2018approximate} and generalized ratio objective~\cite{mariet2017elementary}) matching or improving previous best known results. We refer the reader to Section~\ref{sec:general} for details. 

\paragraph{Integrality Gap and $E$-optimality.} Given the results mentioned above, a natural question is whether all objectives for optimal design behave similarly in terms of approximation algorithms. Indeed, recent results of \cite{allen2017near,Allen-Zhu17nearoptimal} and \cite{wang2016computationally} give the  $(1+\epsilon)$-approximation algorithm in the asymptotic regime, $k\geq \Omega\left(\frac{d}{\epsilon^2}\right)$ and $k\geq \Omega\left(\frac{d^2}{\epsilon}\right)$, for many of these variants. In contrast, we show the \emph{optimal bounds} that can be obtained via the standard convex relaxation are different for different objectives.  We
show that for the $E$-optimality criterion (in which we minimize the
largest eigenvalue of the covariance matrix) getting a $(1+\epsilon)$-approximation with the
natural convex relaxation requires $k = \Omega(\frac{d}{\epsilon^2})$,
both with and without repetitions. This is in sharp contrast to results we obtain here for $A, D$-optimality and other generalized ratio objectives. Thus, different criteria behave
differently in terms of approximability. Our proof of the integrality gap (in
Section~\ref{sec:alon-boppana}) builds on a connection to spectral
graph theory and in particular on the Alon-Boppana
bound~(\cite{Alon86,Nilli91}). We prove an Alon-Boppana
style bound for the unnormalized Laplacian of not necessarily regular
graphs with a given average degree.

\paragraph{Restricted Invertibility Principle for Harmonic Mean.}
As an application of Theorem~\ref{thm:ksmallStatementAtIntro}, we
prove a restricted invertibility principle (RIP)~(\cite{bour-tza}) for
the harmonic mean of singular values. The RIP is a robust version of
the elementary fact in linear algebra that if $V$ is a $d \times n$
rank $r$ matrix, then it has an invertible submatrix $V_S$ for some
$S\subseteq [n]$ of size $r$. The RIP shows that if $V$ has stable
rank $r$, then it has a well-invertible submatrix consisting of
$\Omega(r)$ columns. Here the stable rank of $V$ is the ratio
$(\|V\|_{HS}^2/\|V\|^2)$, where $\|\cdot\|_{HS} = \sqrt{\tr(VV^\top)}$
is the Hilbert-Schmidt, or Frobenius, norm of $V$, and $\|\cdot\|$ is
the operator norm. The classical restricted invertibility
principle~(\cite{bour-tza,vershynin,bt-constructive}) shows that when
the stable rank of $V$ is $r$, then there exists a subset of its
columns $S$ of size $k = \Omega(r)$ so that the $k$-th singular value
of $V_S$ is
$\Omega\left({\|V\|_{HS}}/{\sqrt{m}}\right)$.~\cite{nikolov2015randomized}
showed there exists a submatrix $V_S$ of $k$ columns of $V$ so that
the geometric mean its top $k$ singular values is on the same order,
even when $k$ {equals} the stable rank.  We show an analogous result
for the harmonic mean when $k$ is slightly less than $r$. While this
is implied by the classical restricted invertibility principle, the
dependence on parameters is better in our result for the harmonic
mean. For example, when $k = (1-\epsilon)r$, the harmonic mean of
squared singular values of $V_S$ can be made at least
$\Omega\left(\epsilon {\|V\|^2_{HS}}/{{m}}\right)$, while the tight
restricted invertibility principle of Spielman and
Srivastava~(\cite{spielman2011graph}) would only give $\epsilon^2$ in
the place of $\epsilon$. This restricted invertibility principle can
also be derived from the results of~\cite{Naor17}, but their
arguments, unlike ours, do not give an efficient algorithm to compute
the submatrix $V_S$. See Section~\ref{sect:rip} for the precise
formulation of our restricted invertibility principle.

\paragraph{Computational Issues.}
While it is not clear whether sampling from proportional volume sampling is possible under general assumptions (for example given a sampling oracle for $\mu$), we obtain an efficient sampling algorithm when $\mu$ is a hard-core distribution. 
\begin{lemma}\label{lem:sample}
There exists a \(\poly(d,n)\)-time algorithm that, given a matrix $d\times n$ matrix  $V$, integer $k\leq n$, and a hard-core distribution $\mu$ on sets in $\U_k$ (or $\U_{\leq k}$) with parameter $\lambda$, efficiently samples a set from the proportional volume measure defined by $\mu$.

\end{lemma}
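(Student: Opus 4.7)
The plan is to sample $S$ via sequential conditional sampling: iterate over the indices $i = 1, \dots, n$, and at each step include $i$ in $S$ with its conditional inclusion probability under the target distribution. This reduces the task to computing, for any disjoint $A, B \subseteq [n]$, the conditional partition function
$$Z(A, B) := \sum_{\substack{S \in \U_k \\ A \subseteq S,\, S \cap B = \emptyset}} \lambda^S \det(V_S V_S^\top),$$
since every conditional marginal needed is a ratio of two such quantities. I describe the $\U_k$ case with $k \ge d$ (for $k<d$ the distribution is identically zero); the $\U_{\leq k}$ case reduces to this by first sampling the size $s \in \{d, \dots, k\}$ in proportion to $\sum_{|S|=s} \lambda^S \det(V_S V_S^\top)$, then invoking the $\U_s$ sampler.

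The key step is a Cauchy--Binet generating function identity. Expanding $\det(V_S V_S^\top) = \sum_{T \subseteq S,\, |T|=d} \det(V_T)^2$ and using $e_{k-d}(\lambda_{[n] \setminus T}) = [y^{k-d}] \prod_{j \notin T}(1+y\lambda_j)$, interchanging summations and applying Cauchy--Binet a second time to the inner determinantal sum yields
$$Z(\emptyset,\emptyset) = [y^{k-d}]\, \prod_{j=1}^n (1+y\lambda_j)\, \det\!\Big(V \diag\!\Big(\tfrac{\lambda_j}{1+y\lambda_j}\Big) V^\top\Big).$$
Clearing the factor $\prod_j(1+y\lambda_j)^{d-1}$ from the denominator shows the bracketed expression is a polynomial in $y$ of degree at most $n-d$, which can be recovered by Lagrange interpolation from its values at $n-d+1$ real points; each evaluation is one $d \times d$ determinant plus $O(n)$ scalar operations, so the whole computation runs in polynomial time.

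The general $Z(A,B)$ fits the same template after some bookkeeping. First, restrict attention to $U := [n] \setminus (A \cup B)$ and write $\det(V_{A \cup T'})^2 = \det(V_A^\top V_A) \det(W_{T'}^\top W_{T'})$ via the Schur complement, where $W$ is the projection of $V$ onto $\mathrm{span}(V_A)^\perp$. Second, expand $e_{k-d}(\lambda_{[n] \setminus (A \cup T')}) = \sum_{s+t=k-d} e_s(\lambda|_B) \cdot e_t(\lambda|_{U \setminus T'})$, leaving a sum of at most $k-d+1$ quantities of the same form as $Z(\emptyset,\emptyset)$, but on the reduced matrix $W|_U$ and parameter $\lambda|_U$, with $d$ replaced by $d - |A|$. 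The main technical obstacle is handling this restriction-and-projection bookkeeping cleanly, so that each conditional marginal is a ratio of polynomially computable partition functions; once that is set up, Lagrange interpolation together with sequential sampling yields the desired efficient sampler.
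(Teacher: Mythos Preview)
Your high-level plan---sequential conditional sampling with each step requiring the computation of a conditional partition function $Z(A,B)$---is exactly the strategy the paper uses. Your identity for $Z(\emptyset,\emptyset)$ is correct and is essentially a one-variable specialization of the paper's polynomial identity.

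The gap is in your reduction of $Z(A,B)$ to the unconditioned case via Schur complement. After expanding $\det(V_S V_S^\top)=\sum_{T\subseteq S,\,|T|=d}\det(V_T)^2$, the inner sum runs over \emph{all} size-$d$ subsets $T$ of $S$, not only those containing $A$. Writing $T=A\cup T'$ with $T'\subseteq U$ drops every term with $T\not\supseteq A$. Concretely, with $d=2$, $k=3$, $A=\{1\}$, $B=\emptyset$, your formula yields only $\det(v_1,v_2)^2+\det(v_1,v_3)^2$ and omits $\det(v_2,v_3)^2$. So the projection/Schur-complement bookkeeping cannot be made to work as stated, because the missing terms are not a subproblem on $W|_U$ at all. (The expansion $e_{k-d}(\lambda_{[n]\setminus(A\cup T')})=\sum_{s+t=k-d}e_s(\lambda|_B)e_t(\lambda|_{U\setminus T'})$ is also off: since $S\cap B=\emptyset$, indices in $B$ never appear in $S\setminus T$, so the correct factor is simply $e_{k-d}(\lambda|_{U\setminus T'})$.)

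The paper avoids this by encoding the inclusion constraint with an extra indeterminate rather than by restricting $T$. It replaces $\lambda_i$ by $t_3\lambda_i$ for $i\in A$ (and by $0$ for $i\in B$), forms a single determinantal polynomial in $(t_1,t_2,t_3)$, and reads off $Z(A,B)$ as the coefficient of $t_1^k t_2^{d} t_3^{|A|}$. The point is that $t_3$ marks $|S\cap A|$ (not $|T\cap A|$), so extracting $t_3^{|A|}$ enforces $A\subseteq S$ without any assumption on $T$. Your one-variable identity extends the same way: set $\lambda_j\leftarrow z\lambda_j$ for $j\in A$, $\lambda_j\leftarrow 0$ for $j\in B$, and take the $z^{|A|}$ coefficient of your $[y^{k-d}]$ expression; interpolation in the pair $(y,z)$ then gives a polynomial-time evaluation of $Z(A,B)$.
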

When $k\leq d$ and $\mu$ is a hard-core distribution, the proportional volume sampling can be implemented by the standard volume sampling after scaling the vectors appropriately. When $k>d$, such a method does not suffice and we appeal to properties of hard-core distributions to obtain the result.  We also present an efficient implementation of Theorem~\ref{thm:repetitions} which runs in time polynomial in \(\log(1/\epsilon)\). This requires more work since the basic description of the algorithm involves implementing proportional volume sampling on an exponentially-sized ground set. This is done in Section~\ref{sec:effreplace}.

We also outline efficient deterministic implementation of algorithms in Theorem~\ref{thm:k=dAndasymptotic} and \ref{thm:repetitions} in Section~\ref{sec:deterministic} and \ref{Sec:deterministicAlgwithRep}.

\subsection{Related Work}

Experimental design is the problem of maximizing information obtained from selecting subsets of experiments to perform, which is equivalent to minimizing the covariance matrix $\left(\sum_{i\in S} v_i v_i^\top\right)^{-1}$. We focus on \(A\)-optimality, one of the criteria that has been studied intensely. We restrict our attention to approximation algorithms for these problems and refer the reader to~\cite{pukelsheim2006optimal} for a broad survey on experimental design.

\cite{avron2013faster} studied the $A$- and $E$-optimal design problems and analyzed various combinatorial algorithms and algorithms based on volume sampling, and achieved approximation ratio $\frac{n-d+1}{k-d+1}$. 
\cite{wang2016computationally} found connections between optimal design and matrix sparsification, and used these connections to obtain a $(1+\epsilon)$-approximation when $k\geq \frac{d^2}{\epsilon}$, and also approximation algorithms under certain technical assumptions.
More recently, \cite{allen2017near,Allen-Zhu17nearoptimal} obtained a  $(1+\epsilon)$-approximation when $k=\Omega\left(\frac{d}{\epsilon^2}\right)$ both with and without repetitions.
We remark that their result also applies to other criteria such as $E$ and $D$-optimality that aim to maximize the minimum eigenvalue, and the geometric mean of the eigenvalues of $\sum_{i\in S} v_i v_i^\top$, respectively. More generally, their result applies to any objective function that satisfies certain regularity criteria. 



Improved bounds for $D$-optimality were obtained by \cite{singh2018approximate} who give an ${e}$-approximation for all $k$ and $d$, and $(1+\epsilon)$-approximation algorithm when $k=\Omega(\frac{d}{\epsilon}+\frac{1}{\epsilon^2}\log\frac{1}{\epsilon})$, with a weaker condition of $k\geq\frac{2d}{\epsilon}$ if repetitions are allowed. The $D$-optimality criterion when $k\leq d$ has also been extensively studied. It captures maximum a-posteriori inference in constrained determinantal point process models~(\cite{kulesza2012determinantal}), and also the maximum volume simplex problem. \cite{nikolov2015randomized}, improving on a long line of work, gave a $e$-approximation. The problem has also been studied under more general matroid constraints rather than cardinality constraints~(\cite{nikolov2016maximizing,AnariG17,StraszakV17}).

\cite{ccivril2009selecting} also studied several related problems in the $k \le d$ regime, including $D$- and $E$-optimality. We are not aware of any prior work on $A$-optimality in this regime.

The criterion of $E$-optimality, whose objective is to maximize the minimum eigenvalue of $\sum_{i\in S} v_i v_i^\top$, is closely related to the problem of matrix sparsification~(\cite{batson2012twice,spielman2011graph}) but incomparable. 
 In matrix sparsification, we are allowed to weigh the selected vectors, but need to bound both the largest and the smallest eigenvalue of the matrix we output.

 The restricted invertibility principle was first proved in the work of~\cite{bour-tza}, and was later strengthened by~\cite{vershynin}, \cite{bt-constructive}, and \cite{Naor17}. Spielman and Srivastava gave a deterministic algorithm to find the well-invertible submatrix whose existence is guaranteed by the theorem. 
 Besides its numerous applications in geometry (see~\cite{vershynin} and \cite{Youssef14}), the principle has also found applications to differential privacy~(\cite{NTZ}), and to approximation algorithms for discrepancy~(\cite{apx-disc}).

Volume sampling where a set $S$ is sampled with probability proportional to $\det(V_SV_S^\top)$ has been studied extensively and efficient algorithms were given by \cite{DeshpandeR10} and improved by~\cite{guruswami2012optimal}. The probability distribution is also called a \textit{determinantal point process} (DPP) and finds many applications in machine learning~(\cite{kulesza2012determinantal}). Recently, fast algorithms for volume sampling have been considered in \cite{derezinski2017subsampling,derezinski2017unbiased}.

While $\mathsf{NP}$-hardness is known for the $D$- and $E$-optimality criteria~(\cite{ccivril2009selecting}), to the best of our knowledge no $\mathsf{NP}$-hardness for $A$-optimality was known prior to our work. Proving such a hardness result was stated as an open problem in~\cite{avron2013faster}.



%


\section{Approximation via Near Independent Distributions}\label{sec:approx-independent}

In this section, we prove Theorem~\ref{thm:nearind-to-Aopt} and give
an $\alpha$-approximation algorithm for the $A$-optimal design problem
given an $\alpha$-approximate $(d-1,d)$-independent distribution $\mu$.

We first consider the convex relaxation for the problem given below
for the settings without and with repetitions. This relaxation is
classical, and already appears in, e.g.~\cite{chernoff1952measure}. It is easy to see that the objective $\tr \left(\sum_{i = 1}^n x_i v_i  v_i^\top\right)^{-1}$ is convex~(\cite{boyd2004convex}, section 7.5). For this section, we focus on the case when repetitions are not allowed.

\begin{center}
\begin{tabular}[h]{|p{6.5cm}|p{6.5cm}|} \hline 
\vspace{0.5em}\qquad\qquad \qquad \textbf{With Repetitions}{\begin{align*}
  \qquad &\min \tr \left(\sum_{i = 1}^n x_i v_i  v_i^\top\right)^{-1}\\
  & \text{s.t.} \qquad \sum_{i = 1}^n{x_i} = k\\
  &\qquad0 \le x_i   \  \ \ \ \forall i \in [n]
\end{align*}} & \vspace{0.5em}\qquad\qquad\quad \textbf{Without Repetitions}{\begin{align}
  \qquad &\min \tr \left(\sum_{i = 1}^n x_i v_i  v_i^\top\right)^{-1}\label{eq:CP-obj}\\
  & \text{s.t.} \qquad \sum_{i = 1}^n{x_i} = k\\
  &\qquad0 \le x_i \le 1  \  \ \ \ \forall i \in [n]\label{eq:CP-bounds}
\end{align}} \\ \hline
\end{tabular}
\end{center}

\cut{
}
Let us denote the optimal value of
\eqref{eq:CP-obj}--\eqref{eq:CP-bounds} by \CP. By plugging in the indicator vector of an optimal integral solution for $x$, we see that $\CP \le \OPT$, where $\OPT$ denotes the value of the optimal solution.

\subsection{Approximately Independent Distributions}
Let us use the notation $x^S = \prod_{i \in S}x_i$, $V_S$ a matrix of column vectors $v_i\in \R^d$ for $i\in S$, and $V_S(x)$ a matrix of column vectors $\sqrt{x_i} v_i\in \R^d$ for $i\in S$. Let $e_k(x_1, \ldots, x_n)$ be the degree $k$
elementary symmetric polynomial in the variables $x_1, \ldots,
x_n$, i.e.
$e_k(x_1, \ldots, x_n) = \sum_{S \in \U_k} x^S$.
By convention, $e_0(x) = 1$ for any $x$. For any positive
semidefinite $n\times n$ matrix $M$, we define $E_k(M)$ to be
$e_k(\lambda_1, \ldots, \lambda_n)$, where $\lambda(M)=(\lambda_1, \ldots,
\lambda_n)$ is the vector of eigenvalues of $M$. Notice that $E_1(M) =
\tr(M)$ and $E_n(M) = \det(M)$.


To prove Theorem~\ref{thm:nearind-to-Aopt}, we give
Algorithm \ref{alg:prop-vol-samping} which is a general framework to sample $S$ to
solve the \(A\)-optimal design problem.

\begin{algorithm}
\caption{The proportional volume sampling algorithm }\label{alg:prop-vol-samping}
\begin{algorithmic}[1]
\State Given an input $V=[v_1,\ldots,v_n]$ where $v_i\in \R^d$, $k$ a positive integer, and measure  $\mu$ on sets in $\U_k$ (or $\U_{\leq k}$).
\State Solve convex relaxation \CP \ to get a fractional solution $x\in \R_+^n$ with $\sum_{i=1}^n x_i = k$.
\State Sample set $\SS$ (from $\altmathcal{U}_{\leq k}$ or $\altmathcal{U}_k$) where $\Prob{}{\SS=S}\propto \mu(S)\det(V_SV_S^\top)$ for any $S\in \U_k$ (or $\U_{\leq k}$). \Comment{$\mu(S)$ may be defined using the solution \(x\)}
\State Output $\SS$ (If $|\SS| < k$, add $k - |\SS|$ arbitrary vectors to $\SS$ first).
\end{algorithmic}
\end{algorithm}


We first prove the following lemma which is needed for proving Theorem~\ref{thm:nearind-to-Aopt}.\begin{lemma} \label{lemma:LinearlityofDeterminant}
\label{lemma:linearofDetInX}Let $T\subseteq [n]$ be of size no more than $d$. Then \[\det(V_T(x)^\top V_T(x)) = x^T \det(V_T^\top V_T) \]
\end{lemma}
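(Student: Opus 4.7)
The plan is to factor the scaling out of $V_T(x)$ as a diagonal matrix and then use multiplicativity of the determinant. Specifically, let $D$ be the $|T| \times |T|$ diagonal matrix whose diagonal entries are $\sqrt{x_i}$ for $i \in T$ (in the same ordering of columns as $V_T$). Then directly from the definition of $V_T(x)$ (whose $i$-th column is $\sqrt{x_i} v_i$), we have $V_T(x) = V_T D$.

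Next I would compute
\[
V_T(x)^\top V_T(x) \;=\; (V_T D)^\top (V_T D) \;=\; D^\top V_T^\top V_T D \;=\; D \, V_T^\top V_T \, D,
\]
using that $D$ is diagonal and hence symmetric. Since the matrix $V_T^\top V_T$ is of size $|T|\times|T|$, and so is $D$, the determinant of the product factors as
\[
\det\!\bigl(D \, V_T^\top V_T \, D\bigr) \;=\; \det(D)^2 \, \det(V_T^\top V_T) \;=\; \Bigl(\prod_{i \in T} \sqrt{x_i}\Bigr)^{\!2} \det(V_T^\top V_T) \;=\; x^T \det(V_T^\top V_T),
\]
which is the desired identity.

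There is essentially no obstacle here; the only subtle point is simply confirming that the factorization $V_T(x) = V_T D$ is literally a matrix equation of the right shape and that $D$ is square so that $\det(D)$ is well-defined. The hypothesis $|T| \le d$ is not actually needed for the algebraic identity itself (both sides vanish when $|T| > d$ anyway, since $V_T^\top V_T$ is then rank-deficient), but it matches the regime in which the lemma will be applied later.
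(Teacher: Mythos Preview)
Your proof is correct and is essentially the same as the paper's: the paper computes the $(i,j)$ entry of $V_T(x)^\top V_T(x)$ as $\sqrt{x_i x_j}\, v_i\cdot v_j$ and then pulls $\sqrt{x_i}$ out of each row and $\sqrt{x_j}$ out of each column via multilinearity of the determinant, which is exactly your factorization $V_T(x)^\top V_T(x) = D\, V_T^\top V_T\, D$ followed by multiplicativity of $\det$. Your presentation is arguably a touch cleaner, and your remark that the hypothesis $|T|\le d$ is not needed for the identity itself is correct.
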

\begin{proof}
The statement is true  by multilinearity of the determinant and the exact formula for $V_T(x)^\top V_T(x)$ as follows. The matrix  $V_T(x)^\top V_T(x)$ has \((i,j)\) entry
\[\left(V_T(x)^\top V_T(x)\right)_{i,j} = \sqrt{x_i}v_i \cdot \sqrt{x_j}v_j = \sqrt{x_ix_j}v_i\cdot v_j\]
for each pair $i,j\in [|T|]$. By the multilinearity of the
determinant, we can take the factor $\sqrt{x_i}$ out from each row $i$
of  $V_T(x)^\top V_T(x)$ and the factor $\sqrt{x_j}$ out from each column $j$ of $V_T(x)^\top V_T(x)$. This gives
\[\det(V_T(x)^\top V_T(x)) = \prod_{i \in [|T|]}\sqrt{x_i} \prod_{j \in [|T|]}\sqrt{x_j} \det(V_T^\top V_T) = x^T\det(V_T^\top V_T)  \]
\end{proof}
We also need the following identity, which is well-known and extends
the Cauchy-Binet formula  for the determinant to the functions $E_k$.
\begin{equation}
  \label{eq:cauchy-binet}
  E_k(VV^\top) = E_k(V^\top V)= \sum_{S \in \U_k}{\det(V_S^\top V_S)}.
\end{equation}
The identity \eqref{eq:cauchy-binet} appeared in  \cite{mariet2017elementary} and, specifically for \(k=d-1,\)  as Lemma 3.8 in \cite{avron2013faster}. \cut{Let us sketch a proof of \eqref{eq:cauchy-binet} for general \(k\) for
completeness. Since the nonzero eigenvalues of $VV^\top$ and $V^\top
V$ are the same, we have $E_k(VV^\top) = E_k(V^\top V)$. To show the
second equality, we can consider the univariate polynomial $p(z) =
\det(zI + V^\top V)$. On one hand, expanding $p(z)$ using the Leibniz
formula for the determinant shows that the coefficient of $z^{n-k}$ in
$p(z)$ is equal to the right hand side of \eqref{eq:cauchy-binet}. On
the other hand, if $\lambda = \lambda(V^\top V)$ is the vector of
eigenvalues of $V^\top V$, then we have $p(z) = \prod_{i = 1}^n{(z +
  \lambda_i)}$, and it is easy to see that the coefficient of $z^{n-k}$
in $p(z)$ is $e_k(\lambda) = E_k(V^\top V)$, as required.
} Now we are ready to prove Theorem~\ref{thm:nearind-to-Aopt}.

\begin{proofof}{Theorem~\ref{thm:nearind-to-Aopt}}
Let \(\mu'\) denote the sampling distribution over \(\U\), where $\U=\altmathcal{U}_k$ or $\U_{\leq k}$, with probability of sampling \(S\in\U\) proportional to $\mu(S)\det(V_SV_S^\top)$. Because $\tr \left( \sum_{i\in [n]} x_i v_i v_i^\top \right)^{-1} = \CP\leq\OPT$, it is enough to show that
\begin{equation}
\Expectation{\SS\sim \mu'}{\tr \left( \sum_{i\in \SS} v_i v_i^\top \right)^{-1}}
\leq \alpha \tr \left( \sum_{i\in [n]} x_i v_i v_i^\top
\right)^{-1}.  \label{eq:wise-ind-give-approx-alpha-bound}
\end{equation}  
Note that in case $|\SS| < k$, algorithm $\mathcal{A}$ adds $k -
|\SS|$ arbitrary vector to $\SS$, which can only decrease the
objective value of the solution.

First, a simple but important observation (\cite{avron2013faster}): for any $d\times d$ matrix $M$ of rank $d$,
we have
\begin{equation}\label{eq:keyobsv}
\tr M^{-1} = \sum_{i = 1}^d \frac{1}{\lambda_i(M)} =
\frac{e_{d-1}(\lambda(M))}{e_d(\lambda(M))} = \frac{E_{d-1}(M)}{\det M}.
\end{equation}
 Therefore, we have
 \begin{align*}
   \Expectation{\SS\sim \mu'}{\tr \left( \sum_{i\in \SS} v_i v_i^\top \right)^{-1}} &= \sum_{S\in \U} \Prob{\mu'}{\SS=S}\tr \left( V_S V_S^\top \right)^{-1} \\
   &=  \sum_{S\in \U} \frac{\mu(S) \det \left( V_S V_S^\top \right) } {\sum_{S'\in \U} \mu(S') \det(V_{S'} V_{S'}^\top)}\frac{E_{d-1}(V_SV_S^\top)}{\det\left(V_S V_S^\top\right)} \\
   &= \frac{\sum_{S\in \U} \mu(S) E_{d-1}(V_SV_S^\top)} {\sum_{S\in   \U} \mu(S) \det(V_S V_S^\top)}.
 \end{align*}
 We can now apply the Cauchy-Binet formula \eqref{eq:cauchy-binet} for  $E_{d-1}$, $E_d = \det$, and the matrix $V_SV_S^\top$ to the
 numerator and denominator on the right hand side, and we get
 \begin{align*}
   \Expectation{\SS\sim \mu'}{\tr \left( \sum_{i\in \SS} v_i v_i^\top \right)^{-1}}
   &= \frac{\sum_{S\in \U} \sum_{|T|=d-1,T\subseteq S} \mu(S) \det(V_T^\top V_T)} {\sum_{S\in \U} \mu(S) \sum_{|R|=d,R \subseteq S} \det(V_R^\top V_R) } \\
    &= \frac{\sum_{|T|=d-1,T\subseteq[n]} \det \left( V_T^\top V_T \right) \sum_{S\in\U,S \supseteq T} \mu(S)  }     { \sum_{|R|=d,R\subseteq[n]} \det \left( V_R^\top V_R \right) \sum_{S\in\U,S \supseteq R} \mu(S) } \\
    &=  \frac{\sum_{|T|=d-1,T\subseteq[n]} \det \left( V_T^\top V_T \right) \Prob{\mu}{\SS\supseteq T}  }     { \sum_{|R|=d,R\subseteq[n]} \det \left( V_R^\top V_R \right)\Prob{\mu}{\SS\supseteq R}  }
 \end{align*}
  where we change the order of summation at the second to last
  equality. Next, we apply \eqref{eq:keyobsv} and the Cauchy-Binet formula
  \eqref{eq:cauchy-binet} in a similar way to the matrix $V(x) V(x)^\top$:
\begin{align*}
        \tr \left( V(x) V(x)^\top \right)^{-1}  = \frac{E_{d-1}(V(x)V(x)^\top)}{\det(V(x)V(x)^\top)}
        &= \frac{\sum_{|T|=d-1,T\subseteq[n]} \det(V_T(x)^\top V_T(x))}{\sum_{|R|=d,R\subseteq[n]}\det(V_R(x)^\top V_R(x))} \\
    &= \frac{\sum_{|T|=d-1,T\subseteq[n]} \det \left( V_T^\top V_T \right) x^T}     { \sum_{|R|=d,R\subseteq[n]} \det \left( V_R^\top V_R  \right) x^R }
 \end{align*}
 where we use the fact that $\det(V_R(x)^\top V_R(x)) = x^R \det(V_R^\top  V_R)$ and $\det(V_T(x)^\top V_T(x)) = x^T \det(V_T^\top V_T)    $ in the last equality by Lemma \ref{lemma:linearofDetInX}.

Hence, the inequality \eqref{eq:wise-ind-give-approx-alpha-bound} which we want to show is equivalent to
 \begin{align}
   \frac{\sum_{|T|=d-1,T\subseteq[n]} \det \left( V_T^\top V_T \right) \Prob{\mu}{\SS\supseteq T}  }     { \sum_{|R|=d,R\subseteq[n]} \det \left( V_R ^\top V_R \right)\Prob{\mu}{\SS\supseteq R}  } \leq\alpha
   \frac{\sum_{|T|=d-1,T\subseteq [n]} \det \left( V_T^\top V_T \right) x^T}     { \sum_{|R|=d,R\subseteq[n]} \det \left( V_R^\top V_R  \right) x^R }\label{eq:CauchyBinetOnObjOfRelaxation}
 \end{align}
  which is equivalent to
\begin{eqnarray}
 &&\sum_{|T|=d-1,|R|=d} \det \left( V_T^\top V_T \right) \det \left( V_R^\top V_R  \right)\cdot x^R\cdot \Prob{\mu}{\SS\supseteq T}\nonumber \\
 && \leq\alpha\sum_{|T|=d-1,|R|=d} \det \left( V_T^\top V_T \right) \det \left( V_R^\top V_R  \right)\cdot x^T\cdot\Prob{\mu}{\SS\supseteq R}. \label{eq:thm-near-ind-expand-sum-last}
\end{eqnarray}
By the assumption that $ \frac{\Prob{\mu}{\SS\supseteq T}}{\Prob{\mu}{\SS\supseteq R}} \leq \alpha\frac{x^{T}}{x^{R}} $ for each subset $T,R\subseteq[n]$ with $|T|=d-1$ and $|R|=d$, 
\begin{equation}\det \left( V_T^\top V_T \right) \det \left( V_R^\top V_R  \right)\cdot  x^R\cdot \Prob{\mu}{\SS\supseteq T}\leq\alpha \det \left( V_T^\top V_T \right) \det \left( V_R^\top V_R  \right)\cdot x^T\cdot \Prob{\mu}{\SS\supseteq R} \label{eq:thm-near-ind-expand-last}
\end{equation}
Summing \eqref{eq:thm-near-ind-expand-last} over all $T,R$ proves \eqref{eq:thm-near-ind-expand-sum-last}.
\end{proofof}

\section{Approximating Optimal Design without Repetitions}\label{sec:without-rep}

In this section, we prove Theorem~\ref{thm:k=dAndasymptotic} by
constructing $\alpha$-approximate $(d-1,d)$-independent distributions
for appropriate values of $\alpha$. We first consider the case when
$k=d$ and then the asymptotic case when
$k=\Omega\left(\frac{d}{\epsilon}+\frac{1}{\epsilon^2}\log
  \frac{1}{\epsilon}\right)$. We also remark that the argument for
$k=d$ can be generalized for all $k\leq d$, and we discuss this generalization in Section~\ref{sec:smallk}.
\subsection{$d$-approximation for $k=d$}
We prove the following lemma which, together with
Theorem~\ref{thm:nearind-to-Aopt}, implies the $d$-approximation for
$A$-optimal design when $k=d$.
\begin{lemma} \label{lem:k=dresult}
Let $k=d$. The hard-core distribution $\mu$ on $\U_k$ with parameter $x$ is $d$-approximate $(d-1,d)$-independent.
\end{lemma}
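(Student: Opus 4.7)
The plan is to compute the two probabilities that appear in the definition of approximate $(d-1,d)$-independence directly, exploiting the fact that when $k=d$ the ``size-$k$ sets containing $R$'' and ``size-$k$ sets containing $T$'' are particularly simple objects. Write $Z = \sum_{S\in\U_d} x^S = e_d(x)$ for the normalizing constant of the hard-core distribution $\mu$ on $\U_k = \U_d$.

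For $R$ with $|R|=d=k$, the only $\SS\in\U_d$ with $\SS\supseteq R$ is $\SS=R$ itself, so
\[
\Pr_{\SS\sim\mu}[R\subseteq \SS] \;=\; \mu(R) \;=\; \frac{x^R}{Z}.
\]
For $T$ with $|T|=d-1$, every $\SS\in\U_d$ with $\SS\supseteq T$ has the form $\SS = T\cup\{j\}$ for a unique $j\in[n]\setminus T$, so
\[
\Pr_{\SS\sim\mu}[T\subseteq \SS] \;=\; \frac{1}{Z}\sum_{j\notin T} x^{T\cup\{j\}} \;=\; \frac{x^T}{Z}\sum_{j\notin T} x_j.
\]
Dividing the two,
\[
\frac{\Pr_{\SS\sim\mu}[T\subseteq \SS]}{\Pr_{\SS\sim\mu}[R\subseteq \SS]} \;=\; \Bigl(\sum_{j\notin T} x_j\Bigr)\cdot\frac{x^T}{x^R}.
\]

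The key step is then to use the feasibility constraint $\mathbf{1}^\top x = k = d$ from the convex relaxation \eqref{eq:CP-obj}--\eqref{eq:CP-bounds}: since $x\ge 0$, we have $\sum_{j\notin T} x_j \le \sum_{i=1}^n x_i = d$. Substituting gives the desired bound
\[
\frac{\Pr_{\SS\sim\mu}[T\subseteq \SS]}{\Pr_{\SS\sim\mu}[R\subseteq \SS]} \;\le\; d\cdot \frac{x^T}{x^R},
\]
which is exactly the $d$-approximate $(d-1,d)$-independence of $\mu$.

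There is no real obstacle here: the $k=d$ case collapses both ``containment'' events to essentially one-step enumerations, and the approximation factor $d$ emerges precisely from the scaling $\sum_i x_i = k = d$ of the relaxation. Combined with Theorem~\ref{thm:nearind-to-Aopt}, this immediately yields the promised $d$-approximation algorithm for $A$-optimal design when $k=d$.
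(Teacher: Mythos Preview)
Your proof is correct and follows essentially the same approach as the paper: compute $\Pr[\SS\supseteq R]=x^R/Z$ and $\Pr[\SS\supseteq T]=(x^T/Z)\sum_{j\notin T}x_j$ directly, then bound the ratio using $\sum_{j\notin T}x_j\le\sum_i x_i=k=d$. There is no meaningful difference between the two arguments.
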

\begin{proof}
Observe that for any $S\in \U_k$, we have $\mu(S)= \frac{ x^S}{Z}$
where $Z=\sum_{S'\in \U_k} x^{S'}$ is the normalization factor. For any $T\subseteq [n]$ such that $|T|=d-1$, we have
$$\Prob{\SS\sim \mu}{\SS \supseteq T}= \sum_{S\in \U_k: S\supseteq T} \frac{x^S}{Z}= \frac{x^T}{Z}\cdot \left(\sum_{i\in [n]\setminus T} x_i\right) \leq d \frac{x^T}{Z}.$$
where we use  $k=d$ and $\sum_{i\in [n]\setminus T} x_i \leq k=d$.
For any $R\subseteq [n]$ such that $|R|=d$, we have
\[\Prob{\SS\sim \mu}{\SS \supseteq R}= \sum_{S\in \U_k: S\supseteq R} \frac{x^S}{Z}= \frac{x^R}{Z}.\]
Thus for any $T, R\subseteq [n]$ such that $|T|=d-1$ and $|R|=d$, we have
\[\frac{\Prob{\SS\sim \mu}{\SS\supseteq T}}{\Prob{\SS\sim \mu}{\SS\supseteq R}} \leq d \frac{x^T}{x^R}.\]
\end{proof}

\subsection{$(1+\epsilon)$-approximation}
Now, we show that there is a hard-core distribution $\mu$ on $\U_{\leq k}$ that is $(1+\epsilon)$-approximate $(d-1,d)$-independent when $k=\Omega\left(\frac{d}{\epsilon}+\frac{1}{\epsilon^2}\log \frac{1}{\epsilon}\right)$.

\begin{lemma} \label{lm:asymp-ind}
  Fix some $0 < \epsilon \le 2$, and let $k =
  \Omega\left(\frac{d}{\epsilon} +    \frac{\log(1/\epsilon)}{\epsilon^2}\right)$.
  The hard-core distribution $\mu$ on $\U_{\le k}$ with parameter
  $\lambda$, defined by
  \[
  \lambda_i = \frac{x_i}{1 + \frac{\epsilon}{4} - x_i},
  \]
  is $(1+\epsilon)$-approximate $(d-1, d)$-wise independent.
\end{lemma}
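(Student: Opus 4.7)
The plan is to realize the hard-core measure $\mu$ on $\U_{\leq k}$ as a conditioned independent sample: draw $Y_i \sim \mathrm{Bernoulli}(q_i)$ independently for each $i$ with $q_i = \lambda_i/(1+\lambda_i)$, form $S = \{i : Y_i = 1\}$, and condition on $|S| \leq k$. A direct calculation with the given $\lambda_i = x_i/(1+\epsilon/4 - x_i)$ yields $1+\lambda_i = (1+\epsilon/4)/(1+\epsilon/4 - x_i)$, and hence the convenient identity $q_i = x_i/(1+\epsilon/4)$; in particular $\sum_i q_i = k/(1+\epsilon/4)$, uniformly bounded away from $k$.

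For any $L \subseteq [n]$ I get $\Pr_\mu[L \subseteq \SS] = q^L \Pr[X_L \leq k - |L|]/\Pr[|S| \leq k]$, where $X_L := \sum_{i \notin L} Y_i$ is a sum of independent Bernoullis. Taking the ratio for a pair $(T,R)$ with $|T| = d-1$ and $|R| = d$, the normalizer $\Pr[|S| \leq k]$ cancels and the $q$-factors collapse, since $q^T/q^R = (1+\epsilon/4) \cdot x^T/x^R$. So the target bound reduces to showing $\Pr[X_T \leq k-d+1]/\Pr[X_R \leq k-d] \leq (1+\epsilon)/(1+\epsilon/4)$, and since the numerator is at most $1$, it suffices to prove $\Pr[X_R > k-d] = O(\epsilon)$.

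The final step will be a standard upper-tail Chernoff bound. Since $\mathbb{E}[X_R] = (k - \sum_{i \in R} x_i)/(1+\epsilon/4) \leq k/(1+\epsilon/4)$, the deviation from the threshold $k-d$ is at least $\frac{k\epsilon/4}{1+\epsilon/4} - d$. For $k$ at least a sufficiently large constant times $d/\epsilon$, this is $\Omega(k\epsilon)$, so Chernoff yields $\Pr[X_R > k-d] \leq \exp(-\Omega(k\epsilon^2))$; this is $O(\epsilon)$ as soon as $k\epsilon^2 = \Omega(\log(1/\epsilon))$. Combining the two requirements reproduces the hypothesis $k = \Omega(d/\epsilon + \epsilon^{-2}\log(1/\epsilon))$.

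The most delicate point will be constant-tracking. The scaling factor $(1+\epsilon/4)$ that I have to pay up front already consumes most of the slack to the target $(1+\epsilon)$, leaving only a $\Theta(\epsilon)$ window in which to fit the Chernoff tail. This is exactly why the shift $\epsilon/4$ is hard-coded into the definition of $\lambda$, and why both the $d/\epsilon$ term (to make the Chernoff deviation $\Omega(k\epsilon)$, not merely positive) and the $\epsilon^{-2}\log(1/\epsilon)$ term (to drive the exponential tail below $\epsilon$) in the assumption on $k$ are indispensable.
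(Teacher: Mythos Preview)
Your proposal is correct and follows essentially the same approach as the paper: realize $\mu$ as an independent Bernoulli sample with success probabilities $q_i = x_i/(1+\epsilon/4)$ conditioned on $|\SS|\le k$, cancel the conditioning in the ratio, upper-bound the numerator's tail factor by $1$, and use a Chernoff bound to show $\Pr[\sum_{i\notin R}Y_i > k-d]\le \epsilon/4$. The only cosmetic difference is that the paper drops the event $\{|\BB|\le k\}$ from the numerator directly (equivalent to your bounding $\Pr[X_T\le k-d+1]\le 1$) and tracks the Chernoff exponent explicitly, arriving at the same final inequality $\frac{1+\epsilon/4}{1-\epsilon/4}\le 1+\epsilon$.
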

\begin{proof}
  For simplicity of notation, let us denote $\beta =
  1+\frac{\epsilon}{4}$, and $\xi_i = \frac{x_i}{\beta}$. Observe that
  the probability mass under $\mu$ of any set $S$ of size at most $k$
  is proportional to $\left(\prod_{i \in S}{\xi_i}\right)
  \left(\prod_{i \not \in S}{(1-\xi_i)}\right)$. Thus,
  $\mu$ is equivalent to the following distribution: sample a set $\BB
  \subseteq [n]$ by including every $i \in [n]$ in $\BB$ independently
  with probability $\xi_i$; then we have $\mu(S) = \Pr[\BB = S
  \mid |\BB| \le k]$ for every $S$ of size at most $k$.  Let us fix
  for the rest of the proof arbitrary sets $T,R\subseteq [n]$ of size
  $d-1$ and $d$, respectively. By the
  observation above, for $\SS$ sampled according to $\mu$, and $\BB$
  as above, we have
  \[
  \frac{\Pr[\SS \supseteq T]}{\Pr[\SS \supseteq R]}
  =
  \frac{\Pr[\BB \supseteq T \text{ and } |\BB| \le k]}{\Pr[\BB
    \supseteq R \text{ and } |\BB| \le k]}
  \le
  \frac{\Pr[\BB \supseteq T]}{\Pr[\BB \supseteq R \text{ and } |\BB| \le k]}
  \]
  We have $\Pr[\BB \supseteq T] = \xi^T = \frac{x^T}{\beta^{d-1}}$. To
  simplify the probability in the denominator, let us introduce, for  each $i \in [n]$, the indicator random variable $Y_i$, defined to
  be $1$ if $i \in \BB$ and $0$ otherwise. By the choice of $\BB$, the
  $Y_i$'s are independent Bernoulli random variables with mean
  $\xi_i$, respectively. We can write
  \begin{align*}
  \Pr[\BB \supseteq R \text{ and } |\BB| \le k]
  &=
  \Pr\Biggl[\forall i \in R: Y_i = 1 \text{ and } \sum_{i \not \in
    R}{Y_i} \le k-d\Biggr]\\
  &=   \Pr[\forall i \in R: Y_i = 1] \Pr\Biggl[\sum_{i \not \in  R}{Y_i} \le k-d\Biggr],
  \end{align*}
  where the last equality follows by the independence of the
  $Y_i$. The first probability on the right hand side is just
  $\xi^R = \frac{x^R}{\beta^d}$, and plugging into the inequality above, we get
  \begin{equation}
    \label{eq:coins-ineq}
     \frac{\Pr[\SS \supseteq T]}{\Pr[\SS \supseteq R]}
     \le
     \beta \frac{x^T}{x^R \Pr[\sum_{i \not \in  R}{Y_i} \le k-d]}.
  \end{equation}
 We claim that \[ \Pr[\sum_{i \not
    \in R}{Y_i} \le k-d] \ge 1 - \frac{\epsilon}{4}\]
as long as $k=\Omega\left(\frac{d}{\epsilon}+ \frac{1}{\epsilon^2}\log \frac{1}{\epsilon}\right)$.
  The proof follows from standard concentration of measure arguments.  Let $Y = \sum_{i \not \in
    R}{Y_i}$, and observe that $\E Y = \frac{1}{\beta}(k - x(R))$,
  where $x(R)$ is shorthand for $\sum_{i \in R}{x_i}$. By Chernoff's
  bound,
  \begin{equation}
    \label{eq:chernoff}
    \Pr[Y > k-d] < e^{-\frac{\delta^2}{3\beta} (k-x(R))}
  \end{equation}
  where
  \[
  \delta = \frac{\beta(k-d)}{k-x(R)} - 1
  = \frac{(\beta - 1)k + x(R) - \beta d}{k - x(R)}.
  \]
  The exponent on the right hand side of \eqref{eq:chernoff}
  simplifies to
  \[
  \frac{\delta^2(k-x(R))}{3\beta}
  =
  \frac{((\beta - 1)k + x(R) - \beta d)^2}{3\beta(k - x(R))}
  \ge
  \frac{((\beta - 1)k - \beta d)^2}{3\beta k}.
  \]
  For the bound $\Pr[Y > k-d] \le \frac{\epsilon}{4}$, it suffices to have
  \[
  (\beta - 1)k - \beta d \ge \sqrt{3\beta\log(4/\epsilon) k}.
  \]
  Assuming that $k \ge \frac{C\log(4/\epsilon)}{\epsilon^2}$ for a
  sufficiently big constant $C$, the right hand side is at most
  $\frac{\epsilon k}{8}$. So, as long as $k \ge \frac{\beta d}{\beta - 1 -
    \frac{\epsilon}{8}}$, the inequality is satisfied and $\Pr[Y > k-d] <
  \frac{\epsilon}{4}$, as we claimed.

 The proof of the lemma now follows since for any $|T|=d-1$ and $|R|=d$, we have
 \begin{equation}
 \label{eq:coins-ineq2}
     \frac{\Pr[\SS \supseteq T]}{\Pr[\SS \supseteq R]}
     \le
     \beta \frac{x^T}{x^R \Pr[\sum_{i \not \in  R}{Y_i} \le k-d]}
     \le
     \frac{1 + \frac{\epsilon}{4}}{1 - \frac{\epsilon}{4}} \frac{x^T}{x^R},
  \end{equation}

 and $\frac{1 + \frac{\epsilon}{4}}{1 - \frac{\epsilon}{4}} \le 1 + \epsilon$.
\end{proof}

The $(1+\epsilon)$-approximation for large enough $k$ in
Theorem~\ref{thm:k=dAndasymptotic} now follows directly from
Lemma~\ref{lm:asymp-ind} and Theorem~\ref{thm:nearind-to-Aopt}.

\section{Approximately Optimal Design with Repetitions}
In this section, we consider the $A$-optimal design without the bound
$x_i \leq 1$ and prove Theorem~\ref{thm:repetitions}. That is, we
allow the sample set $S$ to be a multi-set. We obtain a tight bound on
the integrality gap in this case. Interestingly, we reduce the problem
to a special case of $A$-optimal design without repetitions that
allows us to obtained an improved approximation.

We first describe a sampling Algorithm \ref{alg:duplicateSampleSamex_i} that achieves a
$\frac{k(1+\epsilon)}{k-d+1}$-approximation for any \(\epsilon>0\).
In the 
algorithm, we introduce \(\poly(n,1/\epsilon)\) number of copies of each vector to ensure that
the fractional solution assigns equal fractional value for each copy
of each vector. Then we use the proportional volume sampling where the
measure distribution $\mu$ is defined on sets of the new larger ground set $U$
over copies of the original input vectors. The distribution \(\mu\) is
just the uniform distribution over subsets of size $k$ of $U$, and we
are effectively using traditional volume sampling over $U$.  Notice,
however, that the distribution over multisets of the original set of
vectors is different.
The proportional volume sampling used in the algorithm can be implemented in the same way as the one used for  without repetition setting, as described in~Section
\ref{Sec:EfficientRandomizedProportionalVolume}, which runs in \(\poly(n,d,k,1/\epsilon)\) time.

In
Section~\ref{sec:effreplace}, we describe a new implementation of proportional volume sampling procedure  which improves the running time to \(\poly(n,d,k,\log(1/\epsilon))\). The new algorithm is still efficient even when \(U\) has exponential size by exploiting the facts that \(\mu\) is uniform and that \(U\) has only at most \(n\) distinct vectors. 
\begin{algorithm}
\caption{Approximation Algorithm for $A$-optimal design with repetitions}\label{alg:duplicateSampleSamex_i}
\begin{algorithmic}[1]
\State Given $x\in \R_+^n$ with $\sum_{i=1}^n x_i = k$, \(\epsilon > 0\), and vectors $v_1,\ldots,v_n$.
\State Let $q=\frac{2n}{\epsilon k}$. Set \(x_i':=\frac{k-n/q}{k}x_i\) for each \(i\), and round each \(x_i'\) up to a multiple of \(1/q\).  \label{algStep:qCanBe-qk}
\State If $\sum_{i=1}^n x_i' < k$, add \(1/q\) to any \(x_i'\) until $\sum_{i=1}^n x_i' = k$.
\State Create $qx_i'$ copies of vector $v_i$ for each $i\in[n]$. Denote  $W$  the set of size $\sum_{i=1}^nqx_i'=qk$ of all those copies of vectors. Denote \(U\)  the new index set of \(W\)of size $qk$.
\Comment{This implies that we can assume that our new fractional solution $y_i=1/q$ is equal over all $i\in U$} \label{step:copyingStepAlgRepetition}
\State Sample a subset $\SS$ of $U$ of size $k$  where  $\Pr[\SS=S]\propto \det(W_{S}W_S^\top)$ for each  $S\subseteq U$ of size $k$. \label{step:samplingStepAlgRepetition}
\label{algstep:Sampling}
\State Set $X_i=\sum_{w\in W_{\SS}} \mathbbm{1}(w \text{ is a copy of } v_i)$ for all $i\in[n]$ \Comment{Get an integral solution $X$ by counting numbers of copies of \(v_{i}\) in \(\SS\).}
\State Output $X$.
\end{algorithmic}
\end{algorithm}
\begin{lemma} \label{lem:factorOfRoundingButInefficient}
Algorithm~\ref{alg:duplicateSampleSamex_i}, when given as input $x \in \R_+^n$ s.t.~$\sum_{i=1}^n x_i = k$, $1\geq \epsilon > 0,$ and $v_1,\ldots,v_n$, outputs a random $X\in \Z_+^n$ with $\sum_{i=1}^n X_i=k$ such that
\[ \Expectation{}{\tr\left( \sum_{i=1}^n X_i v_i v_i^\top \right)^{-1}} \leq \frac{k(1+\epsilon)}{k-d+1} \tr\left( \sum_{i=1}^n x_iv_i v_i^\top \right)^{-1} \]
\end{lemma}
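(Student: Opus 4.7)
The plan is to reduce the with-repetitions instance on $(V,[n])$ to a without-repetitions instance on the expanded ground set $(W,U)$ where the fractional solution is uniform, and then apply (the computation inside) the proof of Theorem~\ref{thm:nearind-to-Aopt} using the uniform measure on $k$-subsets of $U$. The output $X$ satisfies $\sum_{i \in [n]}{X_i v_i v_i^\top} = W_\SS W_\SS^\top$ by construction, so it suffices to bound $\mathbb{E}\,\tr(W_\SS W_\SS^\top)^{-1}$ in terms of the fractional objective at $x$.

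First I would check that Steps~\ref{algStep:qCanBe-qk}--\ref{step:copyingStepAlgRepetition} produce a well-defined feasible expanded instance. With $q = 2n/(\epsilon k)$, the scaling factor is $\frac{k-n/q}{k} = 1 - \epsilon/2$, so after Step~\ref{algStep:qCanBe-qk} we have $x_i' \geq (1-\epsilon/2)x_i$ with $x_i' \in \frac{1}{q}\mathbb{Z}_{>0}$, and $\sum_i x_i' \in [(1-\epsilon/2)k,\, k]$ (each rounding up costs at most $1/q$, and there are $n$ coordinates, adding at most $n/q = \epsilon k/2$ in total). Step~3 tops $x'$ up to $\sum_i x_i' = k$ in multiples of $1/q$, which can only move the residual PSD matrix $\sum_i x_i' v_i v_i^\top$ upward in the L\"owner order. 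Hence $|U| = qk$ and $y_i := 1/q$ for $i \in U$ is a feasible fractional solution for the without-repetitions CP on $(W,U)$.

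Next I would invoke the computation at the heart of the proof of Theorem~\ref{thm:nearind-to-Aopt} on $(W,U)$ with $\mu$ uniform on $\binom{U}{k}$ and $y_i \equiv 1/q$. For uniform $\mu$ and any $T,R \subseteq U$ with $|T|=d-1,\,|R|=d$,
\[
\frac{\Pr_{\SS\sim\mu}[\SS\supseteq T]}{\Pr_{\SS\sim\mu}[\SS\supseteq R]}
= \frac{\binom{qk-d+1}{k-d+1}}{\binom{qk-d}{k-d}}
= \frac{qk-d+1}{k-d+1},
\qquad \frac{y^T}{y^R} = q,
\]
so $\mu$ is $\alpha$-approximate $(d-1,d)$-wise independent with respect to $y$ for $\alpha = \frac{qk-d+1}{q(k-d+1)} \leq \frac{k}{k-d+1}$. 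The argument from \eqref{eq:wise-ind-give-approx-alpha-bound} through \eqref{eq:thm-near-ind-expand-sum-last} then yields
\[
\mathbb{E}\,\tr\bigl(W_\SS W_\SS^\top\bigr)^{-1}
\leq \frac{k}{k-d+1}\, \tr\Bigl(\sum_{i\in U} y_i w_i w_i^\top\Bigr)^{-1}
= \frac{k}{k-d+1}\, \tr\Bigl(\sum_{i=1}^n x_i' v_i v_i^\top\Bigr)^{-1},
\]
using that each copy of $v_i$ contributes $(1/q) v_i v_i^\top$ and there are $q x_i'$ of them.

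Finally, since $x_i' \geq (1-\epsilon/2) x_i$ coordinatewise, we have $\sum_i x_i' v_i v_i^\top \succeq (1-\epsilon/2)\sum_i x_i v_i v_i^\top$, and monotonicity of $\tr(\cdot)^{-1}$ under the L\"owner order gives the additional factor $\frac{1}{1-\epsilon/2} \leq 1+\epsilon$ (valid for $\epsilon \le 1$ since $(1+\epsilon)(1-\epsilon/2) = 1 + \epsilon/2 - \epsilon^2/2 \ge 1$). Combining with the previous display produces $\frac{k(1+\epsilon)}{k-d+1}\tr(\sum_i x_i v_i v_i^\top)^{-1}$, as claimed. The main obstacle is really just the bookkeeping for the scale-and-round step: confirming that $\alpha = \frac{qk-d+1}{q(k-d+1)}$ is monotonically no larger than $\frac{k}{k-d+1}$, and that the $1/q$-multiple rounding does not overshoot $k$, both of which are secured by the choice $q = 2n/(\epsilon k)$.
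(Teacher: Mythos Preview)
Your proof is correct and follows essentially the same approach as the paper's own proof: reduce to the without-repetition setting on the expanded ground set $U$ with uniform fractional solution $y_i=1/q$, verify that the uniform measure on $\binom{U}{k}$ is $\frac{k}{k-d+1}$-approximate $(d-1,d)$-wise independent via the explicit computation $\frac{qk-d+1}{q(k-d+1)}\le \frac{k}{k-d+1}$, and absorb the scale-and-round step into the multiplicative $(1-\epsilon/2)^{-1}\le 1+\epsilon$ factor. The only cosmetic difference is that you phrase the last step via the L\"owner order and operator monotonicity of $\tr(\cdot)^{-1}$, whereas the paper states directly that rounding up can only decrease the objective; these are the same observation.
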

\begin{proof}
Define \(x_i', y,W,U,\SS ,X\)  as in the algorithm. We will show that
\begin{align*}
\Expectation{}{\tr\left( \sum_{i=1}^n X_i v_i v_i^\top \right)^{-1}} \leq \frac{k}{k-d+1} \tr\left( \sum_{i=1}^n x_i'v_i v_i^\top \right)^{-1}\leq \frac{k(1+\epsilon)}{k-d+1} \tr\left( \sum_{i=1}^n x_iv_i v_i^\top \right)^{-1}
\end{align*}
The second inequality is by observing that the scaling \(x_i':=\frac{k-n/q}{k}x_i\) multiplies the objective \(\tr\left( \sum_{i=1}^n x_iv_i v_i^\top \right)^{-1}\) by a factor of \(\left(\frac{k-n/q}{k}\right)^{-1}=(1-\epsilon/2)^{-1}\leq 1+\epsilon\), and that rounding \(x_i\) up and adding \(1/q\) to any \(x_i\) can only decrease the objective.

To show the first inequality, we first translate the two key quantities $\tr\left( \sum_{i=1}^n x_i 'v_i v_i^\top \right)^{-1}$ and $\tr\left( \sum_{i=1}^n X_i v_i v_i^\top \right)^{-1}$  from the with-repetition setting over \(V\) and \([n]\) to the without-repetition setting over \(W\) and \(U\).
First, $\tr\left( \sum_{i=1}^n x_i 'v_i v_i^\top \right)^{-1} = \tr\left( \sum_{i\in U} y_i w_i {w_i}^\top \right)^{-1}$, where $y_i = \frac{1}{q}$ are all equal over all \(i\in U\), and $w_{i}$ is the copied vector in \(W\) at index $i\in U$. Second,
$\tr\left( \sum_{i=1}^n X_i v_i v_i^\top \right)^{-1}=\tr\left( \sum_{i\in \SS\subseteq U  } w_i {w_i}^\top \right)^{-1} $.
 
Let \(\mu'\) be the  distribution over  subsets $S$ of $U$ of size $k$ defined by \(\mu'(S)\propto\det(W_S W_S^\top)\). It is, therefore, sufficient to show that the sampling distribution \(\mu'\)  satisfies
\begin{equation}
\Expectation{\altmathcal{S}\sim \mu'}{\tr\left( \sum_{i\in \altmathcal{S}\subseteq U} w_i {w_i}^\top \right)^{-1}} \leq \frac{k}{k-d+1} \tr\left( \sum_{i\in U} y_i w_i {w_i}^\top \right)^{-1}  \label{eq:with-rep-near-ind-distribution}
\end{equation}
Observe that \(\mu'\) is the same as sampling a set $S\subseteq U$ of size $k$ with probability proportional to $\mu(S)\det(W_S W_S^\top)$ where \(\mu\) is uniform.
Hence, by Theorem \ref{thm:nearind-to-Aopt}, it is enough to show that for all $T,R\subseteq U$ with $|T|=d-1,|R|=d$,
\begin{align}
\frac{\Prob{\mu}{\altmathcal{S}\supseteq T}}{\Prob{\mu}{\altmathcal{S}\supseteq R}} \leq \left(\frac{k}{k-d+1}\right)\frac{y^{T}}{y^{R}} \label{eq:with-rep-mu-requirement}
\end{align}
With $\mu$ being uniform and  $y_i$ being all equal to $1/q$,
 the calculation is straightforward:
\begin{equation} \frac{\Prob{\mu}{\altmathcal{S}\supseteq T}}{\Prob{\mu}{\altmathcal{S}\supseteq R}} = \frac{\binom{qk-d+1}{k-d+1}/\binom{qk}{k}}{\binom{qk-d}{k-d}/\binom{qk}{k}} =\frac{qk-d+1}{k-d+1} \ \text{ and } \ \frac{y^{T}}{y^{R}}= \frac{1}{y_i} = q \label{eq:with-rep-direct-calcuation} \end{equation}
Therefore, \eqref{eq:with-rep-mu-requirement} holds because
\[ \frac{\Prob{\mu}{\altmathcal{S}\supseteq T}}{\Prob{\mu}{\altmathcal{S}\supseteq R}} \cdot\left( \frac{y^{T}}{y^{R}} \right)^{-1}= \frac{qk-d+1}{k-d+1}\cdot \frac{1}{q} \leq \frac{qk}{k-d+1}\cdot \frac{1}{q} = \frac{k}{k-d+1}, \]
\end{proof}

\begin{remark}
The approximation ratio for A-optimality with repetitions for $k\geq d$ is tight, since it matches the integrality gap lower bound stated in Theorem \ref{thm:IntegralityGapLowerBound}.
\end{remark}

\bibliographystyle{plainnat}
\bibliography{Reference}
\appendix
\section{Generalizations}\label{sec:generalization}

In this section we show that our arguments extend to the regime $k \le
d$ and give a $k$-approximation (without repetitions), which matches
the integrality gap of our convex relaxation. We also derive a
restricted invertibility principle for the harmonic mean of
eigenvalues.
\subsection{$k$-Approximation Algorithm for $k\leq d$}\label{sec:smallk}

Recall that our aim is to select a set $S \subseteq [n]$ of size $k\leq d$ that
minimizes $\sum_{i = 1}^k\lambda_i^{-1}$, where $\lambda_1, \ldots,
\lambda_k$ are the $k$ largest eigenvalues of the matrix
$V_SV_S^\top$.
We need to reformulate our convex relaxation since when $k < d$, the
inverse of $M(S) = \sum_{i\in S}{v_i v_i^\top}$ for $|S| = k$ is
no longer well-defined. We write a new convex program:
\begin{align}
  &\min \frac{E_{k-1}\left(\sum_{i = 1}^n x_i v_i  v_i^\top\right)}{E_{k}\left(\sum_{i = 1}^n x_i v_i  v_i^\top\right)}\label{eq:CPksmall-obj}\\
  &\text{s.t.} \notag\\
  &\sum_{i = 1}^n{x_i} = k\\
  &0 \le x_i \le 1 \ \ \ \forall i \in [n]\label{eq:CPksmall-bounds}
\end{align}
Once again we denote the optimal value of
\eqref{eq:CPksmall-obj}--\eqref{eq:CPksmall-bounds} by \CP. While the
proof that this relaxes the original problem is easy, the convexity is
non-trivial. Fortunately, ratios of symmetric polynomials are known to
be convex.

\begin{lemma}\label{lm:ksmall-convexity}
  The optimization problem
  \eqref{eq:CPksmall-obj}--\eqref{eq:CPksmall-bounds} is a convex
  relaxation of the $A$-optimal design problem when $k \le d$.
\end{lemma}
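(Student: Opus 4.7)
The plan is to verify two things: first, that program \eqref{eq:CPksmall-obj}--\eqref{eq:CPksmall-bounds} really is a relaxation of the discrete problem, and second, that its objective is convex on the feasible region. The first part is essentially a direct calculation; the second is the substantive step.

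For the relaxation claim, I would plug in $x = \mathbbm{1}_S$ for any $S \subseteq [n]$ with $|S| = k$. Then $M(x) = \sum_{i \in S} v_i v_i^\top = V_S V_S^\top$ has rank at most $k$, so at most $k$ of its eigenvalues $\lambda_1 \ge \cdots \ge \lambda_d \ge 0$ are nonzero. Any term in $E_k$ or $E_{k-1}$ containing a zero eigenvalue vanishes, so
\[
E_k(M(x)) = \prod_{i=1}^{k} \lambda_i, \qquad E_{k-1}(M(x)) = \sum_{j=1}^{k} \prod_{i \ne j} \lambda_i,
\]
and the ratio is exactly $\sum_{j=1}^{k} \lambda_j^{-1}$ when $M(x)$ has rank $k$. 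If the rank is strictly less than $k$, then $E_k(M(x)) = 0$ and the objective is $+\infty$; this agrees with the convention that such an $S$ is infeasible for the original problem, so the relaxation is indeed a lower bound on $\OPT$.

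For convexity, the map $x \mapsto M(x)$ is affine into the PSD cone, so it suffices to prove that $F(M) := E_{k-1}(M)/E_k(M)$ is convex on $\{M \succeq 0 : E_k(M) > 0\}$. The cleanest route is to appeal to the classical fact (a matrix analogue of the Marcus--Lopes inequality) that $E_k(M)/E_{k-1}(M)$ is \emph{concave} on the PSD cone where $E_{k-1}(M) > 0$; taking reciprocals of a positive concave function yields a convex function, which gives exactly what we want. This matrix concavity can be derived by noting that $E_k$ is a hyperbolic polynomial with respect to the identity direction, so the ratios $E_k/E_{k-1}$ along rays into its hyperbolicity cone are concave by Gårding's theory, and integrating along lines in the PSD cone extends this to the full matrix statement.

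The main obstacle is the convexity step, since the ratio of adjacent elementary symmetric functions of eigenvalues is a genuinely nontrivial object: eigenvalues of $M_0 + t\Delta$ are not linear in $t$, so one cannot simply pull back a scalar concavity statement on the eigenvalue vector. I would therefore cite the hyperbolic-polynomial or Marcus--Lopes-type convexity result rather than reproving it, and spend the rest of the proof just on the bookkeeping that connects the formal objective $E_{k-1}/E_k$ to the geometric objective $\sum_{i=1}^{k} \lambda_i^{-1}$ in the integral case.
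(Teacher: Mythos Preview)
Your proposal is correct and follows the same overall architecture as the paper: show the integral indicators are feasible and recover $\sum_{i=1}^k \lambda_i^{-1}$; then argue that $E_{k-1}/E_k$ is convex because its reciprocal $E_k/E_{k-1}$ is concave on PSD matrices of sufficient rank, and composing with the affine map $x \mapsto M(x)$ preserves convexity.

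The one point worth flagging is your remark that ``one cannot simply pull back a scalar concavity statement on the eigenvalue vector.'' In fact this is exactly what the paper does: it invokes the scalar Marcus--Lopes inequality (concavity of $e_k(\lambda)/e_{k-1}(\lambda)$ on nonnegative vectors with at least $k$ positive entries) and then lifts it to matrices via Davis's theorem, which says that any symmetric concave function of the eigenvalue vector yields a concave spectral function. So the pullback \emph{does} work, precisely because the ratio is permutation-symmetric in the eigenvalues. The paper also cites Bullen--Marcus for the matrix statement directly, as you do. Your hyperbolic-polynomial route (G{\aa}rding) is a legitimate alternative justification, but the Davis-based lift is both simpler and more standard here, and your dismissal of it is unwarranted.
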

\begin{proof}
  To prove convexity, we first note that the function $f(M) =
  \frac{E_{k}(M)}{E_{k-1}(M)}$ is concave on positive semidefinite
  matrices $M$ of rank at least $k$. This was proved by \cite[Theorem 4]{BullenM61} for positive definite $M$, and
  can be extended to $M$ of rank at least $k$ by a limiting
  argument. Alternatively, we can use the theorem of~\cite{MarcusLopes57} that the function $g(\lambda) =
  \frac{e_{k}(\lambda)}{e_{k-1}(\lambda)}$ is concave on vectors
  $\lambda \in \R^d$ with non-negative entries and at least $k$
  positive entries.  Because $g$ is  symmetric under permutations of
  its arguments and concave, and $f(M) = g(\lambda(M))$, where
  $\lambda(M)$ is the vector of eigenvalues of $M$, by a classical
  result of~\cite{Davis57} the function $f$ is concave on
  positive semidefinite matrices of rank at least $k$.

  Notice that the objective \eqref{eq:CPksmall-obj} equals
  $\frac{1}{f(M(x))}$ for the linear matrix-valued function $M(x) =
  \sum_{i = 1}^n x_i v_i v_i^\top$. Therefore, to prove that
  \eqref{eq:CPksmall-obj} is convex in $x$ for non-negative $x$, it
  suffices to show that $\frac{1}{f(M)}$ is convex in $M$ for positive
  semidefinite $M$. Since the function $\frac1z$ is convex and
  monotone decreasing over positive reals $z$, and $f$ is concave
  and non-negative over positive semidefinite matrices of rank at
  least $k$, we have that $\frac{1}{f(M)}$ is convex in $M$, as
  desired. Then   \eqref{eq:CPksmall-obj}--\eqref{eq:CPksmall-bounds}
  is an optimization problem with a convex objective and affine
  constraints, so we have a convex optimization problem.

  Let \OPT\ be the optimal value of the $A$-optimal design problem, and
  let $S$ be an optimal solution. We need to show that $\CP \le
  \OPT$. To this end, let $x$ be the indicator vector of $S$,
  i.e.~$x_i = 1$ if and only if $i \in S$, and $x_i = 0$
  otherwise. Then,
  \[
  \CP \le \frac{E_{k-1}(M(S))}{E_k(M(S))}
  = \frac{\sum_{i = 1}^k \prod_{j \neq i}{\lambda_j(M(S))}}{\prod_i  \lambda_i(M(S))}
  = \sum_{i = 1}^k\frac{1}{\lambda_i(M(S))} = \OPT.
  \]
  Above, $\lambda_1(M(S)), \ldots, \lambda_k(M(S))$ are, again, the
  nonzero eigenvalues of $M(S) = \sum_{i \in S}{v_i v_i^\top}$.
\end{proof}

We shall use the natural analog of proportional volume sampling: given
a measure $\mu$ on subsets of size $k$, we sample a set $S$
with probability proportional to $\mu(S)E_k(M(S))$. In fact, we will
only take $\mu(S)$ proportional to $x^S$, so this reduces to sampling
$S$ with probability proportional to $E_k(\sum_{i \in S}{x_i v_i
  v_i^\top})$, which is the standard volume sampling with vectors
scaled by $\sqrt{x_i}$, and can be implemented efficiently using,
e.g.~the algorithm of~\cite{DeshpandeR10}.

The following version of Theorem~\ref{thm:nearind-to-Aopt} still
holds with this modified proportional volume sampling. The proof is
exactly the same, except for mechanically replacing every instance of
determinant by $E_k$, of $E_{d-1}$ by $E_{k-1}$, and in general of $d$ by
$k$.

\begin{theorem}\label{thm:smallk-nearind}
  Given integers $k \le d \leq n$ and a vector $x\in [0,1]^n$ such that
  $1^\top x=k$, suppose there exists a measure $\mu$ on $\U_k$ that is
  $\alpha$-approximate $(k-1,k)$-wise independent. Then for $x$ the
  optimal solution of
  \eqref{eq:CPksmall-obj}--\eqref{eq:CPksmall-bounds},
  proportional volume sampling with measure $\mu$ gives a polynomial
  time $\alpha$-approximation algorithm for the $A$-optimal design
  problem.
\end{theorem}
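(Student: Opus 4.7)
The plan is to mirror the proof of Theorem~\ref{thm:nearind-to-Aopt} almost line for line, with the substitutions $d \rightsquigarrow k$, $E_{d-1} \rightsquigarrow E_{k-1}$, and $\det = E_d \rightsquigarrow E_k$. By Lemma~\ref{lm:ksmall-convexity}, we have $\CP \le \OPT$ for the objective \eqref{eq:CPksmall-obj}, so it suffices to show that for the proportional volume sampling distribution $\mu'$ with $\mu'(S) \propto \mu(S) E_k(V_SV_S^\top)$,
\[
\Expectation{\SS\sim\mu'}{\sum_{i=1}^{k} \frac{1}{\lambda_i(V_\SS V_\SS^\top)}} \;\le\; \alpha\cdot \frac{E_{k-1}(M(x))}{E_k(M(x))},
\]
where $M(x)=\sum_{i=1}^n x_i v_iv_i^\top$ and $\lambda_1,\dots,\lambda_k$ are the top $k$ eigenvalues.

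The first key step is the replacement for \eqref{eq:keyobsv}: for a positive semidefinite matrix $M$ of rank at least $k$,
\[
\sum_{i=1}^{k} \frac{1}{\lambda_i(M)} \;=\; \frac{e_{k-1}(\lambda_1,\dots,\lambda_k)}{e_k(\lambda_1,\dots,\lambda_k)} \;=\; \frac{E_{k-1}(M)}{E_k(M)},
\]
where the second equality uses that the remaining eigenvalues of $M$ are zero (when $M = V_S V_S^\top$ with $|S|=k$, rank-$k$ holds whenever $\det(V_S^\top V_S)>0$, and otherwise the sampling probability is zero, so such $S$ contribute nothing). Expanding the expectation exactly as in the proof of Theorem~\ref{thm:nearind-to-Aopt} gives
\[
\Expectation{\SS\sim\mu'}{\tfrac{E_{k-1}(V_\SS V_\SS^\top)}{E_k(V_\SS V_\SS^\top)}} \;=\; \frac{\sum_{S\in\U_k} \mu(S)\, E_{k-1}(V_S V_S^\top)}{\sum_{S\in \U_k} \mu(S)\, E_k(V_S V_S^\top)}.
\]

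Next I apply the Cauchy-Binet identity \eqref{eq:cauchy-binet} to both numerator and denominator, expanding $E_{k-1}(V_SV_S^\top)=\sum_{T\subseteq S,|T|=k-1}\det(V_T^\top V_T)$ and $E_k(V_SV_S^\top)=\sum_{R\subseteq S,|R|=k}\det(V_R^\top V_R)$, and swap the order of summation, obtaining
\[
\frac{\sum_{|T|=k-1} \det(V_T^\top V_T)\,\Prob{\mu}{\SS\supseteq T}}{\sum_{|R|=k}\det(V_R^\top V_R)\,\Prob{\mu}{\SS\supseteq R}}.
\]
Applying \eqref{eq:cauchy-binet} together with Lemma~\ref{lemma:LinearlityofDeterminant} (with $d$ replaced by $k$) to $M(x)$ turns the right-hand side of the target inequality into
\[
\alpha\cdot\frac{\sum_{|T|=k-1}\det(V_T^\top V_T)\, x^T}{\sum_{|R|=k}\det(V_R^\top V_R)\, x^R}.
\]
After cross-multiplying, the desired inequality reduces to the $T,R$-wise bound
\[
\det(V_T^\top V_T)\det(V_R^\top V_R)\, x^R\,\Prob{\mu}{\SS\supseteq T}
\;\le\; \alpha\, \det(V_T^\top V_T)\det(V_R^\top V_R)\, x^T\,\Prob{\mu}{\SS\supseteq R},
\]
which follows immediately from the $\alpha$-approximate $(k-1,k)$-wise independence of $\mu$. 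Summing over all $T$ with $|T|=k-1$ and $R$ with $|R|=k$ completes the argument.

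There is essentially no new obstacle beyond verifying the rank-$k$ version of \eqref{eq:keyobsv}; every other step is a purely symbolic transposition of the argument for $k\ge d$. The only care needed is algorithmic: sampling from $\mu'$ with $\mu(S)\propto x^S$ reduces to standard volume sampling on the scaled vectors $\sqrt{x_i}v_i$ (since $\mu(S) E_k(V_S V_S^\top)= x^S \det(V_S^\top V_S)=\det(V_S(x)^\top V_S(x))$ by Lemma~\ref{lemma:LinearlityofDeterminant}), which runs in polynomial time by \cite{DeshpandeR10}, yielding the claimed polynomial-time $\alpha$-approximation.
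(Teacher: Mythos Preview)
Your proposal is correct and follows exactly the approach the paper takes: the paper states that the proof ``is exactly the same [as that of Theorem~\ref{thm:nearind-to-Aopt}], except for mechanically replacing every instance of determinant by $E_k$, of $E_{d-1}$ by $E_{k-1}$, and in general of $d$ by $k$,'' and you have carried out precisely this substitution, including the rank-$k$ analogue of~\eqref{eq:keyobsv} and the observation that for $\mu(S)\propto x^S$ the sampling reduces to standard volume sampling on the rescaled vectors (which the paper also notes just before stating the theorem). One small wording quibble: your identity $\sum_{i=1}^k 1/\lambda_i = E_{k-1}(M)/E_k(M)$ requires rank \emph{exactly} $k$ (not ``at least $k$''), but you effectively use it only for $M=V_SV_S^\top$ with $|S|=k$, where this holds whenever the sampling weight is nonzero, as you yourself point out.
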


We can now give the main approximation guarantee we have for $k \le
d$.
\begin{theorem}\label{thm:ksmall}
  For any $k \le d$, proportional volume sampling with the hard-core measure
  $\mu$ on $\U_k$ with parameter $x$ equal to the optimal solution of
  \eqref{eq:CPksmall-obj}--\eqref{eq:CPksmall-bounds} gives a
  $k$-approximation to the $A$-optimal design problem.
\end{theorem}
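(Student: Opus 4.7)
The plan is to imitate the argument of Lemma~\ref{lem:k=dresult} and plug the result into Theorem~\ref{thm:smallk-nearind}. Specifically, by Theorem~\ref{thm:smallk-nearind} it suffices to show that the hard-core measure $\mu$ on $\U_k$ with parameter $\lambda = x$ is $k$-approximate $(k-1,k)$-wise independent with respect to $x$, where $x$ is the optimal solution of \eqref{eq:CPksmall-obj}--\eqref{eq:CPksmall-bounds}.

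First I would write down the explicit form of the distribution: for each $S \in \U_k$ we have $\mu(S) = x^S/Z$, where $Z = \sum_{S' \in \U_k} x^{S'}$ is the normalizing constant. Then I would compute the two relevant marginals. For $R \subseteq [n]$ with $|R| = k$ there is exactly one $S \in \U_k$ containing $R$, namely $R$ itself, so
\[
\Pr_{\SS \sim \mu}[\SS \supseteq R] \;=\; \frac{x^R}{Z}.
\]
For $T \subseteq [n]$ with $|T| = k-1$, the sets $S \in \U_k$ with $S \supseteq T$ are exactly the sets $T \cup \{i\}$ for $i \in [n] \setminus T$, so
\[
\Pr_{\SS \sim \mu}[\SS \supseteq T] \;=\; \frac{1}{Z} \sum_{i \in [n] \setminus T} x^T x_i \;=\; \frac{x^T}{Z}\sum_{i \in [n] \setminus T} x_i.
\]

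Next I would bound the sum: since $\sum_{i=1}^n x_i = k$ and $x_i \ge 0$, we have $\sum_{i \in [n]\setminus T} x_i \le k$. Taking the ratio of the two marginals gives
\[
\frac{\Pr_{\SS \sim \mu}[\SS \supseteq T]}{\Pr_{\SS \sim \mu}[\SS \supseteq R]} \;=\; \frac{x^T}{x^R}\sum_{i \in [n] \setminus T} x_i \;\le\; k \cdot \frac{x^T}{x^R},
\]
which is exactly the $k$-approximate $(k-1,k)$-wise independence condition. Applying Theorem~\ref{thm:smallk-nearind} with $\alpha = k$ and the convex relaxation bound $\CP \le \OPT$ established in Lemma~\ref{lm:ksmall-convexity} then yields the claimed $k$-approximation.

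There is essentially no obstacle here; the proof is a direct generalization of Lemma~\ref{lem:k=dresult} from $k=d$ to $k \le d$, and the bound $\sum_{i \in [n] \setminus T} x_i \le k$ is tight up to a lower order additive term, so one cannot expect a better constant from this particular hard-core distribution. The only nontrivial components are already isolated: convexity of the relaxation (handled in Lemma~\ref{lm:ksmall-convexity}) and the general reduction in Theorem~\ref{thm:smallk-nearind}, both of which we invoke as black boxes. Efficiency also comes for free, since proportional volume sampling with $\mu(S) \propto x^S$ reduces to standard volume sampling on the rescaled vectors $\{\sqrt{x_i}\, v_i\}_{i=1}^n$, which admits a polynomial-time implementation by~\cite{DeshpandeR10}.
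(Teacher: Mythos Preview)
Your proof is correct and follows essentially the same approach as the paper: reduce via Theorem~\ref{thm:smallk-nearind} to showing $k$-approximate $(k-1,k)$-wise independence of the hard-core measure, then verify this by the direct computation $\frac{\Pr[\SS\supseteq T]}{\Pr[\SS\supseteq R]} = \frac{x^T\sum_{i\notin T}x_i}{x^R} \le k\,\frac{x^T}{x^R}$. The paper's own proof is just a terser version of exactly this argument.
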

\begin{proof}
  In view of Theorem~\ref{thm:smallk-nearind}, we only need to show
  that $\mu$ is $k$-approximate $(k-1, k)$-wise independent. This is a
  straightforward calculation: for $\SS \sim \mu$, and any $T\subseteq
  [n]$ of size $k-1$ and $R \subseteq [n]$ of size $k$,
  \[
  \frac{\Pr[\SS \supseteq T]}{\Pr[\SS \supseteq R]}
  =
  \frac{x^T\sum_{i \not \in T}{x_i}}{x^R} \le k \frac{x^T}{x^R}.
  \]
  This completes the proof.
\end{proof}

The algorithm can be derandomized using the method of conditional
expectations analogously to the case of $k=d$ that we will show in Theorem~\ref{thm:deterministic}.

The
$k$-approximation also matches the integrality gap of
\eqref{eq:CPksmall-obj}--\eqref{eq:CPksmall-bounds}. Indeed, we can
take a $k$-dimensional integrality gap instance $v_1, \ldots, v_n$,
and embed it in $\R^d$ for any $d > k$ by padding each vector with
$0$'s. On such an instance, the convex program
\eqref{eq:CPksmall-obj}--\eqref{eq:CPksmall-bounds} is equivalent to the convex program \eqref{eq:CP-obj}--\eqref{eq:CP-bounds}. Thus the integrality gap that we will show in Theorem~\ref{thm:IntegralityGapLowerBound} implies an integrality gap of $k$ for all $d\geq k$.

\subsection{Restricted Invertibility Principle for Harmonic Mean}
\label{sect:rip}

Next we state and prove our restricted invertibility principle for harmonic mean
in a  general form.

\begin{theorem}\label{thm:rip-hm}
  Let $v_1, \ldots, v_n \in \R^d$, and $c_1, \ldots, c_n \in \R_+$,
  and define $M = \sum_{i = 1}^n{c_i v_i v_i^\top}$. For any $k \le
  r = \frac{\tr(M)}{\|M\|}$, there exists a subset $S \subseteq [n]$ of
  size $k$ such that the $k$ largest eigenvalues $\lambda_1, \ldots,
  \lambda_k$ of the matrix $\sum_{i \in S}{v_i v_i^\top}$ satisfy
  \[
  \left(\frac1k \sum_{i = 1}^k\frac{1}{\lambda_i}\right)^{-1}
  \ge
  \frac{r - k + 1}{r} \cdot \frac{\tr(M)}{\sum_{i = 1}^nc_i}.
  \]
  Moreover, such a set $S$ can be computed in deterministic
  polynomial time.
\end{theorem}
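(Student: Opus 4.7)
The plan is to apply proportional volume sampling directly with the hard-core measure $\mu(S) \propto c^S$ on the family $\U_k$ of $k$-subsets of $[n]$; equivalently, this is ordinary volume sampling applied to the scaled vectors $u_i = \sqrt{c_i}\,v_i$, which is efficient by known algorithms. The analysis parallels the proof of Theorem~\ref{thm:nearind-to-Aopt} adapted to the $k \leq d$ regime (cf.\ Theorem~\ref{thm:smallk-nearind}). First I would rewrite the objective as $\sum_{i=1}^k 1/\lambda_i = \tr((V_SV_S^\top)^+) = E_{k-1}(V_SV_S^\top)/E_k(V_SV_S^\top)$, expand both the numerator and denominator of the expectation using the Cauchy-Binet identity $E_j(M) = \sum_{|L|=j} c^L \det(V_L^\top V_L)$ (Lemma~\ref{lemma:linearofDetInX} combined with \eqref{eq:cauchy-binet}), and swap orders of summation. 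This gives
\[
\mathbb{E}\!\left[\sum_{i=1}^{k}\frac{1}{\lambda_i}\right]
= \frac{\sum_{|T|=k-1} c^T\bigl(C-c(T)\bigr)\det(V_T^\top V_T)}{\sum_{|S|=k} c^S \det(V_S^\top V_S)}
\ \le\ \frac{C\cdot E_{k-1}(M)}{E_k(M)},
\]
where $C = \sum_i c_i$ and $c(T)=\sum_{i\in T}c_i \le C$.

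The main step, and the hard part, will be a purely linear-algebraic lemma: for any PSD $M$ with $r = \tr(M)/\|M\| \ge k$,
\[
\frac{E_k(M)}{E_{k-1}(M)} \ \ge\ \frac{r-k+1}{k}\,\|M\|.
\]
To prove it, I would normalize $\mu_i := \lambda_i(M)/\|M\| \in [0,1]$ with $\sum_i \mu_i = r$, and use the identity $\sum_{|T|=k-1} \mu^T \sum_{i \notin T}\mu_i = k\,e_k(\mu)$ (each $k$-set $S$ appears $k$ times on the left as $T\cup\{i\}$). Rearranging yields $e_k(\mu)/e_{k-1}(\mu) = \frac{1}{k}\,\mathbb{E}_T\!\left[\sum_{i \notin T}\mu_i\right]$, where $T$ is a random $(k-1)$-subset drawn with probability proportional to $\mu^T$. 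Because each $\mu_i \le 1$, we have $\sum_{i \in T}\mu_i \le k-1$ surely, and therefore $\mathbb{E}_T\!\left[\sum_{i \notin T}\mu_i\right] = r - \mathbb{E}_T\!\left[\sum_{i \in T}\mu_i\right] \ge r-k+1$; homogeneity in $\|M\|$ then gives the matrix form. Plugging this into the expectation bound above yields $\mathbb{E}[\sum 1/\lambda_i] \le \frac{kC}{(r-k+1)\|M\|} = \frac{rkC}{(r-k+1)\tr(M)}$, so some $S$ in the support of the sampling achieves it, and the harmonic mean of the top $k$ eigenvalues is at least $\frac{r-k+1}{r}\cdot\frac{\tr(M)}{C}$, as required.

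For the deterministic polynomial-time algorithm, I would derandomize proportional volume sampling by the method of conditional expectations, building $S$ greedily one index at a time while maintaining the conditional expectation of the numerator $\sum_{|T|=k-1}c^T(C-c(T))\det(V_T^\top V_T)$ restricted to $T$ consistent with the current partial solution. These conditional expectations can be computed in polynomial time through a Cauchy-Binet expansion on suitably contracted Gram matrices, exactly as in the derandomization used for the $k=d$ case (Theorem~\ref{thm:deterministic}). The only real obstacle in the whole argument is the symmetric-polynomial inequality above, and the expectation reformulation reduces it to the trivial bound $\sum_{i\in T}\mu_i \le |T|$.
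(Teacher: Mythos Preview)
Your argument is correct and follows the same overall structure as the paper's proof: apply proportional volume sampling with the hard-core measure of parameter $c$ to obtain $\mathbb{E}\bigl[\sum_{i=1}^k 1/\lambda_i\bigr] \le C\cdot E_{k-1}(M)/E_k(M)$ (the paper normalizes $C=k$ and quotes Theorem~\ref{thm:ksmall} for this), and then lower-bound $E_k(M)/E_{k-1}(M)$ in terms of $\tr(M)$ and $r$. The only substantive difference is how the latter inequality is established. The paper invokes Schur-concavity of $e_k/e_{k-1}$ (citing Guruswami--Sinop and Marcus--Lopes) and compares $\lambda(M)$ to the majorizing vector with $r$ coordinates equal to $\|M\|$ and the rest zero, which immediately gives $e_k(\lambda)/e_{k-1}(\lambda) \ge \frac{r-k+1}{rk}\tr(M)$. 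Your averaging identity
\[
\frac{e_k(\mu)}{e_{k-1}(\mu)} \;=\; \frac{1}{k}\,\mathbb{E}_{T}\Bigl[\sum_{i\notin T}\mu_i\Bigr],
\qquad \Pr[T]\propto \mu^T,\ |T|=k-1,
\]
together with the trivial bound $\sum_{i\in T}\mu_i\le k-1$ when all $\mu_i\le 1$, proves the same inequality in a completely self-contained and elementary way, with no appeal to Schur-concavity or prior literature. Your route is arguably cleaner for this particular application; the paper's route has the advantage of placing the bound in a standard framework (majorization) from which related inequalities follow at once. The derandomization is identical in both.
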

\begin{proof}
  Without loss of generality we can assume that $\sum_{i = 1}^n{c_i} =
  k$. Then, by Theorem~\ref{thm:ksmall}, proportional volume sampling
  with the hard-core measure $\mu$ on $\U_k$ with parameter $c = (c_1, \ldots, c_n)$
  gives a random set $\SS$ of size $k$ such that
  \[
  \E{\sum_{i = 1}^k\frac{1}{\lambda_i(M(\SS))}} \le k
  \frac{E_{k-1}(M)}{E_k(M)},
  \]
  where $\lambda_i(M(\SS))$ is the $i$-th largest eigenvalues of
  $M(\SS) = \sum_{i \in S}{v_i v_i^\top}$. Therefore, there
  exists a set $S$ of size $k$ such that
  \[
  \left(\frac{1}{k}\sum_{i = 1}^k{\frac{1}{\lambda_i(M(S))}}\right)^{-1}
  \ge   \frac{E_k(M)}{E_{k-1}(M)} = \frac{e_k(\lambda(M))}{e_{k-1}(\lambda(M))},
  \]
  where $\lambda(M)$ is the vector of eigenvalues of $M$. Moreover,
  such a set can be found in deterministic polynomial time by
  Theorem~\ref{thm:deterministic}. In the rest
  of the proof we compare the right hand side above with $\tr(M)$.

  Recall that a vector $x \in \R_+^d$ is majorized by a vector $y \in
  \R_+^d$, written $x \prec y$, if $\sum_{j = 1}^i{x_{(j)}} \le
  \sum_{j = 1}^i{y_{(j)}}$ holds for all $i \in [n]$, and $\sum_{i =
    1}^n x_i = \sum_{i = 1}^n{y_i}$. Here $x_{(j)}$ denotes the $j$-th
  largest coordinate of $x$, and similarly for $y_{(j)}$. Recall
  further that a function $f:\R_+^d \to \R$ is Schur-concave if $x
  \prec y$ implies $f(x) \ge f(y)$. The function
  $\frac{e_k(x)}{e_{k-1}(x)}$ was shown to be Schur concave by \cite{guruswami2012optimal}; alternatively, it
  is symmetric under permutations of $x$ and concave, as shown
  in~\cite{MarcusLopes57} (and mentioned above), which immediately
  implies that it is Schur-concave. We define a vector $x$ which
  majorizes $\lambda(M)$ by setting $x_i = \frac{1}{r}\sum_{i =
    1}^d{\lambda_i(M)}$ for $i \in [r]$, and $x_i = 0$ for $i > r$ (we
  assume here that $\lambda_1(M) \ge \ldots \ge \lambda_d(M)$). By
  Schur concavity,
  \[
  \frac{e_k(\lambda(M))}{e_{k-1}(\lambda(M))} \le
  \frac{e_k(x)}{e_{k-1}(x)} = \frac{r-k+1}{rk}\sum_{i = 1}^d\lambda_i(M).
  \]
  Since $\sum_{i = 1}^d\lambda_i(M)= \tr(M)$, and we assumed that
  $\sum_{i = 1}^nc_i = k$, this completes the proof of the theorem.
\end{proof}

We note that Theorem~\ref{thm:rip-hm} also follows from Lemma~18 and
and equation (12) of~\cite{Naor17}. However, their proof of their
Lemma~18 does not yield an efficient algorithm to compute the set $S$,
as it relies on a volume maximization argument. 

\cut{With remark \ref{rem:productSamplingWon'tWork} and observation \ref{obs:ProductSamplingwithConvexG(x_i)scalingWon'tWork}, we need to use a different kind of sampling rather than $\mu(S) = \prod_{i\in S} g(x_i)$ for fixed bound on rage of $g$. The idea is to use either:
\begin{enumerate}
\item $\mu(S)$ as independent sampling: sample each element with probability $x_i/\beta$ for some small $\beta$ such as $1+\epsilon$, and accept/greedily add till we get $k$ elements, or reject/choose $k$ elements from $S$ uniformly if we get more than $k$ elements.
\item $\mu(S) = \prod_{i\in S}\frac{x_i}{1-x_i}$.  This is the same as $\mu(S)=\prod_{i\in S}x_i \prod_{i\notin S}(1-x_i)$ by scaling, which is the same as independently sampling $i$ with probability $x_i$ conditional on $|S|=k$. \Tao{Is that right?} To avoid $x_i=1$, we may scale $x_i':=x_i/\beta$ for some small $\beta=1+\epsilon$ with loss of factor $\beta$ as well.
\end{enumerate}
\begin{theorem}
\Tao{I am not sure about this. Want to sync up}
Let $k=\Omega(\frac{d}{\epsilon}+\frac{1}{\epsilon^2}\log\frac{1}{\epsilon})$. Then sampling $S$ with probability proportional to $\mu(S)\det(V_SV_S^\top)$ gives factor-$(1+\epsilon)$ algorithms for A-optimality
\end{theorem}
\begin{proof}
\Tao{What I think it should be?}
For the sake of analysis, let's say the algorithms doesn't even do greedily add if sampling gets less than $k$ elements, and just do rejection if sampling gets more than $k$ elements. The proof below still go through.
We can get $\Prob{\mu}{S \supset R} \geq (1-\epsilon)^d x^R$ from the same proof as in D-optimality.
We also have $\Prob{\mu}{S \subset T} = \frac{x^T}{(1+\epsilon)^{d-1}}\text{(Probability that size of }S\text{ is no more than }k) \leq \frac{x^T}{(1+\epsilon)^{d-1}}$.

The problem comes with $\frac{1}{1+\epsilon}$ and $(1-\epsilon)$. I would want them to cancel out. I know they are "the same" for small $\epsilon$, but when I look at the analysis closely, I can't find a simple way to translate one to the other yet.
\end{proof}
}


%

\subsection{The Generalized Ratio Objective}\label{sec:general}
In \(A\)-optimal design, given \(V=[v_1 \ldots\ v_n]\in \R^{d\times n}\), we  state the objective as minimizing $$ \tr\left(\sum_{i\in S} v_i v_i^\top\right)^{-1}=\frac{E_{d-1}(V_S V_S^\top)}{E_d(V_S V_S^\top)}.$$ over subsets \(S\subseteq [n]\) of size \(k\). In this section, for any given pair of integers \(0\leq l'< l \leq d\), we consider the following \textit{generalized ratio problem}: 
\begin{equation}
\min_{S\subseteq [n],|S|=k} \left(\frac{E_{l'}(V_S V_S^\top)}{E_l(V_S V_S^\top)}\right)^{\frac{1}{l-l'}} \label{eq:objective-gen}
\end{equation}

The above problem naturally interpolates between $A$-optimality and $D$-optimality. This follows since for $l=d$ and $l'=0$, the objective reduces to

\begin{equation}
\min_{S\subseteq [n],|S|=k} \left(\frac{1}{\det(V_S V_S^\top)}\right)^{\frac{1}{d}}.\label{eq:objective-gen-d}
\end{equation}

A closely related generalization between \(A\)- and \(D\)-criteria was considered in \cite{mariet2017elementary}. Indeed, their generalization corresponds to the case when $l=d$ and $l'$ takes any value from $0$ and $d-1$.

In this section, we show that our results extend to solving generalized ratio problem. 
We begin by describing a convex program for the generalized ratio problem. We then generalize the proportional volume sampling algorithm to \emph{proportional $l$-volume sampling}. Following the same plan as in the proof of $A$-optimality, we then reduce the approximation guarantee to near-independence properties of certain distribution. Here again, we appeal to the same product measure and obtain identical bounds, summarized in Table \ref{table:summary-result-gen}, on the performance of the algorithm. The efficient implementations of approximation algorithms for generalized ratio problem are described in Section \ref{sec:eff-implementation-gen}.
\begin{table}
\begin{center}\begin{tabular}{|c|c|c|c|}
\hline
Problem & \begin{tabular}{c}
\textit{A}-optimal \\
(\(l'=d-1,l=d\)) \\
\end{tabular}\cut{ \( \min\limits_{|S|=k} \tr\left(V_S V_S^\top\right)^{-1}\)}  & \(\min\limits_{|S|=k} \left(\frac{E_{l'}(V_S V_S^\top)}{E_l(V_S V_S^\top)}\right)^{\frac{1}{l-l'}}\) & \begin{tabular}{c}
\textit{D}-optimal \\
(\(l'=0,l=d\)) \\
\end{tabular} \\\hline
Case \(k=d\) & \(d\)  & \(l \cdot \left[(l-l')!\right]^{-\frac{1}{l-l'}} \leq \frac{el}{l-l'}\) & \(e\)\\\hline
\begin{tabular}{@{}c@{}} Asymptotic \(k>>d\) \\ without Repetitions\end{tabular}   & \begin{tabular}{@{}c@{}} \(1+\epsilon\), for\\ \(k\geq\Omega\left(\frac{d}{\epsilon}+\frac{\log 1/\epsilon}{\epsilon^2}\right)\) \end{tabular}&  \begin{tabular}{@{}c@{}} \(1+\epsilon\), for\\ \(k\geq\Omega\left(\frac{l}{\epsilon}+\frac{\log 1/\epsilon}{\epsilon^2}\right)\) \end{tabular}&  \begin{tabular}{@{}c@{}} \(1+\epsilon\),  for\\  \(k\geq\Omega\left(\frac{d}{\epsilon}+\frac{\log 1/\epsilon}{\epsilon^2}\right)\) \end{tabular} \\\hline
\begin{tabular}{@{}c@{}} Arbitrary $k$ and $d$ \\ With Repetitions\end{tabular}   & \(\frac{k}{k-d+1}\) & \(\frac{k}{k-l+1}\)  \textsuperscript{} & \(\frac{k}{k-d+1}\) \\\hline
\begin{tabular}{@{}c@{}} Asymptotic \(k>> d\) \\ With Repetitions\end{tabular}   & \begin{tabular}{@{}c@{}} \(1+\epsilon\), for \\ \(k\geq d+\frac{d}{\epsilon}\) \end{tabular}  & \begin{tabular}{@{}c@{}} \(1+\epsilon\), for \\ \(k\geq l+\frac{l}{\epsilon}\) \end{tabular} & \begin{tabular}{@{}c@{}} \(1+\epsilon\), for \\ \(k\geq d+\frac{d}{\epsilon}\) \end{tabular} \\\hline
\end{tabular}\end{center}
\caption{Summary of approximation ratio obtained by our work on  generalized ratio problem. 
 }
\label{table:summary-result-gen}
\end{table}




\subsubsection{Convex Relaxation} \label{sec:con-lex-gen}
\cut{\cite{mariet2017elementary} show that \(\left(E_{l'}(M)\right)^{\frac{1}{l'}}\) is  convex on the set of positive definite matrices \(M\). }
As in solving \(A\)-optimality, we may define  relaxations for with and without repetitions as \eqref{eq:CP-obj-gen}-\eqref{eq:CP-bound-gen}.
\begin{center}
\begin{tabular}[h]{|p{6.5cm}|p{7.5cm}|} \hline 
\vspace{0.5em}\qquad\qquad \qquad \textbf{With Repetitions}{\begin{align*}
  \qquad &\min\pr{\frac{E_{l'}\pr{V(x) V(x)^\top   }}{E_{l}\pr{V(x) V(x)^\top }}}^{\frac{1}{l-l'}}\\
  & \text{s.t.} \qquad \sum_{i = 1}^n{x_i} = k\\
  &\qquad0 \le x_i   \  \ \ \ \forall i \in [n]
\end{align*}} & \vspace{0.5em}\qquad\qquad\quad \textbf{Without Repetitions}{\begin{align}
  \qquad &\min \pr{\frac{E_{l'}\pr{V(x) V(x)^\top   }}{E_{l}\pr{V(x) V(x)^\top }}}^{\frac{1}{l-l'}}\label{eq:CP-obj-gen}\\
  & \text{s.t.} \qquad \sum_{i = 1}^n{x_i} = k\\
  &\qquad0 \le x_i \le 1  \  \ \ \ \forall i \in [n]\label{eq:CP-bound-gen}
\end{align}} \\ \hline
\end{tabular}
\end{center}

We now show that \(\pr{\frac{E_{l'}\pr{V(x) V(x)^\top   }}{E_{l}\pr{V(x) V(x)^\top }}}^{\frac{1}{l-l'}}\) is convex in \(x\). \cut{We generalize Lemma \ref{lm:ksmall-convexity}, which applies only to \textit{A}-optimality criteria, to generalized ratio problem.}

\begin{lemma}
Let \(d\) be a positive integer. For any given pair \(0\leq l' < l \leq d\), the function \begin{align}
f_{l',l}(M)=\left(\frac{E_{l'}(M)}{E_l(M)}\right)^{\frac{1}{l-l'}} \label{eq:obj-gen-convex-scaled}
\end{align}
is convex over \(d\times d\) positive semidefinite matrix \(M\).
\end{lemma}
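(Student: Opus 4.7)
The plan mirrors the structure of the proof of Lemma~\ref{lm:ksmall-convexity}: reduce the matrix statement to a statement about the eigenvalues via Davis's theorem, and then establish the required scalar concavity/convexity by bootstrapping from the Marcus--Lopes concavity of $e_k/e_{k-1}$ already used in the paper.

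First, I would write $f_{l',l}(M) = \phi(\lambda(M))$, where $\lambda(M) \in \R_+^d$ is the vector of eigenvalues of $M$ and $\phi(x) = (e_{l'}(x)/e_l(x))^{1/(l-l')}$; on the set of PSD matrices of rank less than $l$ the value is $+\infty$, and we extend by lower semicontinuity. Since $\phi$ is symmetric under permutations of its arguments, the theorem of~\cite{Davis57} (invoked in Lemma~\ref{lm:ksmall-convexity}) reduces convexity of $f_{l',l}$ on the PSD cone to convexity of $\phi$ on $\R_+^d$ (more precisely, on the open region where $x$ has at least $l$ positive entries, which is where the function is finite and smooth; the boundary is handled by a standard limiting argument since $f_{l',l}$ is continuous there in the extended sense).

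Next, I would establish that the function
\[
h(x) := \left(\frac{e_l(x)}{e_{l'}(x)}\right)^{1/(l-l')}
\]
is concave and positive on the relevant open region. The key telescoping identity is
\[
h(x) = \left(\prod_{k=l'+1}^{l}\frac{e_k(x)}{e_{k-1}(x)}\right)^{1/(l-l')},
\]
exhibiting $h$ as the geometric mean of the $l - l'$ functions $e_k/e_{k-1}$, each of which is concave and positive by the Marcus--Lopes theorem~\cite{MarcusLopes57} already used in the paper. A standard fact from convex analysis (composition of the concave, coordinatewise nondecreasing geometric mean $(y_1 \cdots y_m)^{1/m}$ on $\R_+^m$ with concave positive functions) then yields concavity of $h$. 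With $h$ concave and strictly positive, a direct Hessian computation,
\[
\nabla^2 \phi \;=\; \nabla^2(1/h) \;=\; \frac{2(\nabla h)(\nabla h)^\top}{h^{3}} \;-\; \frac{\nabla^2 h}{h^{2}} \;\succeq\; 0,
\]
shows $\phi = 1/h$ is convex, completing the reduction.

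The only genuinely nontrivial step is the concavity of $h$, i.e.\ justifying the passage from the Marcus--Lopes concavity of the consecutive ratios $e_k/e_{k-1}$ to concavity of their geometric mean. This is where I expect the main (mild) obstacle to lie; the rest is essentially a translation of the $l-l'=1$ case already handled in Lemma~\ref{lm:ksmall-convexity} to the general ratio. One can sidestep any worries about the composition rule by verifying the two-function case by hand from Cauchy--Schwarz, $(ta+(1-t)b)(tc+(1-t)d) \ge (t\sqrt{ac}+(1-t)\sqrt{bd})^2$, and inducting on $l - l'$; both routes lead to the same conclusion.
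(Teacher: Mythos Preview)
Your proof is correct, but it takes a different route from the paper's. The paper's argument is a two-liner: it directly invokes Theorem~3 of~\cite{BullenM61}, which already states that $\bigl(E_l(M)/E_{l'}(M)\bigr)^{1/(l-l')}$ is concave on positive semidefinite matrices for any $0\le l'<l\le d$; then it applies the same composition with $z\mapsto 1/z$ that you use. So the paper treats the matrix concavity as a black box.

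You instead reduce first to the scalar problem via Davis's theorem, and then essentially reprove the relevant piece of Bullen--Marcus from the Marcus--Lopes concavity of the consecutive ratios $e_k/e_{k-1}$ by telescoping and the geometric-mean composition rule. This is more work but more self-contained: it shows explicitly why the general $(l',l)$ case follows from the $(k-1,k)$ case, and it reuses only the ingredients already established in the proof of Lemma~\ref{lm:ksmall-convexity}. The paper's route is shorter and cleaner if one is willing to cite~\cite{BullenM61}; yours makes the dependence on Marcus--Lopes transparent and would be preferable in a setting where the Bullen--Marcus reference is not at hand.
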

\begin{proof}
By Theorem 3 in \cite{BullenM61}, \((f_{l',l}(M))^{-1}=\left(\frac{E_{l}(M)}{E_{l'}(M)}\right)^{\frac{1}{l-l'}}\) is concave on positive semidefinite matrices \(M\) for each \(0\leq l' < l \leq d\). The  function $\frac1z$ is convex and
  monotone decreasing over the positive reals $z$, and this, together with
  the concavity of $(f_{l',l}(M))^{-1}$ and that \((f_{l',l}(M))^{-1}>0\), implies that $f_{l',l}(M)$ is convex in
  $M$.
\end{proof}

\subsubsection{Approximation via $(l',l)$-Wise Independent Distribution}

Let \(0\leq l' < l \leq d\) and \(\U\in\{\U_k,\U_{\leq k}\}\). We first show connection of approximation guarantees on objectives \(\left(\frac{E_{l'}(V_S V_S^\top)}{E_l(V_S V_S^\top)}\right)^{\frac{1}{l-l'}}\) and \(\frac{E_{l'}(V_S V_S^\top)}{E_l(V_S V_S^\top)}\). Suppose we already solve the convex relaxation of generalized ratio problem \eqref{eq:CP-obj-gen}-\eqref{eq:CP-bound-gen} and get a fractional solution \(x\in\R^n\).  Suppose that a randomized algorithm \(\A\), upon receiving input \(V\in \R^{d\times n}\) and \(x\in \R^n\), outputs \(S\in \U\) such that
\begin{equation}
\Expectation{S\sim\A}{\frac{E_{l'}(V_S V_S^\top)}{E_l(V_S V_S^\top)}} \leq \alpha'\frac{E_{l'}(V(x)V(x)^\top)}{E_l(V(x)V(x)^\top)} \label{eq:objective-gen-alpha'-approx}
\end{equation}
for some constant \(\alpha'>0\). By the convexity of the function \(f(z)=z^{l-l'}\) over positive reals \(z\), we have
\begin{equation}
\Expectation{}{\frac{E_{l'}(M)}{E_l(M)}}\geq \Expectation{}{\left(\frac{E_{l'}(M)}{E_l(M)}\right)^{\frac{1}{l-l'}}}^{l-l'}
\label{eq:z^(l-l')-convex}
\end{equation}
for any semi-positive definite matrix \(M\). Combining \eqref{eq:objective-gen-alpha'-approx} and \eqref{eq:z^(l-l')-convex} gives
\begin{equation}
\Expectation{S\sim\A}{\left(\frac{E_{l'}(V_S V_S^\top)}{E_l(V_S V_S^\top)}\right)^{\frac{1}{l-l'}}} \leq \alpha\left(\frac{E_{l'}(V(x)V(x)^\top)}{E_l(V(x)V(x)^\top)}\right)^{\frac{1}{l-l'}}
\label{eq:objective-gen-alpha-approx}
\end{equation}
where \(\alpha=(\alpha')^{\frac{1}{l-l'}}\). Therefore, it is sufficient for an algorithm to satisfy \eqref{eq:objective-gen-alpha'-approx} and give a bound on \(\alpha'\) in order to solve the generalized ratio problem up to factor \(\alpha\).

To show \eqref{eq:objective-gen-alpha'-approx}, we first define the proportional \(l\)-volume sampling and $\alpha$-approximate $(l',l)$-wise independent distribution.\begin{definition}
Let $\mu$ be probability measure on sets in $\U_k$ (or $\U_{\leq k}$).  Then the proportional \(l\)-volume sampling with measure $\mu$ picks a set of vectors indexed by ${S}\in \U_k$ (or $\U_{\leq k}$) with probability proportional to $\mu(S) E_l(V_SV_S^\top)$.
\end{definition}

\begin{definition}
Given integers $d,k,n$, a pair of integers \(0\leq l' \leq l \leq d\), and a vector $x\in [0,1]^n$ such that $1^\top x=k$, we call a measure $\mu$ on sets in $\U_k$ (or $\U_{\leq k}$), $\alpha$-approximate $(l',l)$-wise independent with respect to \(x\) if for any subsets $T', T\subseteq [n]$ with $|T'|=l'$ and $|T|=l$, we have
$$ \frac{\mathsf{Pr}_{\altmathcal{S}\sim \mu}[T'\subseteq \SS]}{\mathsf{Pr}_{\altmathcal{S}\sim \mu}[T\subseteq \SS]}\leq \alpha^{l-l'} \cdot \frac{x^{T'}}{x^T}$$
where \(x^L:=\prod_{i\in L}x_i\) for any $L\subseteq [n]$.
We omit ``with respect to \(x\)" when the context is clear.\end{definition}
 The following theorem reduces the approximation guarantee  in \eqref{eq:objective-gen-alpha'-approx} to $\alpha$-approximate $(l',l)$-wise independence properties of a certain distribution \(\mu\) by utilizing  proportional \(l\)-volume sampling.
\begin{theorem}\label{thm:nearind-to-Aopt-Gen}
Given integers $d,k,n$, \(V=[v_1\ldots v_n]\in\R^{d\times n}\), and a vector $x\in [0,1]^n$ such that $1^\top x=k$, suppose there exists a distribution $\mu$ on sets in $\U_k$ (or $\U_{\leq k}$) and is $\alpha$-approximate $(l',l)$-wise independent for some \(0 \leq l'  < l \leq d\). Then the proportional \(l\)-volume sampling with measure $\mu$ gives an $\alpha$-approximation algorithm for minimizing \(\left(\frac{E_{l'}(V_S V_S^\top)}{E_l(V_S V_S^\top)}\right)^{\frac{1}{l-l'}}\) over subsets \(S\subseteq[n]\) of size \(k\).

\end{theorem}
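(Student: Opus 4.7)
The plan is to follow the blueprint of the proof of Theorem~\ref{thm:nearind-to-Aopt} almost verbatim, replacing $d$ by $l$ and $d-1$ by $l'$ throughout. Concretely, by the convexity reduction already established in \eqref{eq:objective-gen-alpha-approx}, it suffices to prove that proportional $l$-volume sampling with measure $\mu$ satisfies
\[
\Expectation{\SS\sim\mu'}{\frac{E_{l'}(V_\SS V_\SS^\top)}{E_l(V_\SS V_\SS^\top)}} \;\le\; \alpha^{l-l'}\,\frac{E_{l'}(V(x)V(x)^\top)}{E_l(V(x)V(x)^\top)},
\]
where $\mu'(S)\propto \mu(S) E_l(V_S V_S^\top)$. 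Raising both sides to the power $1/(l-l')$ and invoking Jensen's inequality for $z\mapsto z^{l-l'}$ (as done around \eqref{eq:z^(l-l')-convex}) then yields the claimed $\alpha$-approximation for the generalized ratio objective, and since the convex relaxation of Section~\ref{sec:con-lex-gen} lower bounds the optimum, this suffices.

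First I would unfold the expectation and cancel the $E_l(V_\SS V_\SS^\top)$ factor that appears in both the sampling weight and the denominator of the objective, obtaining
\[
\Expectation{\SS\sim\mu'}{\frac{E_{l'}(V_\SS V_\SS^\top)}{E_l(V_\SS V_\SS^\top)}} = \frac{\sum_{S\in\U}\mu(S) E_{l'}(V_S V_S^\top)}{\sum_{S\in\U}\mu(S) E_l(V_S V_S^\top)}.
\]
Then I apply the Cauchy--Binet identity \eqref{eq:cauchy-binet} at orders $l'$ and $l$ (valid since $l'<l\le d$), expand $E_{l'}(V_S V_S^\top)=\sum_{|T'|=l',T'\subseteq S}\det(V_{T'}^\top V_{T'})$ and similarly for $E_l$, and swap the order of summation in both numerator and denominator. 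This produces
\[
\frac{\sum_{|T'|=l'}\det(V_{T'}^\top V_{T'})\,\Prob{\mu}{\SS\supseteq T'}}{\sum_{|T|=l}\det(V_T^\top V_T)\,\Prob{\mu}{\SS\supseteq T}}.
\]
By the same Cauchy--Binet expansion together with Lemma~\ref{lemma:linearofDetInX} applied to the subsets $T'$ (size $l'\le d$) and $T$ (size $l\le d$), the relaxation value rewrites as
\[
\frac{E_{l'}(V(x)V(x)^\top)}{E_l(V(x)V(x)^\top)} = \frac{\sum_{|T'|=l'}\det(V_{T'}^\top V_{T'})\,x^{T'}}{\sum_{|T|=l}\det(V_T^\top V_T)\,x^T}.
\]

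To finish, I would cross-multiply the desired inequality and reduce it to the termwise inequality
\[
\det(V_{T'}^\top V_{T'})\det(V_T^\top V_T)\,x^T\,\Prob{\mu}{\SS\supseteq T'} \;\le\; \alpha^{l-l'}\det(V_{T'}^\top V_{T'})\det(V_T^\top V_T)\,x^{T'}\,\Prob{\mu}{\SS\supseteq T},
\]
which holds for every pair $T',T$ with $|T'|=l'$, $|T|=l$ by the hypothesis that $\mu$ is $\alpha$-approximate $(l',l)$-wise independent. Summing over all such $T',T$ and combining with the reduction above proves the expectation bound, and the Jensen step completes the proof. The calculations are entirely mechanical once the Cauchy--Binet identity and Lemma~\ref{lemma:linearofDetInX} are in place; the only point that needs verifying is that $l\le d$ ensures all the determinants $\det(V_T^\top V_T)$ and $\det(V_{T'}^\top V_{T'})$ are well-defined and nonnegative, so that the termwise inequality is meaningful, which is immediate from the hypothesis $l\le d$.
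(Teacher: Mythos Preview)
Your proposal is correct and takes essentially the same approach as the paper's own proof, which simply says to ``mechanically replace $T,R,d-1,d,$ and $\det$ with $T',T,l',l,$ and $E_l$'' in the proof of Theorem~\ref{thm:nearind-to-Aopt} and then invoke the Jensen reduction from \eqref{eq:objective-gen-alpha'-approx} to \eqref{eq:objective-gen-alpha-approx}. You have spelled out those mechanical replacements in full, including the Cauchy--Binet expansion, the use of Lemma~\ref{lemma:linearofDetInX}, the swap of summation order, and the termwise application of the $\alpha$-approximate $(l',l)$-wise independence hypothesis; nothing is missing or different in substance.
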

\begin{proof}
Let \(\mu'\) denote the sampling distribution over \(\U\), where $\U=\altmathcal{U}_k$ or $\U_{\leq k}$, with probability of sampling \(S\in\U\) proportional to $\mu(S)E_l(V_SV_S^\top)$. We mechanically replace \(T,R,d-1,d,\text{and }\det\) in the proof of Theorem \ref{thm:nearind-to-Aopt} with \(T',T,l',l,\text{and }E_l\) to obtain
\[ \Expectation{\SS\sim \mu'}{\tr \left( \sum_{i\in \SS} v_i v_i^\top \right)^{-1}}
\leq \alpha^{l-l'} \tr \left( \sum_{i\in [n]} x_i v_i v_i^\top
\right)^{-1}.  \]
We finish the proof by observing that \eqref{eq:objective-gen-alpha'-approx} implies \eqref{eq:objective-gen-alpha-approx}, as discussed earlier.
\end{proof}

The following subsections generalize algorithms and proofs for with and without repetitions. The algorithm for generalized ratio problem can be summarized in Algorithm \ref{alg:generalSampleNearInd}. Note that efficient implementation of the sampling is described in Section \ref{sec:eff-implementation-gen}.
\begin{algorithm}
\caption{Generalized ratio approximation algorithm}\label{alg:generalSampleNearInd}
\begin{algorithmic}[1]
\State Given an input $V=[v_1,\ldots,v_n]$ where $v_i\in \R^d$, $k$ a positive integer, and a pair of integers \(0\leq l'< l \leq d\).
\State Solve the convex relaxation \(x=\text{argmin}_{x\in J^n:1^\top x = k}\pr{\frac{E_{l'}\pr{V(x) V(x)^\top   }}{E_{l}\pr{V(x) V(x)^\top }}}^{\frac{1}{l-l'}}\) where \(J=[0,1]\) if without repetitions or \(\R^+\) if with repetitions.
\If{\(k=l\)}
\State Sample \(\mu'(S)\propto x^S E_l\pr{V_S V_S^\top }\) for each \(S \in \U_{ k}\)
\ElsIf{without repetition setting and \(k \geq \Omega\pr{\frac{d}{\epsilon}+\frac{\log(1/\epsilon)}{\epsilon^2}}\)}  \State Sample \(\mu'(S)\propto \lambda^S E_l\pr{V_S V_S^\top }\) for each \(S \in \U_{\leq k}\) where \(\lambda_i:=\frac{x_i}{1+\epsilon/4-x_i}\)
\ElsIf{with repetition setting}  \State Run Algorithm \ref{alg:duplicateSampleSamex_i}, except modifying the sampling step to sample a subset $\SS$ of $U$ of size $k$  with  $\Pr[\SS=S]\propto E_l(W_{S}W_S^\top)$.
\EndIf
\State Output $\SS$ (If $|\SS| < k$, add $k - |\SS|$ arbitrary vectors to $\SS$ first).
\end{algorithmic}
\end{algorithm}

\subsubsection{Approximation Guarantee for Generalized Ratio Problem without Repetitions}
We prove the following theorem which generalizes Lemmas \ref{lem:k=dresult} and \ref{lm:asymp-ind}. The  $\alpha$-approximate $(l',l)$-wise independence property, together with
Theorem~\ref{thm:nearind-to-Aopt-Gen}, implies an approximation guarantee for generalized ratio problem without repetitions for \(k=l\) and asymptotically for \(k=\Omega\left(\frac{l}{\epsilon}+\frac{1}{\epsilon^2} \log{\frac{1}{\epsilon}}\right)\).
\begin{theorem}\label{lem:k=dAndasymptotic-Gen}
Given integers $d,k,n,$ a pair of integers \(0\leq l' \leq l \leq d\), and a vector $x\in [0,1]^n$ such that $1^\top x=k$, the hard-core distribution $\mu$ on sets in $\U_k$ with parameter \(x\) is $\alpha$-approximate $(l',l)$-wise independent when $k=l$ for \begin{align}
\alpha=l \cdot \left[(l-l')!\right]^{-\frac{1}{l-l'}}\leq\frac{el}{l-l'} \label{eq:gen-k=d-alpha-value}
\end{align}

Moreover, for any $0<\epsilon\leq 2$ when $k=\Omega\left(\frac{l}{\epsilon}+\frac{1}{\epsilon^2} \log{\frac{1}{\epsilon}}\right)$, the hard-core distribution $\mu$ on $\U_{\le k}$ with parameter
  $\lambda$, defined by
  \[
  \lambda_i = \frac{x_i}{1 + \frac{\epsilon}{4} - x_i},
  \]
  is $(1+\epsilon)$-approximate $(l', l)$-wise independent.

Thus  for minimizing the generalized ratio problem  \(\left(\frac{E_{l'}(V_S V_S^\top)}{E_l(V_S V_S^\top)}\right)^{\frac{1}{l-l'}}\) over subsets \(S\subseteq [n]\) of size \(k\), we obtain

\begin{itemize}
\item $(\frac{el}{l-l'})$-approximation algorithm when $k=l$, and
\item $(1+\epsilon)$-approximation algorithm when $k=\Omega\left(\frac{l}{\epsilon}+\frac{1}{\epsilon^2} \log{\frac{1}{\epsilon}}\right)$.
\end{itemize}
\end{theorem}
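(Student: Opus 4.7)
The plan is to follow the two-stage blueprint used for $A$-optimality: exhibit an explicit hard-core distribution $\mu$ in each regime, verify $\alpha$-approximate $(l',l)$-wise independence by direct computation of the marginal ratios, and then invoke Theorem~\ref{thm:nearind-to-Aopt-Gen} to convert that combinatorial property into the claimed approximation guarantee for the generalized ratio objective. Because the definition of $\alpha$-approximate $(l',l)$-wise independence asks for a bound of $\alpha^{l-l'}\cdot x^{T'}/x^T$ rather than $\alpha\cdot x^{T'}/x^T$, one only needs to control the $(l-l')$-th root of whatever ratio is produced, which is what ultimately makes the factorial show up in the final $\alpha$.

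For the $k=l$ case I would mirror the proof of Lemma~\ref{lem:k=dresult}. Using $\mu(S)\propto x^S$ on $\U_k$, the marginal probabilities telescope cleanly: $\Pr_\mu[\SS\supseteq T]=x^T/Z$ for any $T$ of size $l=k$, and $\Pr_\mu[\SS\supseteq T']=(x^{T'}/Z)\cdot e_{l-l'}(x_{[n]\setminus T'})$ for any $T'$ of size $l'$, where $Z=\sum_{S\in\U_k}x^S$. The ratio is therefore exactly $(x^{T'}/x^T)\cdot e_{l-l'}(x_{[n]\setminus T'})$, and the only real estimate I need is the Maclaurin-type bound $e_{j}(y)\le(\sum_i y_i)^{j}/j!$. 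Feeding in $\sum_{i\notin T'}x_i\le k=l$ yields the exponentiated factor $l^{l-l'}/(l-l')!$, and extracting the $(l-l')$-th root gives $\alpha=l\cdot((l-l')!)^{-1/(l-l')}\le el/(l-l')$ by Stirling.

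For the asymptotic regime I would copy the proof of Lemma~\ref{lm:asymp-ind} almost verbatim, replacing $d$ by $l$ throughout. Setting $\beta=1+\epsilon/4$, $\xi_i=x_i/\beta$, and viewing the hard-core distribution with $\lambda_i=\xi_i/(1-\xi_i)$ on $\U_{\le k}$ as the product Bernoulli measure $\BB$ conditioned on $|\BB|\le k$, the identical manipulation produces $\Pr_\mu[\SS\supseteq T']/\Pr_\mu[\SS\supseteq T]\le \beta^{l-l'}(x^{T'}/x^T)/\Pr[\sum_{i\notin T}Y_i\le k-l]$, where $Y_i=\mathbbm{1}[i\in\BB]$. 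The same Chernoff estimate used before, but now with threshold $k-l$ rather than $k-d$, forces the denominator above $1-\epsilon/4$ as long as $k=\Omega(l/\epsilon+\epsilon^{-2}\log(1/\epsilon))$. Taking the $(l-l')$-th root then collapses everything into $\alpha\le(1+\epsilon/4)(1-\epsilon/4)^{-1/(l-l')}\le 1+\epsilon$ for $\epsilon\le 2$.

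The main obstacle I expect is purely bookkeeping: unlike the $A$-optimality setting where $l-l'=1$, here $l-l'$ can be arbitrary, so one must be careful not to confuse $\alpha$ with $\alpha^{l-l'}$ when translating between the ratio bounds and the definition of $(l',l)$-wise independence. Once that exponent is tracked consistently, both halves reduce to calculations essentially identical to the $A$-optimality versions, and the final approximation factors $el/(l-l')$ and $1+\epsilon$ drop out by plugging these independence bounds into Theorem~\ref{thm:nearind-to-Aopt-Gen}.
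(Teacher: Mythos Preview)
Your proposal is correct and follows essentially the same approach as the paper: for $k=l$ you compute the marginal ratio as $(x^{T'}/x^T)\cdot e_{l-l'}(x_{[n]\setminus T'})$ and bound the elementary symmetric polynomial by $l^{l-l'}/(l-l')!$ (the paper invokes Maclaurin's inequality for the identical bound), and for the asymptotic regime you reproduce the Bernoulli-conditioning argument of Lemma~\ref{lm:asymp-ind} with $d$ replaced by $l$, exactly as the paper does. Your bookkeeping of the $(l-l')$-th root in the definition of $\alpha$-approximate $(l',l)$-wise independence matches the paper's computation $\alpha^{l-l'}=(1+\epsilon/4)^{l-l'}/(1-\epsilon/4)\le(1+\epsilon)^{l-l'}$.
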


\begin{proof}
We first prove the result for \(k=l\). For all \(T',T\subseteq [n]\) such that \(|T'|=l',|T|=l,\)\begin{align*}
\frac{\Prob{\SS\sim \mu}{S \supseteq T'}}{\Prob{\SS\sim \mu}{S \supseteq T}}=\frac{\sum_{|S|=k,S\supseteq T'}x^S}{\sum_{|S|=k,S\supseteq T}x^S}=\frac{x^{T'}\sum_{L\in\binom{[n]\setminus T'}{k-l'}}x^L}{x^T}\leq \frac{x^{T'}\sum_{L\in\binom{[n] }{k-l'}}x^L}{x^T}
\end{align*}
 We now use Maclaurin's inequality (\cite{lin1993some}) to bound the quantity on the right-hand side
\begin{align}
\sum_{L\in\binom{[n] }{k-l'}}x^L=e_{l-l'}(x)\leq \binom{n}{l-l'}\left(e_1(x)/n\right)^{l-l'}\leq\frac{n^{l-l'}}{(l-l')!}\left(l/n\right)^{l-l'}=\frac{l^{l-l'}}{(l-l')!}
\end{align}
Therefore,
\begin{equation}
\frac{\Prob{\SS\sim \mu}{S \supseteq T'}}{\Prob{\SS\sim \mu}{S \supseteq T}} \leq\frac{l^{l-l'}}{(l-l')!} \frac{x^{T'}}{x^T}
\end{equation}which proves the $(l',l)$-wise independent property of \(\mu\) with required approximation ratio from \eqref{eq:gen-k=d-alpha-value}.

We now prove the result for \( k=\Omega\left(\frac{l}{\epsilon}+\frac{1}{\epsilon^2} \log{\frac{1}{\epsilon}}\right) \). The proof follows similarly from Lemma \ref{lm:asymp-ind} by replacing \(T,R\) with \(T',T\) of sizes \(l',l\) instead of sizes \(d-1,d\). In particular, the equation \eqref{eq:coins-ineq} becomes
  \begin{equation}
    \label{eq:gen-coins-ineq}
     \frac{\Pr[\SS \supseteq T']}{\Pr[\SS \supseteq T]}
     \le
     \left(1+\frac{\epsilon}{4}\right)^{l-l'} \frac{x^{T'}}{x^T \Pr[\sum_{i \not \in  T}{Y_i} \le k-l]}.
  \end{equation}
and the Chernoff's bound \eqref{eq:chernoff} still holds by mechanically replacing \(d,R\) with \(l,T\) respectively. The resulting approximation ratio \(\alpha\) satisfies $$\alpha^{l-l'} = \frac{\left(1+\frac{\epsilon}{4}\right)^{l-l'}}{1-\frac{\epsilon}{4}}\leq (1+\epsilon)^{l-l'}$$
where the inequality holds because \(\epsilon\leq 2\).
\end{proof}
\cut{
\begin{remark}
There is a \(\alpha\)-approximation algorithm  for minimizing \ref{eq:objective-gen} for each arbitrary $k\geq d$, where \begin{align}
\alpha=\frac{k^{l-l'}}{(l-l')!}
\end{align}\label{rem:k-approxFork>d-Gen} \end{remark}
\begin{proof}
Scale the fractional solution from the convex relaxation by \(d/k\) so that the new solution satisfies \(\sum_{i\in[n]} x_i=d\), and apply the \(l\)-proportional volume sampling on the case \(k=d\).
\end{proof}}

\subsubsection{Approximation Guarantee for Generalized Ratio Problem with Repetitions}
We now consider the generalized ratio problem \textit{with repetitions}. The following statement is a generalization of Lemma  \ref{lem:factorOfRoundingButInefficient}.
\begin{theorem}\label{thm:repetitions-Gen}
Given \(V=[v_1 \ v_2 \ldots v_n]\) where \(v_i\in\R^d\), a pair of integers \(0\leq l' \leq l \leq d\), an integer $k\geq l$, and \(1\geq\epsilon>0\), there is an \(\alpha\)-approximation algorithm for minimizing   \(\left(\frac{E_{l'}(V_S V_S^\top)}{E_l(V_S V_S^\top)}\right)^{\frac{1}{l-l'}}\) over subsets \(S\subseteq [n]\) of size \(k\)  with repetitions for \begin{equation}
\alpha\leq\frac{k(1+\epsilon)}{k-l+1}
\end{equation}
\end{theorem}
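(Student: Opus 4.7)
The plan is to generalize Algorithm~\ref{alg:duplicateSampleSamex_i} and its analysis in Lemma~\ref{lem:factorOfRoundingButInefficient} by replacing standard volume sampling with its $l$-volume analogue and invoking Theorem~\ref{thm:nearind-to-Aopt-Gen} in place of Theorem~\ref{thm:nearind-to-Aopt}. Given the fractional optimum $x$ of \eqref{eq:CP-obj-gen}--\eqref{eq:CP-bound-gen}, set $q=2n/(\epsilon k)$, shrink to $x_i'=(1-\epsilon/2)\,x_i$, round each $x_i'$ up to a multiple of $1/q$, and top up by $1/q$ until $\sum x_i'=k$. Create $qx_i'$ copies of $v_i$ to form the expanded ground set $U$ of size $qk$ with vector collection $W$, on which the induced fractional solution is uniform, $y\equiv 1/q$. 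Sample $\SS\subseteq U$ of size $k$ with $\Pr[\SS=S]\propto E_l(W_SW_S^\top)$, i.e.\ $l$-proportional volume sampling against the uniform measure $\mu$ on $k$-subsets of $U$, and return the multiset $X$ on $[n]$ obtained by counting copies.

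The core calculation verifies that this uniform $\mu$ is $\frac{k}{k-l+1}$-approximate $(l',l)$-wise independent with respect to $y$. For any $T',T\subseteq U$ with $|T'|=l'$ and $|T|=l$,
\begin{equation*}
\frac{\Pr_\mu[\SS\supseteq T']}{\Pr_\mu[\SS\supseteq T]}
=\frac{\binom{qk-l'}{k-l'}}{\binom{qk-l}{k-l}}
=\prod_{j=1}^{l-l'}\frac{qk-l+j}{k-l+j}
\le q^{l-l'}\left(\frac{k}{k-l+1}\right)^{l-l'}
=\left(\frac{k}{k-l+1}\right)^{l-l'}\frac{y^{T'}}{y^{T}},
\end{equation*}
since each factor satisfies $(qk-l+j)/(q(k-l+j))\le k/(k-l+1)$ for $j\ge 1$. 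Plugging this bound into Theorem~\ref{thm:nearind-to-Aopt-Gen} yields
\begin{equation*}
\mathbb{E}\left[\left(\frac{E_{l'}(W_\SS W_\SS^\top)}{E_l(W_\SS W_\SS^\top)}\right)^{1/(l-l')}\right]
\le \frac{k}{k-l+1}\left(\frac{E_{l'}(V(x')V(x')^\top)}{E_l(V(x')V(x')^\top)}\right)^{1/(l-l')}.
\end{equation*}

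It remains to bound the right-hand side by $(1+\epsilon)\,\CP$. The shrinkage $x\mapsto(1-\epsilon/2)x$ sends $M(x):=\sum_i x_iv_iv_i^\top$ to $(1-\epsilon/2)M(x)$; by homogeneity $E_{l'}(cM)/E_l(cM)=c^{-(l-l')}E_{l'}(M)/E_l(M)$, so the $(l-l')$-th root scales by exactly $(1-\epsilon/2)^{-1}\le 1+\epsilon$ (valid for $\epsilon\le 1$). The subsequent rounding up only increases $M(x')$ in the Loewner order, which weakly decreases $E_{l'}(M(x'))/E_l(M(x'))$: writing $e_l/e_{l'}=\prod_{k=l'+1}^{l}e_k/e_{k-1}$ and splitting $e_k=e_k^{(i)}+\lambda_i e_{k-1}^{(i)}$ reduces coordinatewise monotonicity of each factor to the log-concavity inequality $(e_{k-1}^{(i)})^2\ge e_k^{(i)}e_{k-2}^{(i)}$ for non-negative reals, which then transfers to matrices via Weyl's eigenvalue inequalities. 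Combining the two, the expected objective is at most $\frac{k(1+\epsilon)}{k-l+1}\,\CP\le\frac{k(1+\epsilon)}{k-l+1}\,\OPT$. The main subtlety beyond Lemma~\ref{lem:factorOfRoundingButInefficient} is this Loewner monotonicity of $E_{l'}/E_l$, which was automatic in the $A$-optimal case since $E_{d-1}/E_d=\tr(M^{-1})$; once it is established, the remainder is a direct $E_l$-analogue of the $A$-optimal proof, and efficient implementation follows from Section~\ref{sec:eff-implementation-gen}.
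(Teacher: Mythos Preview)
Your proof is correct and follows essentially the same route as the paper's: modify Algorithm~\ref{alg:duplicateSampleSamex_i} to sample proportionally to $E_l$, verify that the uniform measure on $k$-subsets of $U$ is $\tfrac{k}{k-l+1}$-approximate $(l',l)$-wise independent via the same binomial-ratio computation, and invoke Theorem~\ref{thm:nearind-to-Aopt-Gen}. You are in fact more careful than the paper on one point: the paper simply asserts (by analogy with Lemma~\ref{lem:factorOfRoundingButInefficient}) that rounding the $x_i'$ up can only decrease the objective, which for the general ratio $E_{l'}/E_l$ requires precisely the Loewner-monotonicity argument you supply via Newton's inequalities and Weyl's theorem.
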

\begin{proof}
We use the algorithm similar to Algorithm~\ref{alg:duplicateSampleSamex_i} except that in step \eqref{step:samplingStepAlgRepetition}, we  sample \(\SS\subseteq U\) of size \(k\) where   $\Pr[\SS=S]\propto E_l(W_{S}W_S^\top)$ in place of   $\Pr[\SS=S]\propto E_l(W_{S}W_S^\top)$. The analysis follows on the same line as in Lemma \ref{lem:factorOfRoundingButInefficient}. In Lemma \ref{lem:factorOfRoundingButInefficient}, it is sufficient to show that the uniform distribution \(\mu\) over subsets \(S\subseteq U\) of size \(k\) is \(\frac{k}{k-d+1}\)-approximate \((d-1,d)\)-wise independent  (as in \eqref{eq:with-rep-near-ind-distribution}). Here, it is sufficient to show that the  uniform distribution \(\mu\) is \(\frac{k}{k-l+1}\)-approximate \((l',l)\)-wise independent. For \(T,T'\subseteq\ [n]\) of size \(l',l\), the calculation of \(\frac{\Prob{\mu}{\altmathcal{S}\supseteq T'}}{\Prob{\mu}{\altmathcal{S}\supseteq T}}\) and \(\frac{y^{T'}}{y^{T}}\) is straightforward 
\begin{equation} \frac{\Prob{\mu}{\altmathcal{S}\supseteq T'}}{\Prob{\mu}{\altmathcal{S}\supseteq T}} = \frac{\binom{qk-l'}{k-l'}/\binom{qk}{k}}{\binom{qk-l}{k-l}/\binom{qk}{k}} \leq \frac{(qk)^{l-l'}(k-l)!}{(k-l')!} \ \text{ and } \ \frac{y^{T'}}{y^{T}}= q^{l-l'} \label{eq:with-rep-direct-calcuation2} \end{equation}
Therefore, \(\mu\) is \(\alpha\)-approximate \((l',l)\)-wise independent for\begin{align*}
\alpha&=\left(\frac{\Prob{\mu}{\altmathcal{S}\supseteq T'}}{\Prob{\mu}{\altmathcal{S}\supseteq T}}\cdot \frac{y^{T}}{y^{T'}} \right)^{\frac{1}{l-l'}} \leq\left(\frac{(qk)^{l-l'}(k-l)!}{(k-l')!} q^{l'-l}\right)^{\frac{1}{l-l'}} \\
&=\frac{k}{\left[(k-l')(k-l'-1)\cdots (k-l+1) \right]^{\frac{1}{l-l'}}}\leq\frac{k}{k-l+1}
\end{align*}
\cut{
 \begin{align*}
\alpha&=\frac{\binom{qk-l'}{k-l'}}{\binom{qk-l}{k-l}}\left(\frac{1}{q}\right)^{l-l'}\\
&=\frac{(qk-l')(qk-l'-1)\cdots(qk-l+1)}{\frac{(k-l')!}{(k-l)!}}\left(\frac{1}{q}\right)^{l-l'} \\
&\leq \frac{(qk)^{l-l'}}{\frac{(k-l')!}{(k-l)!}}\left(\frac{1}{q}\right)^{l-l'} \\
&=\frac{k^{l-l'}}{(k-l')(k-l'-1)\cdots (k-l+1)} \leq \left( \frac{k}{k-l+1}\right)^{l-l'}
\end{align*}}
as we wanted to show.
 \end{proof}
We note that the \(l\)-proportional volume sampling in the proof of Theorem \ref{thm:repetitions-Gen} can be implemented efficiently, and the proof is outlined in Section \ref{sec:eff-implementation-gen}.

\subsubsection{Integrality Gap}
Finally, we state an integrality gap for minimizing generalized ratio objective \(\left(\frac{E_{l'}(V_S V_S^\top)}{E_l(V_S V_S^\top)}\right)^{\frac{1}{l-l'}}\) over subsets \(S\subseteq [n]\) of size \(k\). The integrality gap matches our approximation ratio of our algorithm with repetitions when $k$ is large.
\begin{theorem} \label{thm:IntegralityGapLowerBound-Gen}
For any given positive integers \(k,d \) and a pair of integers $0\leq l'\leq l \leq d$ with \(k>l'\), there exists an instance \(V=[v_1,\ldots,v_n]\in\R^{d \times n}\) to the problem of minimizing \(\left(\frac{E_{l'}(V_S V_S^\top)}{E_l(V_S V_S^\top)}\right)^{\frac{1}{l-l'}}\) over subsets \(S\subseteq [n]\) of size \(k\)  such that 
\[ \OPT \geq \left(\frac{k}{k-l'}-\delta\right)\cdot \CP \]
for all $\delta >0,$ where $\OPT$ denotes the value of the optimal integral solution and $\CP$ denotes the value of the convex program.
\end{theorem}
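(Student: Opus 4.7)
The plan is to construct a family of instances indexed by a scalar $R > 0$ tending to infinity. Fix $n = kd$ and index the vectors by pairs $(j, t) \in [d] \times [k]$, setting $v_{(j,t)} = R e_j$ for $j \le l'$ (``heavy'') and $v_{(j,t)} = e_j$ for $j > l'$ (``light''), where $e_1, \ldots, e_d$ is the standard basis. Since there are $k$ copies of each base vector, the without-repetition problem reduces, both for the convex relaxation and for the integer version, to choosing effective weights $\tilde x \in \R_{\ge 0}^d$ (real) or $\tilde x \in \Z_{\ge 0}^d$ (integer) with $\sum_j \tilde x_j = k$, and the corresponding matrix $V(x) V(x)^\top$ is $\mathrm{diag}(R^2 \tilde x_1, \ldots, R^2 \tilde x_{l'}, \tilde x_{l'+1}, \ldots, \tilde x_d)$. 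Intuitively, the relaxation may take $\tilde x_j \to 0^+$ on each heavy direction while retaining contributions of the same order in $R$ to both $E_{l'}$ and $E_l$, whereas any integer solution must spend at least one unit on each heavy direction used, producing a multiplicative overhead of $k/(k - l')$.

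For the upper bound on $\CP$, evaluate the relaxation at $\tilde x_j = \eta$ for $j \le l'$ and $\tilde x_j = \beta := (k - l'\eta)/(d-l')$ for $j > l'$. Expanding $E_p$ as an elementary symmetric polynomial of the eigenvalues and keeping the leading power of $R$ (the unique term incorporating all $l'$ heavy eigenvalues) yields $E_p = \binom{d-l'}{p-l'}\eta^{l'}R^{2l'}\beta^{p-l'}(1 + o_R(1))$ for every $p \ge l'$. Hence $(E_{l'}/E_l)^{1/(l-l')} \to (1/\beta)\binom{d-l'}{l-l'}^{-1/(l-l')}$ as $R \to \infty$, and sending $\eta \to 0$ gives $\CP \le ((d-l')/k)\binom{d-l'}{l-l'}^{-1/(l-l')}(1 + o_R(1))$.

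For the lower bound on $\OPT$, fix any feasible integer $\tilde x \in \Z_{\ge 0}^d$ with $\sum_j \tilde x_j = k$, let $a$ be the number of heavy indices with $\tilde x_j = 0$, and let $y$ denote the vector of strictly positive light coordinates, of length $m$. A leading-order analysis in $R$ shows that the objective converges to $(e_a(y)/e_{l-l'+a}(y))^{1/(l-l')}$, and integer feasibility implies $\sum y \le k - l' + a$, $m \ge l - l' + a$, and consequently $k - l' \ge l - l'$. For $a = 0$, Maclaurin's inequality together with the monotonicity of $\binom{m}{l-l'}/m^{l-l'}$ in $m$ bounds $e_{l-l'}(y) \le \binom{d-l'}{l-l'}((k-l')/(d-l'))^{l-l'}$, which yields the desired lower bound on the objective. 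For $a \ge 1$, applying Maclaurin's inequality to $(e_p/\binom{m}{p})^{1/p}$ at $p = a$ and $p = l-l'+a$ gives $e_{l-l'+a}(y)/e_a(y) \le (\binom{m-a}{l-l'}/\binom{l-l'+a}{a})(\sum y/m)^{l-l'}$; combined with the same monotonicity, the elementary inequality $\binom{l-l'+a}{a} = \prod_{j=1}^{l-l'}(1 + a/j) \ge (1 + a/(l-l'))^{l-l'}$, and the feasibility bound $k - l' \ge l - l'$, this reduces to the same upper bound. Hence every feasible integer solution satisfies $\OPT \ge ((d-l')/(k-l'))\binom{d-l'}{l-l'}^{-1/(l-l')}(1 - o_R(1))$.

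Dividing the two bounds gives $\OPT/\CP \ge (k/(k-l'))(1 - o_R(1))$, which exceeds $k/(k-l') - \delta$ once $R$ is chosen sufficiently large (depending on $\delta$). The main technical obstacle is the case analysis for $a \ge 1$: one must show via a chain of Maclaurin-type inequalities and the feasibility bound $k - l' \ge l - l'$ that no integer solution that zeroes out a heavy direction beats the natural one placing $\tilde x_j = 1$ on every heavy index and distributing the remaining $k - l'$ units uniformly over the lights.
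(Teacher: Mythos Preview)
Your proof is correct, and the underlying idea---axis-aligned vectors with $l'$ ``heavy'' directions scaled by a large parameter, then a limit analysis---is the same as the paper's. The difference is in the construction: you spread the light vectors over all $d-l'$ remaining coordinate directions, whereas the paper uses only $l$ directions in total (the $d-l$ unused coordinates are simply zero). This seemingly minor choice is what forces your extra case analysis on $a\ge 1$: with $d-l'>l-l'$ light directions available, an integer solution can zero out some heavy directions and compensate with more light ones, and you then need the chain of Maclaurin inequalities, the bound $\binom{l-l'+a}{a}\ge (1+a/(l-l'))^{l-l'}$, and the feasibility inequality $k-l'\ge l-l'$ to show such solutions are never better. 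In the paper's construction there are exactly $l$ directions, so any integer solution with $E_l>0$ must place weight $\ge 1$ on every one of them; the $a\ge 1$ case simply cannot occur, and the OPT lower bound reduces to a one-line AM--GM on $\prod_{i=l'+1}^{l} y_i$ under $\sum_{i=l'+1}^{l} y_i\le k-l'$. Both constructions yield the same ratio $k/(k-l')$, but restricting to $l$ directions buys a substantially shorter argument.
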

This implies that the integrality gap is at least \( \frac{k}{k-l'} \) for minimizing\(\left(\frac{E_{l'}(V_S V_S^\top)}{E_l(V_S V_S^\top)}\right)^{\frac{1}{l-l'}}\)over subsets \(S\subseteq [n]\) of size \(k\). The theorem applies to both with and without repetitions.

\begin{proof}
The instance $V=[v_1,\ldots,v_n]$ will be the same for with and without repetitions. For each \(1\leq i \leq d\), let \(e_i\) denote the unit vector in the direction of axis \(i\). Choose
\begin{displaymath}
v_i=\begin{cases} \sqrt{N}\cdot e_i & \text{for }i=1,\ldots,l' \\
e_i  &  \text{for }i=1,\ldots,l'  \\
\end{cases}
\end{displaymath}
where \(N>0\) is a constant to be chosen later. Set $v_i,i>l$ to be at least $k$ copies of each of these $v_i$ for $i\leq l$, as we can make $n$ as big as needed. Hence, we may assume that we are allowed to pick only $v_i,i\leq l$, but with repetitions.

Let \(S^*\) represent the set of vectors in \(\OPT\) and  \(y_i\) be the number of copies of \(v_i\) in \(S^*\) for \(1\leq i \leq l\). Clearly \(y_i\geq 1\) for all \(i=1,\ldots,l\) (else the objective is unbounded). The eigenvalues of \(V_{S^*} V_{S^*}^\top\) are
\[\lambda(V_{S^*} V_{S^*}^\top) = ( y_1N,y_2N,\ldots,y_{l'}N,y_{l'+1},y_{l'+2},\ldots,y_l,0,\ldots,0)\] Hence, both  \(E_{l'}(V_{S^*} V_{S^*}^\top)=e_{l'}(\lambda)\) and \(E_l(V_{S^*} V_{S^*}^\top)=e_{l}(\lambda)\) are polynomials in variables \(N\) of degree \(l'\).

Now let \(N\rightarrow \infty\).  To compute \((\OPT)^{l-l'}=\frac{E_{l'}(V_{S^*} V_{S^*}^\top)}{E_{l}(V_{S^*} V_{S^*}^\top)}\), we only need to compute the coefficient of the highest degree monomial \(N^{l'}\). The
coefficient of \(N^{l'}\) in \(e_{l'}(\lambda),e_l(\lambda)\) are exactly \(\prod_{i=1}^{l'} y_i,\prod_{i=1}^{l} y_i\), and therefore
\[(\OPT)^{l-l'}=\frac{E_{l'}(V_{S^*} V_{S^*}^\top)}{E_l(V_{S^*} V_{S^*}^\top)}\rightarrow\frac{\prod_{i=1}^{l'} y_i}{\prod_{i=1}^{l} y_i}=\left(\prod_{i=l'+1}^{l} y_i\right)^{-1}\]
Observe that \(\prod_{i=l'+1}^{l} y_i\) is maximized under the budget constraint \(\sum_{i=1}^l y_i=|S^*|=k\) when \(y_j=1\) for \(j=1,\ldots,l'\). Therefore,
\begin{align*}
\prod_{i=l'+1}^{l} y_i \leq \left( \frac{1}{l-l'}\sum_{i=l'+1}^l y_i\right)^{l-l'}= \left( \frac{k-l'}{l-l'}\right)^{l-l'}
\end{align*}
where the inequality is by AM-GM. Hence, \(\OPT\) is lower bounded by a quantity that converges to \( \frac{l-l'}{k-l'}  \) as \(N\rightarrow \infty\).

We now give a valid fractional solution \(x\) to upper bound \(\CP\) for each \(N>0\). Choose
\begin{displaymath}
x_i = \begin{cases}  \frac{k}{\sqrt{N}} & \text{for } i=1,\ldots,l' \\
\frac{k-\frac{kl'}{\sqrt{N}}}{l-l'} & \text{for } i=l'+1,\ldots,l \\
0 & \text{for } i>l
\end{cases}
\end{displaymath}
Then, eigenvalues of \(V(x)V(x)^\top\) are
\begin{align*}
\lambda':=\lambda(V(x)V(x)^\top) &= ( x_1N,x_2N,\ldots,x_{l'}N,x_{l'+1},x_{l'+2},\ldots,x_l,0,\ldots,0) \\
&=(k\sqrt{N},k\sqrt{N},\ldots,k\sqrt{N},x_{l'+1},x_{l'+2},\ldots,x_l,0,\ldots,0)
\end{align*}
Now as \(N\rightarrow \infty\), the dominating terms of \(E_{l'}(V(x)V(x)^\top)=e_{l'}(\lambda')\) is \(\prod_{i=1}^{l'}(k\sqrt{N})=k^{l'}(\sqrt{N})^{l'}\). Also, we have
\begin{align*}E_{l}(V(x)V(x)^\top)&=e_{l}(\lambda')=\prod_{i=1}^{l'}(k\sqrt{N})\prod_{i=l'+1}^l x_i\\
&=k^{l'}\left(\frac{k-\frac{kl'}{\sqrt{N}}}{l-l'}\right)^{l-l'}(\sqrt{N})^{l'}\rightarrow k^{l'}\left(\frac{k}{l-l'}\right)^{l-l'}(\sqrt{N})^{l'}
\end{align*}
Hence, $$\CP \leq \left(\frac{E_{l'}(V(x)V(x)^\top)}{E_{l}(V(x)V(x)^\top)} \right)^{l-l'}\rightarrow \frac{l-l'}{k}$$ Therefore, \(\frac{\OPT}{\CP}\) is lower bounded by a ratio which converges to \(\frac{l-l'}{k-l'}\cdot\frac{k}{l-l'} = \frac{k}{k-l'}\).
\end{proof}

\section{Efficient Algorithms} \label{sec:EfficientAlgorithms}
In this section, we outline efficient sampling algorithms, as well as
deterministic implementations of our rounding algorithms, both for
with and without repetition settings.
\subsection{Efficient Randomized Proportional Volume}\label{Sec:EfficientRandomizedProportionalVolume}

Given a vector \(\lambda\in\R_+^n\), we show that proportional volume sampling with $\mu(S) \propto
\lambda^S$ for $S\in \U$, where \(\U\in\{\U_k,\U_{\leq k}\}\) can be
done in time polynomial in the size \(n\) of the ground set.
We start by stating a lemma which is very useful both for the sampling
algorithms and the deterministic implementations.

\begin{lemma} \label{sumofProductDet}
Let $\lambda\in\R_+^n,v_1,\ldots,v_n\in \R^d$, and $V=[v_1,\ldots,v_n]$. Let $I,J\subseteq [n]$ be disjoint. Let $1\leq k\leq n,0\leq d_0\leq d$. Consider the following function
\[ F(t_1,t_2,t_3)=\det\left(I_n +  t_1 \diag(y) + t_1t_2 \diag(y)^{1/2}VV^\top\diag(y)^{1/2} \right) \]
where $t_1,t_2,t_3\in \R$ are indeterminate, \(I_n\) is the \(n\times n\) identity matrix, and $y\in\R^n$ with
\[ y_i = \begin{cases}
      \lambda_it_3, & \text{if } i\in I \\
      0, & \text{if } i\in J \\
      \lambda_i, & \text{otherwise}
   \end{cases}. \]
Then $F(t_1, t_2, t_3)$ is a polynomial and the quantity
\begin{equation}\label{exp:hardcoreSum}
 \sum_{|S|=k,I\subseteq S,J\cap S=\emptyset}\lambda^S
 \sum_{|T|=d_0,T\subseteq S}\det(V_T^\top V_T) 
\end{equation}
is the coefficient of the monomial
$t_1^{k}t_2^{d_0}t_3^{|I|}$. 
Moreover, this quantity can be computed in \(O\pr{n^3d_0k |I|\cdot\log(d_0k|I|)}\) number of arithmetic operations. 
\end{lemma}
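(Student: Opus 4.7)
The plan is to expand $F$ as an explicit polynomial in $t_1,t_2,t_3$, match coefficients, and then extract the desired coefficient by polynomial interpolation on a grid of evaluation points. The first step uses the classical identity
\[
\det(I_n + M) \;=\; \sum_{S \subseteq [n]} \det(M_S),
\]
where $M_S$ is the principal $|S| \times |S|$ submatrix of $M$. Applied to $M = t_1 \diag(y) + t_1 t_2 \diag(y)^{1/2} V^\top V \diag(y)^{1/2}$ (with $V^\top V$ interpreted so that the matrix inside the determinant is $n\times n$), we get $M_S = t_1 \diag(y_S)^{1/2}\!\left(I_{|S|} + t_2 V_S^\top V_S\right)\!\diag(y_S)^{1/2}$, so $\det(M_S) = t_1^{|S|} y^S \det(I_{|S|} + t_2 V_S^\top V_S)$. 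Applying the same identity a second time gives $\det(I_{|S|} + t_2 V_S^\top V_S) = \sum_{T \subseteq S} t_2^{|T|} \det(V_T^\top V_T)$, using $(V_S^\top V_S)_T = V_T^\top V_T$ for $T \subseteq S$.

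Combining these expansions,
\[
F(t_1,t_2,t_3) \;=\; \sum_{S\subseteq[n]} \sum_{T \subseteq S} t_1^{|S|}\, t_2^{|T|}\, y^S\, \det(V_T^\top V_T),
\]
which is manifestly polynomial in the three variables. Now I substitute the definition of $y$: $y_i = \lambda_i t_3$ on $I$, $0$ on $J$, and $\lambda_i$ elsewhere. If $S \cap J \neq \emptyset$ then $y^S = 0$ and the term vanishes; otherwise $y^S = t_3^{|S\cap I|}\lambda^S$. Reading off the coefficient of the monomial $t_1^{k} t_2^{d_0} t_3^{|I|}$ forces $|S|=k$, $|T|=d_0$, $S\cap J=\emptyset$, and $|S\cap I|=|I|$, the last of which (since $|S\cap I|\le |I|$) is equivalent to $I\subseteq S$. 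These are exactly the index constraints in \eqref{exp:hardcoreSum}, so that sum is the coefficient in question.

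For the computational claim, observe that for the purpose of reading off the $t_1^k t_2^{d_0} t_3^{|I|}$ coefficient it suffices to know $F$ modulo $t_1^{k+1}, t_2^{d_0+1}, t_3^{|I|+1}$, so we interpolate on a grid of $(k+1)(d_0+1)(|I|+1)$ points $(\alpha_a, \beta_b, \gamma_c)$. Each evaluation of $F$ is a single $n \times n$ determinant, costing $O(n^3)$ arithmetic operations. Multivariate Lagrange interpolation on this product grid, performed one variable at a time, recovers the desired coefficient in a lower-order number of additional operations; combined, the total cost is $O(n^3 \cdot k\, d_0\, |I|\cdot \log(k\, d_0\, |I|))$ as claimed, where the logarithmic factor comes from the inverse Vandermonde step in interpolation.

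The main routine work is checking the first expansion step carefully (dimension matching in the $\diag(y)^{1/2}$ conjugation and verifying $(V^\top V)_S = V_S^\top V_S$); once those identities are in place, the rest of the argument is bookkeeping on monomials and a standard interpolation cost estimate. There is no conceptual obstacle beyond ensuring that the coefficient-extraction grid is large enough in each of the three variables, which is exactly governed by the individual partial degrees $k$, $d_0$, and $|I|$.
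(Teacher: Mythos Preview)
Your expansion of $F$ is correct and is essentially the same argument as the paper's, just organized as one direct double application of the principal-minor identity $\det(I_n+M)=\sum_{S}\det(M_{S,S})$ rather than being split through an intermediate claim. In particular, your observation that conjugation by the diagonal $\diag(y)^{1/2}$ commutes with taking principal submatrices is exactly what makes the factorization $\det(M_S)=t_1^{|S|}y^S\det(I_{|S|}+t_2 V_S^\top V_S)$ go through, and the coefficient matching is right.

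The gap is in the running-time argument. Your claim that it suffices to evaluate $F$ on a grid of size $(k+1)(d_0+1)(|I|+1)$ and then interpolate is incorrect: interpolation recovers the unique polynomial of the prescribed degree bounds agreeing with $F$ at those points, not the truncation $F \bmod (t_1^{k+1},t_2^{d_0+1},t_3^{|I|+1})$. But $F$ has degree in $t_1$ as large as $n-|J|$ (all sets $S$ disjoint from $J$ contribute), and degree in $t_2$ up to $d$, both of which can exceed $k$ and $d_0$ respectively. So with only $k+1$ nodes in $t_1$ you cannot isolate the coefficient of $t_1^k$; the higher-degree terms alias onto it. A correct interpolation scheme would need roughly $(n+1)(d+1)(|I|+1)$ evaluations, giving $O(n^4 d\,|I|)$ rather than the stated bound.

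The paper obtains the stated $O\bigl(n^3 d_0 k |I|\log(d_0 k|I|)\bigr)$ by a different route: it computes the $n\times n$ determinant \emph{symbolically}, carrying each matrix entry as a polynomial in $t_1,t_2,t_3$ and truncating every intermediate product modulo $(t_1^{k+1},t_2^{d_0+1},t_3^{|I|+1})$. Since truncation is a ring homomorphism, this yields the correct truncated determinant; each of the $O(n^3)$ ring operations is then a multiplication of polynomials with $O(d_0 k|I|)$ monomials, done in $O(d_0 k|I|\log(d_0 k|I|))$ via FFT after encoding the three variables in one. If you replace your interpolation paragraph with this truncated symbolic computation, your proof is complete and matches the paper.
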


\begin{proof}
  Let us first fix some $S \subseteq [n]$. Then we have
  \[
  \sum_{|T|=d_0,T\subseteq S}\det(V_T^\top V_T) 
  = E_{d_0}(V_S^\top V_S)
  = 
  [t_2^{d_0}] \det(I_S + t_2 V_S V_S^\top),
  \]
  where the notation $[t_2^{d_0}]p(t_2)$ denotes the coefficient of
  $t^{d_0}$ in the polynomial $p(t_2) = \det(I_S + t_2 V_S
  V_S^\top)$. The first equality is just Cauchy-Binet, and
  the second one is standard and follows from the Leibniz
  formula for the determinant. Therefore, \eqref{exp:hardcoreSum}
  equals
  \[
  [t_2^{d_0}] \sum_{|S|=k,I\subseteq S,J\cap S=\emptyset}\lambda^S
  \det(I_S + t_2 V_S V_S^\top).
  \]

  To complete the proof, we establish the following claim.

  \begin{claim}\label{cl:cond-sum}
    Let $L$ be an $n\times n$ matrix, and let $\lambda, I, J, k, y$ be as in the
    statement of the Lemma. Then,
    \begin{align*}
    \sum_{|S|=k,I\subseteq S,J\cap S=\emptyset}\lambda^S \det(L_{S,S})
    &= 
    [t_3^{|I|}] E_k\left(\diag(y)^{1/2}L\ \diag(y)^{1/2}\right)\\
    &= 
    [t_1^kt_3^{|I|}] \det\left(I_n + t_1 \diag(y)^{1/2}L\ \diag(y)^{1/2}\right).
    \end{align*}
  \end{claim}
  \begin{proof}
    By Cauchy-Binet,
    \begin{align*}
      E_k\left(\diag(y)^{1/2}L\ \diag(y)^{1/2}\right) &=
      \sum_{|S| = k}y^S \det(L_{S,S})\\
      &= \sum_{|S| = k, J \cap S = \emptyset}t_3^{|S \cap I|}\lambda^S \det(L_{S,S}).
    \end{align*}
    The first equality follows. The second is, again, a consequence of
    the Leibniz formula for the determinant. 
  \end{proof}
  Plugging in $L = I_n + t_2 VV^\top$ in Claim~\ref{cl:cond-sum} gives
  that \eqref{exp:hardcoreSum} equals
  \begin{multline*}
  [t_1^kt_2^{d_0}t_3^{|I|}] 
  \det\left(I_n + t_1 \diag(y)^{1/2}(I_n + t_2 VV^\top) \diag(y)^{1/2}\right)\\
  =   [t_1^kt_2^{d_0}t_3^{|I|}] 
  \det\left(I_n +  t_1 \diag(y) + t_1t_2 \diag(y)^{1/2}VV^\top\diag(y)^{1/2}\right).
  \end{multline*}
  This completes the proof. For the running time, the standard computation time of matrix multiplication and determinant of \(n\times n\) matrices is \(O(n^3)\) entry-wise arithmetic operations. We need to keep all monomials in the form \(t_1^{a}t_2^{b}t_3^{c}\) where \(a\leq k, b \leq d_0, c \leq |I|\), of which there are \(O(d_0k|I|)\).
By representing multivariate monomials in single variable (\cite{pan1994simple}), we may use Fast Fourier Transform to do one polynomial multiplication of entries of the matrix in \(O\pr{d_0k|I|\cdot\log(d_0k|I|)}\) number of arithmetic operations. This gives the total running time of \(O\pr{n^3d_0k|I|\cdot\log(d_0k|I|)}\).
\end{proof}

\cut{
\begin{lemma} \label{sumofProductDet-old}
Let $\lambda\in\R_+^n,v_1,\ldots,v_n\in \R^d$, and $V=[v_1,\ldots,v_n]$. Let $I,J\subseteq [n]$ be disjoint. Let $1\leq k\leq n,0\leq d_0\leq d$. Consider the following function
\[ F(t_1,t_2,t_3)=\det\left(I_n+t_1\diag(y)^\frac{1}{2}V^\top V\diag(y)^\frac{1}{2}+\diag(y) \right) \]
Where $t_1,t_2,t_3\in \R$ are indeterminate, \(I_n\) is the \(n\times n\) identity matrix, and $y\in\R^n$ with
\[ y_i = \begin{cases}
      \lambda_it_3, & \text{if } i\in I \\
      0, & \text{if } i\in J \\
      \lambda_it_2, & \text{otherwise}
   \end{cases}. \]
Then $F(t_1, t_2, t_3)$ is a polynomial and the quantity
\begin{equation}
 \sum_{|S|=k,I\subseteq S,J\cap S=\emptyset}\lambda^S\sum_{|T|=d_0,T\subseteq S}\det(V_T^\top V_T) \label{exp:hardcoreSum}
\end{equation}
is the coefficient of the monomial
$t_1^{d_0}t_2^{k-|I|}t_3^{|I|}$. Moreover, this quantity can be computed in \(\poly(n,d)\) time. 
\end{lemma}
Note that the Lemma \ref{sumofProductDet} is a generalization of Proposition 3 in \cite{singh2018approximate}, and the proof is  similar.

\begin{proof}
First of all, we may change the order of summation of \eqref{exp:hardcoreSum}:
\begin{align}
\sum_{|S|=k,I\subseteq S,J\cap S=\emptyset}\lambda^S\sum_{|T|=d_0,T\subseteq S}\det(V_T^\top V_T) &= \sum_{T\in \binom{[n]\setminus J}{d_0}} \det(V_T^\top V_T)\sum_{S\in\binom{n\setminus J}{k},S\supseteq I\cup T}\lambda^S \nonumber \\
&= \sum_{T\in \binom{[n]\setminus J}{d_0}} \det(V_T^\top V_T) \lambda^{(I\cup T)}\sum_{L\in\binom{n\setminus (I\cup J\cup T)}{k-|I\cup T|}}\lambda^L, \label{exp:hardcoreSumAfterInterchange}
\end{align}
where we use the convention that when $k = |I\cup T|$,
$\sum_{L\in\binom{n\setminus (I\cup J\cup T)}{k-|I\cup T|}}\lambda^L =1$,
and when $k < |I\cup T|$, then
$\sum_{L\in\binom{n\setminus (I\cup J\cup T)}{k-|I\cup T|}}\lambda^L =0$.

Now we consider \(F(t_1,t_2,t_3)\). Let \(Y=\diag(y)\). Rewrite $F(t_1,t_2,t_3)$ as
        \begin{align*}
        F(t_1,t_2,t_3)&=\det\left(I_n+Y\right)\det\left(I_n+ t_1(I_n+Y)^{-\frac{1}{2}} Y^{\frac{1}{2}}V^{\top}VY^{\frac{1}{2}}(I_n+Y)^{-\frac{1}{2}} \right) \\
        &=\prod_{i\in I}\left(1+\lambda_i t_3\right)\prod_{i\in [n]\setminus
        (I\cup J)}\left(1+\lambda _it_2\right)\det\left(I_n+t_1B^{\top}B\right)
        \end{align*}
        where \(B=VY^{\frac{1}{2}}(I_n+Y)^{-\frac{1}{2}}\). Hence, the $i$th column of matrix $B$ is
        \begin{align*}
        B_i=\left\{\begin{array}{cc}\sqrt{\frac{\lambda_i t_3}{1+\lambda_i t_3}}v_i, &\text{ if }i\in I\\
        0, &\text{ if }i\in J\\
        \sqrt{\frac{\lambda _it_2}{1+\lambda _it_2}}v_i, &\text{otherwise}
        \end{array}\right..
    \end{align*}
    Using the Leibniz formula for the determinant, we see that the
    coefficient of $t_1^{d_0}$ in $\det\left(I_n+t_1B^{\top}B\right)$ is
\begin{align*}
  \sum_{T\in {[n]\choose d_0}}\det\left((B^{\top}B)_{T,T}\right)
        &=\sum_{T\in {[n]\choose d_0}}\det\left(B_T^{\top}B_T\right) \\
        &=\sum_{T\in {[n]\setminus J\choose d_0}}\det\left(B_T^{\top}B_T\right)\\
        &=\sum_{T\in {[n]\setminus J\choose d_0}} \prod_{j\in T\setminus I}\frac{\lambda _jt_2}{1+\lambda _jt_2}\prod_{j\in T\cap I}\frac{\lambda_j t_3}{1+\lambda_j t_3}\det\left(V_T^{\top}V_T\right)
        \end{align*}
        where $(B^{\top}B)_{T,T}$ is the submatrix of $B^\top B$ with
        rows and columns indexed by elements of $T$, and the final
        equality is by  Lemma \ref{lemma:LinearlityofDeterminant} and the definition of matrix $B$.

Therefore, the coefficient of $t_1^{d_0}t_2^{k-|I|}t_3^{|I|}$ in $F(t_1,t_2,t_3)$ is equal to the one in

        \begin{align}
&\prod_{i\in I}\left(1+\lambda_i t_3\right)\prod_{i\in [n]\setminus (I\cup J)}\left(1+\lambda _it_2\right)\sum_{T\in {[n]\setminus J\choose d_0}}t_1^{d_0} \prod_{j\in T\setminus I}\frac{\lambda _jt_2}{1+\lambda _jt_2}\prod_{j\in T\cap I}\frac{\lambda_j t_3}{1+\lambda_j t_3}\det\left(V_T^{\top}V_T\right)\nonumber  \\
        &=t_1^{d_0}\prod_{i\in [n]\setminus (I\cup J)}\left(1+\lambda _it_2\right)\sum_{T\in {[n]\setminus J\choose d_0}} \prod_{j\in T\setminus I}\frac{\lambda _jt_2}{1+\lambda _jt_2}\prod_{j\in T\cap I}\lambda_j t_3\prod_{i\in I\setminus T}\left(1+\lambda_i t_3\right)\det\left(V_T^{\top}V_T\right) \label{exp:F(t1,t2,t3)}
        \end{align}
where the equality is by distributing \(\prod_{i\in
  I}\left(1+\lambda_i t_3\right)\) into the sum. 
Observe that the coefficient of $t_3^{|I|}$ in 
$\prod_{j\in T\cap I}\lambda_j t_3\prod_{i\in I\setminus  T}\left(1+\lambda_i t_3\right)$
on the right hand side of \eqref{exp:F(t1,t2,t3)} is $\lambda^I$, and,
therefore, the coefficient of $t_1^{d_0}t_2^{k-|I|}t_3^{|I|}$ in $F(t_1,t_2,t_3)$ is further equal to the one in
        \begin{align}
&\lambda^It_1^{d_0}t_3^{|I|}\prod_{i\in [n]\setminus (I\cup J)}\left(1+\lambda _it_2\right)\sum_{T\in {[n]\setminus J\choose d_0}} \prod_{j\in T\setminus I}\frac{\lambda _jt_2}{1+\lambda _jt_2}\det\left(V_T^{\top}V_T\right)\nonumber  \\
        &=\lambda^It_1^{d_0}t_3^{|I|}\sum_{T\in {[n]\setminus J\choose d_0}} \det\left(V_T^{\top}V_T\right)\prod_{j\in T\setminus I}\lambda _jt_2\prod_{i\in [n]\setminus (I\cup J\cup T)}\left(1+\lambda _it_2\right) \label{exp:F(t1,t2,t3)Second}
        \end{align}
where the equality is by distributing \(\prod_{i\in [n]\setminus
  (I\cup J)}\left(1+\lambda_i t_2\right)\) into the sum. Finally, the
coefficient of \(t_2^{k-|I|}\) in \(\prod_{j\in T\setminus I}\lambda
_jt_2\prod_{i\in [n]\setminus (I\cup J\cup T)}\left(1+\lambda
  _it_2\right)\) on the right hand side of
\eqref{exp:F(t1,t2,t3)Second} is equal to

\[ \lambda^{(T\setminus I)}\sum_{L\in \binom{[n]\setminus(I \cup J\cup T)}{k-|I|-|T\setminus I|}} \lambda^L =\lambda^{(T\setminus I)}\sum_{L\in \binom{[n]\setminus(I \cup J\cup T)}{k-|I\cup T| }} \lambda^L   .\]
Therefore, the coefficient of $t_1^{d_0}t_2^{k-|I|}t_3^{|I|}$ in $F(t_1,t_2,t_3)$ is
\begin{align*}
\lambda^I\sum_{T\in {[n]\setminus J\choose d_0}} \det\left(V_T^{\top}V_T\right)\lambda^{(T\setminus I)}\sum_{L\in \binom{[n]\setminus(I \cup J\cup T)}{k-|I\cup T| }} \lambda^L=\sum_{T\in {[n]\setminus J\choose d_0}} \det\left(V_T^{\top}V_T\right)\lambda^{(I\cup T)}\sum_{L\in \binom{[n]\setminus(I \cup J\cup T)}{k-|I\cup T| }} \lambda^L
\end{align*}
which is exactly  \eqref{exp:hardcoreSumAfterInterchange}.

The coefficient can be computed in polynomial time by expanding $F$
using a symbolic determinant computation. 
\end{proof}}

Using the above lemma, we now prove the following theorem that will directly imply Lemma~\ref{lem:sample}.

\begin{theorem}
\label{thm:effic-randomized-hardcore}
Let $\lambda \in\R_+^n,v_1,\ldots,v_n\in \R^d,1\leq k\leq n$, \(\U\in\{ \U_k,\U_{\leq k} \} \), and $V=[v_1,\ldots,v_n]$. Then there is a randomized algorithm $\mathcal{A}$  which outputs $\altmathcal{S}\in\U$ such that
\[ \Prob{\mathcal{\SS\sim A}}{\altmathcal{S}=S}=\frac{\lambda^S \det(V_SV_S^\top)}{\sum_{S'\in \U}\lambda^{S'}\det(V_{S'}V_{S'}^\top)}=:\mu'(S) \]
That is, the algorithm correctly implements proportional volume sampling  \(\mu'\)
with hard-core measure \(\mu\) on \(\U\) with parameter \(\lambda \).
Moreover, the algorithm runs in \(O\pr{n^4dk^2\log(dk)}\) number of arithmetic operations. \end{theorem}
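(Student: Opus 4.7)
\smallskip\noindent\textbf{Proof proposal.} The plan is to sample $\SS$ one coordinate at a time using the chain rule of conditional probabilities. We maintain two disjoint sets $I, J \subseteq [n]$ representing the indices already decided to be in $\SS$ and already excluded from $\SS$, respectively; initially $I = J = \emptyset$. For $i = 1, 2, \ldots, n$ we compute
\[
p_i \;=\; \Prob{\SS \sim \mu'}{i \in \SS \;\Big|\; I \subseteq \SS,\; J \cap \SS = \emptyset},
\]
then with probability $p_i$ move $i$ into $I$ and otherwise move $i$ into $J$. At the end we output $S := I$. By the chain rule the resulting set is distributed exactly according to $\mu'$, so the task reduces to computing each $p_i$ in polynomial time.

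To compute $p_i$, define
\[
Q(I, J) \;=\; \sum_{S \in \U,\, I \subseteq S,\, J \cap S = \emptyset} \lambda^S \det(V_S V_S^\top),
\]
so that $p_i = Q(I \cup \{i\}, J) / \bigl(Q(I \cup \{i\}, J) + Q(I, J \cup \{i\})\bigr)$. Applying the Cauchy--Binet identity \eqref{eq:cauchy-binet} coordinate-wise, $\det(V_S V_S^\top) = \sum_{T \subseteq S,\, |T|=d} \det(V_T^\top V_T)$, and thus
\[
Q(I, J) \;=\; \sum_{S \in \U,\, I \subseteq S,\, J \cap S = \emptyset} \lambda^S \sum_{|T|=d,\, T \subseteq S} \det(V_T^\top V_T).
\]
When $\U = \U_k$, this is precisely expression \eqref{exp:hardcoreSum} with $d_0 = d$ and fixed cardinality $k$, hence by Lemma~\ref{sumofProductDet} it equals the coefficient of $t_1^k t_2^d t_3^{|I|}$ in the polynomial $F(t_1,t_2,t_3)$ constructed there. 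When $\U = \U_{\leq k}$, we simply sum the coefficients of $t_1^{k'} t_2^d t_3^{|I|}$ in $F$ over $k' = 0, 1, \ldots, k$; note that the terms with $|S| < d$ contribute zero because $\det(V_S V_S^\top) = 0$ in that regime, which is consistent with Cauchy--Binet.

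For the running time, a single invocation of Lemma~\ref{sumofProductDet} with the given $I, J$ returns all needed coefficients of $F$ in $O\bigl(n^3 d k |I| \log(dk|I|)\bigr)$ arithmetic operations, and since $|I| \leq k$ throughout, each evaluation of $Q(\cdot, \cdot)$ costs $O\bigl(n^3 d k^2 \log(dk)\bigr)$. We perform at most two such evaluations for each of the $n$ iterations (and in fact only one new one, since $Q(I, J)$ from the previous round equals $Q(I \cup \{i\}, J) + Q(I, J \cup \{i\})$), yielding the claimed total of $O\bigl(n^4 d k^2 \log(dk)\bigr)$ arithmetic operations.

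The main obstacle is not any single step but rather the clean bookkeeping: one must verify that the generating polynomial $F$ in Lemma~\ref{sumofProductDet} correctly encodes $Q(I,J)$ under both cardinality regimes ($\U_k$ versus $\U_{\leq k}$), and that extracting the relevant coefficients after incrementing $I$ or $J$ can reuse the same symbolic determinant computation rather than incurring extra polynomial factors. Since all the analytical work has been absorbed into Lemma~\ref{sumofProductDet}, the remainder of the argument is essentially the conditional-sampling template described above.
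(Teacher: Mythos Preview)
Your proposal is correct and follows essentially the same approach as the paper: sequential conditional sampling over $i=1,\ldots,n$, expressing the required conditional probabilities as ratios of sums of the form $\sum_{S\in\U,\,I\subseteq S,\,J\cap S=\emptyset}\lambda^S\det(V_SV_S^\top)$, expanding $\det(V_SV_S^\top)$ via Cauchy--Binet, and invoking Lemma~\ref{sumofProductDet} with $d_0=d$ (summing over $k'\le k$ for the $\U_{\le k}$ case). The running-time accounting also matches.
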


\begin{observation} \label{obs:n=k+d^2}
\cite{wang2016computationally} shows that we may assume that the support of an extreme fractional solution of convex relaxation has size at most \(k+d^2\). Thus, the runtime of proportional volume sampling is \(O\pr{(k+d^2)^4dk^2\log(dk)}\). While the degrees in \(d,k\) are not small, this runtime is independent of \(n\).
\end{observation}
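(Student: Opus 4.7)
The plan is to reduce the ground set from size $n$ to the support of an extreme optimal solution of the convex relaxation, then invoke Theorem~\ref{thm:effic-randomized-hardcore} with this reduced ground set. First I would recall that the convex programs \eqref{eq:CP-obj}--\eqref{eq:CP-bounds} and their counterpart for the $k\le d$ regime are convex optimization problems over polytopes, so their optima are attained at extreme points of the feasible region. The result of \cite{wang2016computationally} (a consequence of standard rank arguments for extreme points of spectrahedra defined by a convex objective plus $O(d^2)$ PSD constraints together with the box/sum constraints) shows that any such extreme optimal $x^\star$ has $|\text{supp}(x^\star)| \le k + d^2$.

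Next I would verify that we lose nothing by restricting the sampling to the support. The hard-core distributions used in Theorems \ref{thm:k=dAndasymptotic} and \ref{lm:asymp-ind} have parameter $\lambda_i = x^\star_i$ or $\lambda_i = x^\star_i/(1+\epsilon/4 - x^\star_i)$, both of which vanish whenever $x^\star_i = 0$. Consequently, for every $S \not\subseteq \text{supp}(x^\star)$, the mass $\mu(S)\det(V_S V_S^\top) \propto \lambda^S \det(V_S V_S^\top) = 0$, so the proportional volume distribution $\mu'$ is supported entirely on subsets of $\text{supp}(x^\star)$. Therefore, one may discard all columns $v_i$ with $x^\star_i = 0$ before invoking the sampler, leaving an instance with a ground set of size $n' := |\text{supp}(x^\star)| \le k + d^2$.

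Finally, I would plug $n \leftarrow n'$ into the bound of Theorem~\ref{thm:effic-randomized-hardcore}: since that theorem runs in $O(n^4 d k^2 \log(dk))$ arithmetic operations on an $n$-column instance, applied to the reduced instance it uses $O((n')^4 d k^2 \log(dk)) = O((k+d^2)^4 d k^2 \log(dk))$ operations, which is the claimed bound and is independent of $n$. The only nontrivial step is the appeal to \cite{wang2016computationally} for the extreme-point support bound; everything else is a routine preprocessing and substitution, so I would keep the argument brief and cite the support bound as a black box.
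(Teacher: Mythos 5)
Your proposal is correct and matches the paper's (implicit) reasoning: cite \cite{wang2016computationally} for the extreme-point support bound of $k+d^2$, restrict to the support, and plug $n' \le k+d^2$ into the runtime of Theorem~\ref{thm:effic-randomized-hardcore}. The one thing you add that the paper leaves unsaid is the explicit check that the hard-core parameter $\lambda$ vanishes off the support, so the proportional volume measure already puts zero mass on any set outside $\text{supp}(x^\star)$ and the restriction is lossless; this is a worthwhile sanity check but does not change the argument.
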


\begin{observation}
It is true in theory and observed in practice that solving the continuous relaxation rather than the rounding algorithm is a bottleneck in computation time, as discussed in \cite{allen2017near}. In particular, solving the continuous relaxation of \textit{A}-optimal design takes \(O\pr{n^{2+\omega}\log n}\) number of iterations by standard ellipsoid method and \(O\pr{(n+d^2)^{3.5}}\) number of iterations by SDP, where \(O(n^\omega)\) denotes the runtime of \(n \times\ n\) matrix multiplication. In most applications where \(n >> k\), these running times dominates one of proportional volume sampling.

\end{observation}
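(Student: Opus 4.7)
The plan is to justify the two iteration-complexity claims by invoking standard convex/SDP machinery, and then to compare the resulting totals with the rounding cost given by Theorem~\ref{thm:effic-randomized-hardcore} together with Observation~\ref{obs:n=k+d^2}. I would not attempt to re-derive the ellipsoid or interior-point convergence rates from scratch; instead I would isolate the two ingredients each bound needs: a separation/subgradient oracle for the feasible set, and a first- (or second-) order oracle for the objective $f(x)=\tr\left(\sum_i x_i v_i v_i^\top\right)^{-1}$.

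First I would address the ellipsoid bound. The feasible region $\{x\in[0,1]^n:\mathbf{1}^\top x=k\}$ admits a trivial separation oracle in $O(n)$ time, and the objective is convex by the discussion in Section~\ref{sec:approx-independent}. A subgradient of $f$ at $x$ is $g_i=-v_i^\top M(x)^{-2}v_i$, where $M(x)=\sum_j x_j v_jv_j^\top$; computing the full vector $g$ requires one inversion of the $d\times d$ matrix $M(x)$ and $n$ quadratic forms, giving cost $O(n\cdot d^2+d^\omega)=O(n^\omega)$ per oracle call once $n\geq d$. The classical Gr\"otschel--Lov\'asz--Schrijver analysis then yields an $\epsilon$-optimal solution in $O(n^2\log(R/\epsilon))$ oracle calls on a bounded domain, and with $\epsilon$ taken polynomially small (as standard for polynomial-time convex optimization) this is $O(n^2\log n)$ iterations, each costing $O(n^\omega)$, for the claimed total of $O(n^{2+\omega}\log n)$.

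Next I would address the SDP bound by exhibiting an explicit SDP reformulation. Introducing auxiliary scalars $t_1,\ldots,t_d$ and using the Schur-complement trick, the relaxation \eqref{eq:CP-obj}--\eqref{eq:CP-bounds} is equivalent to
\begin{equation*}
\min \sum_{i=1}^d t_i \quad\text{s.t.}\quad
\begin{pmatrix} M(x) & e_i \\ e_i^\top & t_i \end{pmatrix}\succeq 0 \ \forall i,\qquad \sum_i x_i=k,\quad 0\le x_i\le 1.
\end{equation*}
This is an SDP with $n+d$ scalar variables and $d$ blocks of size $(d+1)\times(d+1)$; after the standard reduction to an extreme optimal solution (Observation~\ref{obs:n=k+d^2}), the number of nontrivially nonzero $x_i$ is $O(k+d^2)$, so the effective variable count is $O(n+d^2)$. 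A primal--dual interior-point method (e.g.~Nesterov--Nemirovski) then attains the optimum in $O((n+d^2)^{1/2}\log(1/\epsilon))$ barrier iterations, each iteration solving a Newton system whose arithmetic cost is polynomial in the same size; bounding iteration count plus per-iteration work crudely by the same quantity yields the claimed $O((n+d^2)^{3.5})$ overall bound.

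Finally I would close with the qualitative comparison. By Observation~\ref{obs:n=k+d^2} and Theorem~\ref{thm:effic-randomized-hardcore}, the rounding step runs in $O((k+d^2)^4 dk^2\log(dk))$ arithmetic operations, a quantity that is \emph{independent of} $n$. Both of the two relaxation-solving bounds above, by contrast, grow as a polynomial of degree at least $2+\omega$ in $n$. Thus whenever $n\gg k$ the relaxation strictly dominates the rounding asymptotically, which is exactly the content of the observation. The main obstacle here is not a mathematical one but a bookkeeping one: one must be careful that the per-iteration cost of each method is properly accounted for and that the support-reduction of \cite{wang2016computationally} is applied only to the sampling stage (since the relaxation must still be solved on the full $n$-dimensional program). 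Once these accountings are done consistently, the two bounds and the comparison follow from textbook results.
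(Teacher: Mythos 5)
The paper never proves this Observation; it is stated as a remark with a pointer to \cite{allen2017near} and the standard literature on ellipsoid and interior-point complexity, so there is no in-paper proof to compare against. Your proposal supplies a justification where the paper supplies only a citation, and the two ingredients you isolate --- a separation/subgradient oracle for the feasible region and a comparison with the rounding cost from Theorem~\ref{thm:effic-randomized-hardcore} plus Observation~\ref{obs:n=k+d^2} --- are indeed the right ones, and your Schur-complement SDP reformulation is correct. So the blind attempt is in the spirit of what the observation intends.

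A few points deserve care. First, the paper's phrase ``number of iterations'' is a misstatement: $O(n^{2+\omega}\log n)$ and $O((n+d^2)^{3.5})$ can only be total arithmetic cost, which is exactly what your derivation produces; you should note this discrepancy rather than try to match the word ``iterations.'' Second, your claim $O(nd^2 + d^\omega) = O(n^\omega)$ for the cost of a subgradient evaluation is only valid in the range $d \lesssim n^{(\omega-1)/2}$ (and you also suppress the $O(n^2)$ per-iteration cost of maintaining the ellipsoid itself); this is harmless in the regime $n \gg d$ the observation is about, but the equality should not be stated unconditionally. Third, your invocation of Observation~\ref{obs:n=k+d^2} to justify ``effective variable count $O(n+d^2)$'' in the SDP is misplaced: the $(n+d^2)$ in the bound comes simply from $n$ scalar variables plus a $d\times d$ PSD block in the standard-form SDP, not from support reduction, which (as you yourself correctly note at the end) can only be applied after the relaxation is solved and hence cannot shrink the relaxation's dimension. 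With those corrections the argument is a fair elaboration of what the paper leaves implicit, though it remains heuristic rather than a theorem, just as the original text is.
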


\begin{proof}
We can sample by starting with an empty set \(S=\emptyset\). Then, in each step $i=1,2,\ldots,n$, the algorithm decides with the correct probability
\[ \Prob{\SS\sim \mu'}{i\in\altmathcal{S}|I\subseteq \SS, J\cap \SS=\emptyset} \]
whether to include $i$ in $S$ or not, given that we already know that we have included $I$ in $S$ and excluded $J$ from \(S\) from previous steps $1,2,\ldots,i-1$.
Let \(I'= I \cup \{i\}\). This probability equals to
\begin{align*}
\Prob{\SS\sim \mu'}{i\in\altmathcal{S}|I\subseteq \SS, J\cap \SS=\emptyset}  &= \frac{\Prob{\SS\sim\mu'}{I'\subseteq \SS, J\cap \SS=\emptyset} }{\Prob{\SS\sim\mu'}{I\subseteq \SS, J\cap \SS=\emptyset} }  \\
&= \frac{\sum_{S \in \U,I'\subseteq S,J\cap S=\emptyset}\lambda^{S} \det(V_SV_S^\top)}{\sum_{S \in \U,I\subseteq S,J\cap S=\emptyset}\lambda^{S} \det(V_SV_S^\top) } \\
&= \frac{\sum_{S \in \U,I'\subseteq S,J\cap S=\emptyset}\lambda^{S}\sum_{|R|=d,R\subset S}\det(V_RV_R^\top)}{\sum_{S \in \U,I\subseteq S,J\cap S=\emptyset}\lambda^{S}\sum_{|R|=d,R\subset S}\det(V_RV_R^\top)}
\end{align*}
where we apply the Cauchy-Binet formula in the last equality. For
\(\U=\U_k\), both the numerator and denominator are summations over
\(S\) restricted to \(|S|=k\), which can be computed in \(O\pr{n^3dk^2\log(dk)}\) number of arithmetic operations by Lemma~\ref{sumofProductDet}. For the case \(\U=\U_{\leq k}\), we can evaluate summations in the numerator and denominator restricted to \(|S|=k_0\)
 for each \(k_0=1,2,\ldots k\) by computing polynomial \(F(t_1,t_2,t_3)\) in Lemma~\ref{sumofProductDet} only once, and then sum those quantities over \(k_0\). 
 \end{proof}

\subsection{Efficient Deterministic Proportional  Volume} \label{sec:deterministic}

We show that for hard-core measures there is a
deterministic algorithm that achieves the same objective value as the
expected objective value achieved by proportional volume sampling. The
basic idea is to use the method of conditional expectations.

\begin{theorem}\label{thm:deterministic}
Let $\lambda\in\R_+^n,v_1,\ldots,v_n\in \R^d,1\leq k\leq n$,  \(\U\in\{ \U_k,\U_{\leq k} \} \), and $V=[v_1,\ldots,v_n]$. Then there is a deterministic algorithm \(\mathcal{A'}\) which outputs $S^*\subseteq [n]$ of size $k$ such that
\[ \tr\left(V_{S^*}V_{S^*}^\top\right)^{-1}\geq \Expectation{\mu'}{\tr\left(V_{\altmathcal{S}}V_{\altmathcal{S}}^\top\right)^{-1}} \]
where \(\mu'\) is the probability distribution defined by $\mu'(S) \propto
\lambda^S \det(V_SV_S^\top)$ for all $S\in \U$.
Moreover, the algorithm runs in \(O\pr{n^4dk^2\log(dk)}\) number of arithmetic operations. \end{theorem}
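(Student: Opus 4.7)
The plan is to derandomize proportional volume sampling via the method of conditional expectations, computing each conditional expectation efficiently using Lemma \ref{sumofProductDet}. (We note the inequality in the statement should be read as ``$\le$'' to match the minimization.)

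Process the elements $i=1,\ldots,n$ in order. At step $i$ the algorithm maintains a partition of $\{1,\ldots,i-1\}$ into a set $I$ of ``committed-in'' indices and a set $J$ of ``committed-out'' indices. For the two candidate extensions $I'=I\cup\{i\},J'=J$ and $I'=I,J'=J\cup\{i\}$, compute
\[
\Psi(I',J') := \Expectation{\SS\sim\mu'}{\tr(V_\SS V_\SS^\top)^{-1} \,\big|\, I'\subseteq\SS,\ J'\cap\SS=\emptyset},
\]
and commit $i$ to whichever side yields the smaller $\Psi$. Since the overall conditional expectation at the previous step is a convex combination of the two values $\Psi(I\cup\{i\},J)$ and $\Psi(I,J\cup\{i\})$ (with the conditional probabilities as weights, as in the sampling algorithm of Theorem \ref{thm:effic-randomized-hardcore}), the chosen branch has $\Psi$ no larger than the previous value. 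After $n$ steps, the resulting set $S^*$ satisfies $\tr(V_{S^*}V_{S^*}^\top)^{-1}\le \E_{\mu'}[\tr(V_\SS V_\SS^\top)^{-1}]$.

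Efficiency reduces to evaluating $\Psi(I',J')$. Using $\tr(M^{-1})=E_{d-1}(M)/\det(M)$ from \eqref{eq:keyobsv} and Cauchy-Binet \eqref{eq:cauchy-binet},
\[
\Psi(I',J') \;=\; \frac{\displaystyle\sum_{S\in\U,\,I'\subseteq S,\,J'\cap S=\emptyset}\lambda^S \sum_{|T|=d-1,T\subseteq S}\det(V_T^\top V_T)}{\displaystyle\sum_{S\in\U,\,I'\subseteq S,\,J'\cap S=\emptyset}\lambda^S \sum_{|R|=d,R\subseteq S}\det(V_R^\top V_R)}.
\]
Both numerator and denominator are precisely the quantity \eqref{exp:hardcoreSum} with $d_0=d-1$ and $d_0=d$ respectively, so each can be extracted as a coefficient of the polynomial $F(t_1,t_2,t_3)$ from Lemma \ref{sumofProductDet}. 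For $\U=\U_k$ this is one coefficient extraction; for $\U=\U_{\le k}$ we sum the relevant coefficients across $k_0=1,\ldots,k$, exactly as in Theorem \ref{thm:effic-randomized-hardcore}. By Lemma \ref{sumofProductDet}, each such evaluation costs $O(n^3 d k^2\log(dk))$ arithmetic operations (taking $|I'|\le k$). Multiplying by the $n$ outer iterations and by $2$ candidate extensions per step yields the claimed total of $O(n^4 d k^2 \log(dk))$ arithmetic operations.

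The main subtlety is the case $\U=\U_k$ when $|I|+(n-i-|J|)$ becomes tight against $k$, i.e., when feasibility forces the branch; in that case the conditional probability of the infeasible branch is zero (the numerator in $F$ vanishes) and the algorithm deterministically commits to the feasible side. This is handled transparently by simply picking the branch with finite smaller $\Psi$ (with the convention $\Psi=+\infty$ on the infeasible branch). No new ideas beyond Lemma \ref{sumofProductDet} are required; the only thing to verify carefully is that the weighted average of the two $\Psi$ values equals the conditional expectation at the previous step, which is immediate from the law of total expectation applied to the indicator $\mathbbm{1}[i\in\SS]$ under the conditional distribution on $\{I\subseteq\SS,\,J\cap\SS=\emptyset\}$.
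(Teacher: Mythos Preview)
Your proof is correct and follows essentially the same approach as the paper: derandomize proportional volume sampling by the method of conditional expectations, express the conditional expectation as the ratio of the two sums \eqref{exp:hardcoreSum} with $d_0=d-1$ and $d_0=d$ via \eqref{eq:keyobsv} and Cauchy--Binet, and evaluate each using Lemma~\ref{sumofProductDet}, summing over $k_0\le k$ in the $\U_{\le k}$ case. Your added remarks on the direction of the inequality, the convex-combination justification, and the handling of infeasible branches are all fine and do not diverge from the paper's argument.
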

Again, with the assumption that \(n\leq k+d^2\) (Observation \ref{obs:n=k+d^2}), the runtime for deterministic proportional volume sampling is \(O\pr{(k+d^2)^4dk^2\log(dk)}\).

\begin{proof}
To prove the theorem, we derandomize the sampling algorithm in Theorem \ref{thm:effic-randomized-hardcore} by the method of conditional expectations. The deterministic algorithm starts with \(S^*=\emptyset \), and then chooses, at each step $i=1,2,\ldots,n$, whether to pick $i$ to be in $S^*$ or not, given that we know from previous steps to include or exclude each element $1,2,\ldots,i-1$ from $S^*$. The main challenge is to calculate exactly the quantity of the form
\[ X(I,J):=\Expectation{\SS\sim\mu'}{\tr\left(V_\altmathcal{S}V_\altmathcal{S}^\top\right)^{-1}|I\subset \altmathcal{S},J\cap \altmathcal{S}=\emptyset} \]
where \(I,J\subseteq[n] \) are disjoint. If we can efficiently calculate the quantity of such form, the algorithm can, at each step \(i=1,2,\ldots,n\), calculate \(X(I'\cup\{i\},J') \) and \(X(I',J'\cup\{i\}) \)
where \(I',J'\subseteq[i-1]\) denote elements we have decided to pick and not to pick, respectively, and then  include \(i\) to \(S^*\) if and only if \(X(I'\cup\{i\},J')\geq X(I',J'\cup\{i\}). \)

 Note that the quantity \(X(I,J)\) equals
\begin{align*}
\Expectation{\SS\sim\mu'}{\tr\left(V_\altmathcal{S}V_\altmathcal{S}^\top\right)^{-1}|I\subset \altmathcal{S},J\cap \altmathcal{S}=\emptyset} &= \sum_{\substack{S \in \U,\\I\subseteq S, J\cap S=\emptyset}} \Prob{\mu'}{\altmathcal{S}=S|I \subseteq \altmathcal{S},\altmathcal{S}\cap J=\emptyset}\tr\left[(V_SV_S^\top)^{-1}\right] \\
&= \sum_{\substack{S \in \U,\\I\subseteq S, J\cap S=\emptyset}} \frac{\lambda^S\det(V_SV_S^\top)}{\sum_{S' \in \U,I\subseteq S,J\cap S=\emptyset}\lambda^{S'}\det(V_{S'}V_{S'}^\top)}\tr\left[(V_SV_S^\top)^{-1}\right] \\
&= \frac{\sum_{S \in \U,I\subseteq S,J\cap S=\emptyset}\lambda^{S} E_{d-1}( V_SV_S^\top)}{\sum_{S \in \U,I\subseteq S,J\cap S=\emptyset}\lambda^{S}\sum_{|R|=d,R\subset S}\det(V_RV_R^\top)} \\
&= \frac{\sum_{S \in \U,I\subseteq S,J\cap S=\emptyset}\lambda^{S}\sum_{|T|=d-1,T\subset S}\det(V_T^\top V_T)}{\sum_{S \in \U,I\subseteq S,J\cap S=\emptyset}\lambda^{S}\sum_{|R|=d,R\subset S}\det(V_RV_R^\top)}
\end{align*}
where we  write inverse of trace as ratio of symmetric polynomials of eigenvalues in the third equality and use
Cauchy-Binet formula for the third and the
fourth equality. 
The rest of the proof is now identical to the proof of Theorem \ref{thm:effic-randomized-hardcore}, except with different parameters \(d_0=d-1,d\) in \(f(t_1,t_2,t_3)\) when applying Lemma \ref{sumofProductDet}. 
\cut{For
\(\U=\U_k\), both the numerator and denominator are summations over
\(S\) restricted to \(|S|=k\), which can be computed in \(O\pr{n^3dk^2\log(dk)}\) number of arithmetic operations by Lemma~\ref{sumofProductDet}. For the case \(\U=\U_{\leq k}\), we can evaluate summations in the numerator and denominator restricted to \(|S|=k_0\)
 for each \(k_0=1,2,\ldots k\) by computing polynomial \(F(t_1,t_2,t_3)\) in Lemma~\ref{sumofProductDet} only once, and then sum those quantities over \(k_0\).}
\end{proof}

\subsection{Efficient Randomized Implementation of $\frac{k}{k-d+1}$-Approximation Algorithm With Repetitions }\label{sec:effreplace}

First, we need to state several Lemmas needed to compute particular sums. The main motivation that we need a different method from Section \ref{Sec:EfficientRandomizedProportionalVolume} and    \ref{sec:deterministic} to compute a similar sum is that we want to allow the ground set \(U\)
of indices of all copies of vectors to have an exponential size. This makes Lemma \ref{sumofProductDet} not useful, as the matrix needed to be computed has dimension \(|U|\times|U|\).
The main difference, however, is that the parameter \(\lambda\) is now a constant, allowing us to obtain sums by computing a more compact \(d\times d \) matrix.
\begin{lemma} \label{lem:computeSize-d-SumDeternimantFixedIntersection}
Let $V=[v_1,\ldots,v_m]$ be a matrix of vectors $v_i\in\R^d$ with \(n\geq d\) distinct vectors. Let $F\subseteq [m]$ and let $0\leq r \leq d$ and \(0\leq d_0 \leq d\) be integers. Then the quantity $\sum_{|T|=d_0, |F \cap R|=r} \det(V_T^\top V_T)$ is the coefficient of $t_1^{d-d_0} t_2^{d_0-r}t_3^{r}$ in
\begin{equation}
f(t_1,t_{2},t_3)=\det\left( t_1I_d +\sum_{i\in F}t_3v_iv_i^\top + \sum_{i\notin F}t_2 v_iv_i^\top \right) \label{determinantSumCoefficientComparison}
\end{equation}
where $t_1,t_2,t_3\in\R$ are indeterminate and \(I_d\) is the  \(d\times d\)
identity matrix. Furthermore, this quantity can be computed in \(O\pr{n(d-d_0+1)d_0^2d^2\log d}\) number of arithmetic operations. 
\end{lemma}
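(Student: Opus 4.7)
The approach is to read off the desired sum from the coefficient of a single symbolic determinant, via two applications of the Cauchy--Binet identity~\eqref{eq:cauchy-binet}. This is in the same spirit as Lemma~\ref{sumofProductDet}, but reparameterized so that the matrix whose determinant we compute is $d \times d$ rather than $|U| \times |U|$; this is what makes the runtime independent of $m$ (which can be exponential in the with-repetitions setting) and depend only on $n$, the number of \emph{distinct} vectors.

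First I would write $A(t_2,t_3) := \sum_{i \in F} t_3 v_i v_i^\top + \sum_{i \notin F} t_2 v_i v_i^\top$, so that $f(t_1,t_2,t_3) = \det(t_1 I_d + A)$. Since $A$ is a $d \times d$ symmetric positive semidefinite matrix (after substituting positive values for $t_2, t_3$), the characteristic-polynomial expansion gives
\[
f(t_1,t_2,t_3) = \sum_{k=0}^{d} t_1^{d-k}\, E_k(A(t_2,t_3)),
\]
so the coefficient of $t_1^{d-d_0}$ in $f$ is exactly the polynomial $E_{d_0}(A(t_2,t_3))$ in $t_2,t_3$. Next I would write $A = V D V^\top$ with the diagonal matrix $D = \diag(d_1,\ldots,d_m)$ where $d_i = t_3$ for $i \in F$ and $d_i = t_2$ for $i \notin F$, and apply \eqref{eq:cauchy-binet} to $M = D^{1/2} V^\top$:
\[
E_{d_0}(A) = E_{d_0}(V D V^\top) = E_{d_0}(D^{1/2} V^\top V D^{1/2}) = \sum_{|T|=d_0} \Bigl(\prod_{i \in T} d_i\Bigr)\, \det(V_T^\top V_T).
\]
Grouping terms by $r := |T \cap F|$ converts the monomial $\prod_{i\in T}d_i$ into $t_3^{r} t_2^{d_0 - r}$, and extracting the coefficient of $t_2^{d_0-r} t_3^{r}$ on both sides yields exactly $\sum_{|T|=d_0,\,|T\cap F|=r} \det(V_T^\top V_T)$, as claimed.

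For the running time, I would exploit that the columns of $V$ take only $n$ distinct values $u_1,\ldots,u_n$; letting $c_j$ (resp.~$c_j'$) be the number of copies of $u_j$ inside (resp.~outside) $F$, we have the compact representation $A = \sum_{j=1}^{n} (c_j t_3 + c_j' t_2)\, u_j u_j^\top$. Forming the symbolic $d\times d$ matrix $A$ by accumulating $n$ rank-one updates takes $O(nd^2)$ coefficient-level operations, each on a polynomial in $(t_2,t_3)$ of degree at most $1$. Then $\det(t_1 I_d + A)$ is computed by a truncated characteristic-polynomial routine (e.g., Faddeev--LeVerrier or expansion of $\det(I + t_1^{-1}A)$), where at each stage the intermediate polynomials in $(t_2,t_3)$ are truncated to total degree $\leq d_0$ and the coefficient in $t_1$ is tracked only up to degree $d-d_0$. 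Using FFT-based multiplication of univariate surrogates for the bivariate polynomial in $(t_2,t_3)$ (as in the proof of Lemma~\ref{sumofProductDet}) bounds each polynomial multiplication by $O(d_0^2 \log d)$ scalar operations, and there are $O(n(d-d_0+1)d^2)$ such operations across the computation, giving the stated bound $O(n(d-d_0+1) d_0^2 d^2 \log d)$.

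The main obstacle is not the identity itself, which follows cleanly from Cauchy--Binet and the characteristic polynomial expansion, but the careful bookkeeping for the runtime: one must ensure that the determinant computation is carried out in a way that exploits both the low degree in $t_1$ we care about (only $d-d_0+1$ distinct coefficients) and the fact that $A$ is built from $n$, not $m$, rank-one contributions, and that polynomial multiplications are truncated to the relevant degree windows to avoid a blowup.
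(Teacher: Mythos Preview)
Your proposal is correct and follows essentially the same approach as the paper: both expand $\det(t_1 I_d + A)$ via the characteristic polynomial to identify the coefficient of $t_1^{d-d_0}$ as $E_{d_0}(A)$, then apply Cauchy--Binet (together with Lemma~\ref{lemma:LinearlityofDeterminant}) to $A = V D V^\top$ and group by $|T\cap F|$. The runtime argument is also the same in substance---compact representation with $n$ distinct vectors, $O(nd^2)$ entry-wise operations for the matrix plus $O(d^3)$ for the determinant, each operating on truncated polynomials of $O((d-d_0+1)d_0^2)$ monomials multiplied via FFT---though your bookkeeping partitions the factors slightly differently than the paper's.
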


\begin{proof}
First,  note that  \(\det\left( t_1I +\sum_{i\in F}t_3v_iv_i^\top +
  \sum_{i\notin F}t_2 v_iv_i^\top \right)=\prod_{i=1}^d (t_1+\nu_i) \)
where \(\nu(M)=\{\nu_1,\ldots, \nu_d\}\) is the vector of eigenvalues
of the matrix $M=\sum_{i\in F}t_3v_iv_i^\top + \sum_{i\notin F}t_2
v_iv_i^\top $. Hence, the coefficient of \(t_1^{d-d_0}\) in
$\det\left( t_1I +\sum_{i\in F}t_3v_iv_i^\top + \sum_{i\notin F}t_2 v_iv_i^\top \right)$ is \(e_{d_0}(\nu(M)). \)

Next, observe that \(M\) is in the form $V'V'^\top$ where $V'$ is the
matrix where columns are $\sqrt{t_3}v_i$, $i\in F$ and
$\sqrt{t_2}v_i,i\notin F$. Applying Cauchy-Binet to
$E_{d_0}(V'V'^\top)$, we get
\begin{align*}
E_{d_0}\left( \sum_{i\in F}t_3v_iv_i^\top + \sum_{i\notin F}t_2 v_iv_i^\top \right)=E_{d_0}(V'V'^\top) &= \sum_{|T|=d_0}\det(V_T'^\top V_T') \\
&= \sum_{l=0}^{|F|} \sum_{|T|=d_0, |T\cap F|=l} \det(V_T'^\top V_T') \\
&= \sum_{l=0}^{|F|} \sum_{|T|=d_0, |T\cap F|=l} t_3^l t_2^{d_0-l} \det(V_T^\top V_T),
\end{align*}
where we use Lemma~\ref{lemma:LinearlityofDeterminant} for the last
equality. The desired quantity \(\sum_{|T|=d_0, |F \cap R|=r} \det(V_T^\top V_T)\) is then  exactly the coefficient at
$l=r$ in the sum on the right hand side.

To compute the running time, since there are only \(n\) distinct vectors, we may represent sets \(V,F\) compactly with distinct \(v_i\)'s and number of copies of each distinct \(v_i\)'s. Therefore, computing the matrix sum takes \(O\pr{nd^2}\) entry-wise  operations.
Next, the standard computation time of determinant of \(d\times d\) matrix is \(O(d^3)\) entry-wise arithmetic operations. This gives a total  of \(O\pr{nd^2+d^3}=O\pr{nd^2}\) entry-wise operations.

For each entry-wise  operation, we keep all monomials in the form \(t_1^{a}t_2^{b}t_3^{c}\) where \(a\leq d-d_0, b \leq d_0-r, c \leq r\), of which there are \(O((d-d_0+1)d_0^2)\).
By representing multivariate monomials in single variable (\cite{pan1994simple}) of degree \(O((d-d_0+1)d_0^2)\), we may use Fast Fourier Transform to do one polynomial multiplication of entries of the matrix in \(O\pr{(d-d_0+1)d_0^2\log d}\) number of arithmetic operations. This gives the total runtime  of \(O\pr{n(d-d_0+1)d_0^2d^2\log d}\) arithmetic operations.
\end{proof}


\begin{lemma} \label{lem:computeSumDetContainingFixedSet}
Let $V=[v_1,\ldots,v_m]$ be a matrix of vectors $v_i\in\R^d$ with \(n\geq d\) distinct vectors. Let $F\subseteq [m]$ and let $0\leq r \leq d$ and \(0\leq d_0 \leq d\) be integers. There is an   algorithm to compute $\sum_{|S|=k,S\supseteq F} E_{d_0}(V_SV_S^\top) $ with \(O\pr{n(d-d_0+1)d_0^2d^2\log d}\) number of arithmetic operations.

\cut{To be precise,
\[\sum_{|S|=k,S\supseteq F} \det(V_SV_S^T) = \sum_{r=0}^d \binom{n-|F|-d+r}{k-|F|-d+r}  \text{Coef}\left(\det\left( \sum_{i\in F}t_3v_iv_i^\top + \sum_{i\notin F}t_2 v_iv_i^\top \right),t_2^{d-r}t_3^r \right)\]
hen the computation time is polynomial in $n,k,d$.}
\end{lemma}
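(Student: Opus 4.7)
The plan is to reduce the sum $\sum_{|S|=k,S\supseteq F} E_{d_0}(V_S V_S^\top)$ to a weighted combination of the quantities computed in Lemma~\ref{lem:computeSize-d-SumDeternimantFixedIntersection}, and then observe that all such quantities can be read off from a single polynomial $f(t_1,t_2,t_3)$.

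First, I would apply the Cauchy-Binet identity \eqref{eq:cauchy-binet} to rewrite
\[
E_{d_0}(V_S V_S^\top) = \sum_{T\subseteq S,\,|T|=d_0} \det(V_T^\top V_T).
\]
Swapping the order of summation gives
\[
\sum_{|S|=k,\,S\supseteq F} E_{d_0}(V_S V_S^\top)
= \sum_{|T|=d_0} \det(V_T^\top V_T)\cdot \#\{S : |S|=k,\ S\supseteq F\cup T\}.
\]
For a fixed $T$ with $|T\cap F|=r$, we have $|F\cup T| = |F|+d_0-r$, so the number of such $S$ is $\binom{m-|F|-d_0+r}{k-|F|-d_0+r}$, a quantity depending only on $r$. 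Grouping the sum by the value $r = |T\cap F|$, we obtain
\[
\sum_{|S|=k,\,S\supseteq F} E_{d_0}(V_S V_S^\top)
= \sum_{r=0}^{d_0} \binom{m-|F|-d_0+r}{k-|F|-d_0+r} \sum_{|T|=d_0,\,|T\cap F|=r}\det(V_T^\top V_T).
\]

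Next, I would invoke Lemma~\ref{lem:computeSize-d-SumDeternimantFixedIntersection}: each inner sum $\sum_{|T|=d_0,\,|T\cap F|=r}\det(V_T^\top V_T)$ is precisely the coefficient of $t_1^{d-d_0}t_2^{d_0-r}t_3^{r}$ in the single polynomial $f(t_1,t_2,t_3) = \det(t_1 I_d + \sum_{i\in F}t_3 v_iv_i^\top + \sum_{i\notin F}t_2 v_iv_i^\top)$. The critical observation for the running time is that we need only compute $f$ \emph{once}, and then extract $d_0+1$ coefficients from it, rather than recomputing a separate determinant for each $r$. Computing $f$ takes $O\!\left(n(d-d_0+1)d_0^2 d^2\log d\right)$ arithmetic operations by Lemma~\ref{lem:computeSize-d-SumDeternimantFixedIntersection} (using the compact representation of $V,F$ in terms of the $n$ distinct vectors with multiplicities, and performing polynomial multiplications by FFT on the single-variable encoding of trivariate monomials).

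Finally, given the coefficients of $f$, computing the binomial coefficients $\binom{m-|F|-d_0+r}{k-|F|-d_0+r}$ for $r=0,1,\ldots,d_0$ and forming the weighted sum requires only $O(d)$ additional arithmetic operations, which is absorbed into the bound. The main subtlety is purely bookkeeping: ensuring that the values $m$ and $k$ (which may be exponentially large in the intended applications of Section~\ref{sec:effreplace}) appear only inside the binomial coefficients, which we treat as a constant number of arithmetic operations each rather than as bit operations, so the overall arithmetic complexity remains $O\!\left(n(d-d_0+1)d_0^2 d^2\log d\right)$ as claimed.
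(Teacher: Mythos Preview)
Your proposal is correct and follows essentially the same approach as the paper: apply Cauchy--Binet, swap the order of summation, group by $r=|T\cap F|$ so that the count of admissible $S$ becomes the binomial coefficient $\binom{m-|F|-d_0+r}{k-|F|-d_0+r}$, and then compute the single polynomial $f(t_1,t_2,t_3)$ of Lemma~\ref{lem:computeSize-d-SumDeternimantFixedIntersection} once to read off all the needed inner sums. Your observation that $m$ and $k$ appear only inside the binomial coefficients (hence the arithmetic-operation count is independent of them) is exactly the point the paper exploits in the subsequent applications.
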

\begin{proof}
 We apply Cauchy-Binet:\begin{align*}
 \sum_{|S|=k,S\supseteq F} E_{d_0}(V_SV_S^T) &=  \sum_{|S|=k,S\supseteq F} \sum_{|T|=d_0, T\subset S} \det(V_T^\top  V_T)\\
\cut{ &= \sum_{|T|=d}\det(V_TV_T^\top) \sum_{|S|=k,S\supset T,S\supset F} 1 \\}
 &= \sum_{|T|=d_0}\det(V_T^\top  V_T) \binom{m-|F|-d_0+|F\cap T|}{k-|F|-d_0+|F\cap T|} \\
\cut{ &= \sum_{r=0}^d \sum_{|T|=d, |F \cap T|=r} \det(V_TV_T^\top) \binom{n-|F|-d+r}{k-|F|-d+r} \\}
 &=  \sum_{r=0}^d \binom{m-|F|-d_0+r}{k-|F|-d_0+r}  \sum_{|T|=d_0, |F \cap T|=r} \det(V_T^\top  V_T )
\end{align*}
where we change the order of summations for the second equality,
and  enumerate over possible sizes of $F\cap T$ to get the third
equality.  We compute $f(t_1,t_{2},t_3)$ in Lemma \ref{lem:computeSize-d-SumDeternimantFixedIntersection} once with \(O\pr{n(d-d_0+1)d_0^2d^2\log d}\) number of arithmetic operations, so we obtain values of $\sum_{|T|=d_0, |F \cap T|=r} \det(V_T^\top V_T) $  for all $r=0,\ldots,d_0$. The rest is a straightforward calculation. \cut{The quantity \(\binom{n-|F|-d_0+r}{k-|F|-d_0+r}\) which is at most \(n^k=2^{k\log n}\) can be computed in \(\poly(k,\log n)\) time for each \(r=0,\ldots,d\). So the sum
\[ \sum_{r=0}^d \binom{n-|F|-d_0+r}{k-|F|-d_0+r}  \sum_{|T|=d_0, |F \cap T|=r} \det(V_T^\top V_T ) \]
over \(d\) terms can be computed in \(\poly(\log n,d,k)\) time}
\end{proof}

We  now present an efficient sampling procedure for Algorithm \ref{alg:duplicateSampleSamex_i}. We want to sample $S$ proportional to $\det(W_SW_S^\top)$. The set $S$ is a subset of all copies of at most $n$ distinct vectors, and there can be exponentially many copies. However, the key is that the quantity \(f(t_1,t_2,t_3) \) in \eqref{determinantSumCoefficientComparison} is still efficiently computable because exponentially many of these copies of vectors are the same.

\begin{theorem}
\label{Thm:effic-randomized-hardcore}
Given  inputs  $n,d,k,\epsilon ,x\in \R_+^n$ with $\sum_{i=1}^n x_i = k$, and vectors $v_1,\ldots,v_n$ \(\)\(\) to Algorithm \ref{alg:duplicateSampleSamex_i} we define \(q,U,W\) as in Algorithm \ref{alg:duplicateSampleSamex_i}. Then, there exists an implementation \(\A\) that samples \(\SS\) from the distribution \(\mu'\) over all subsets \(S\subseteq\ U\) of size \(k\), where \(\mu'\) is defined by $\Pr_{\SS\sim\mu'}[\SS=S]\propto \det(W_{S}W_S^\top)$ for each  $S\subseteq U,|S|=k$. Moreover, \(\A\) runs in  \(O\pr{n^2d^4k\log d}\) number of arithmetic operations.\end{theorem}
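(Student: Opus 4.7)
The plan is to use sequential (elementwise) conditional--probability sampling from $\U_k$ on the ground set $U$, exploiting the symmetry among copies of the same $v_i$ together with the compact formula in Lemma~\ref{lem:computeSumDetContainingFixedSet}, whose cost depends on the number $n$ of \emph{distinct} vectors rather than on $|U|=qk$. Concretely, I would maintain a running set $F_0=\emptyset$ and, for $r=1,\dots,k$, sample $j_r\in U\setminus F_{r-1}$ with probability proportional to
\[
A^{F_{r-1}}(j)\;:=\;\sum_{|S|=k,\,S\supseteq F_{r-1}\cup\{j\}}\det(W_SW_S^\top),
\]
setting $F_r=F_{r-1}\cup\{j_r\}$ and returning $\SS=F_k$.

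For correctness, write $B^F:=\sum_{|S|=k,\,S\supseteq F}\det(W_SW_S^\top)$. The normalization in round $r$ equals $\sum_{j\in U\setminus F_{r-1}}A^{F_{r-1}}(j)=(k-r+1)B^{F_{r-1}}$, since each $S\supseteq F_{r-1}$ of size $k$ is counted once per $j\in S\setminus F_{r-1}$. Since $A^{F_{r-1}}(j_r)=B^{F_r}$, the probability of a particular ordered sequence $(j_1,\dots,j_k)$ telescopes as
\[
\prod_{r=1}^{k}\frac{B^{F_r}}{(k-r+1)B^{F_{r-1}}}
\;=\;\frac{\det(W_SW_S^\top)}{k!\cdot B^{\emptyset}},\qquad S=F_k,
\]
and summing over the $k!$ orderings of $S$ gives $\Pr[\SS=S]\propto\det(W_SW_S^\top)$, as required.

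For efficiency, the key observation is that $A^{F_{r-1}}(j)$ depends on $j$ only through which distinct vector $v_i$ it is a copy of: if $j,j'$ are both copies of $v_i$, relabeling gives $A^{F_{r-1}}(j)=A^{F_{r-1}}(j')=:A_i$. So in round $r$ I compute only $n$ quantities $A_1,\dots,A_n$, then select a distinct index $i$ with probability proportional to $m_i A_i$ (where $m_i$ is the number of remaining copies of $v_i$ in $U\setminus F_{r-1}$) and pick one of those copies uniformly at random. Each $A_i$ is precisely a sum of the form computed by Lemma~\ref{lem:computeSumDetContainingFixedSet} with $d_0=d$, and the set $F_{r-1}\cup\{j\}$ passed in consists of indices into $W$ whose restricted matrix has at most $n$ distinct columns, so each call costs $O(nd^4\log d)$ arithmetic operations. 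Summing over the $n$ calls per round and the $k$ rounds yields the claimed $O(n^2d^4k\log d)$ bound.

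The main obstacle is that $|U|=qk$ scales with $1/\epsilon$, so any implementation that iterates through elements of $U$ one at a time (in the style of Section~\ref{Sec:EfficientRandomizedProportionalVolume}) or that invokes Lemma~\ref{sumofProductDet} on the ground set $U$ would incur runtime polynomial in $1/\epsilon$ and in $|U|$, which we want to avoid. Overcoming this requires both (i) the copy-symmetry reduction from $|U|$ candidates to $n$ per round, and (ii) using Lemma~\ref{lem:computeSumDetContainingFixedSet} rather than Lemma~\ref{sumofProductDet}, since the former's running time is governed by the number of distinct vectors (at most $n$) instead of by $|W|=|U|$. The remaining verification is routine: checking that the set $F_{r-1}\cup\{j\}$ indeed satisfies the hypotheses of Lemma~\ref{lem:computeSumDetContainingFixedSet}, and that the sampler can build $F_{r-1}$ incrementally without ever materializing all of $U$ (it only needs the current multiplicity profile $(m_1,\dots,m_n)$).
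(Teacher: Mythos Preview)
Your argument is correct. The telescoping identity for sequential element sampling is sound (the normalization $\sum_{j\in U\setminus F_{r-1}}A^{F_{r-1}}(j)=(k-r+1)B^{F_{r-1}}$ follows exactly as you say), the copy-symmetry claim is immediate from the fact that swapping two copies of the same $v_i$ is a determinant-preserving bijection on admissible sets, and each $A_i$ is precisely an invocation of Lemma~\ref{lem:computeSumDetContainingFixedSet} with $d_0=d$, costing $O(nd^4\log d)$; $n$ calls per round and $k$ rounds give the stated bound.

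The route, however, differs from the paper's. The paper does not sample elements one at a time; instead it iterates over the $n$ distinct vector types and, at step $t$, samples the \emph{multiplicity} $k_t$ of type $v_t$ from its conditional distribution given $k_1,\dots,k_{t-1}$. This forces it to handle the constraint $S\cap E_i=F_i$ (exact intersection, not just containment), which it reduces to a containment-only sum by restricting the ground set to $W'$ and factoring out $\prod_i\binom{m_i}{k_i}$. Your element-by-element scheme sidesteps this exclusion bookkeeping entirely, since each conditional step is already a pure ``$S\supseteq F$'' sum. In exchange, the paper's scheme has the minor advantage that the per-step conditional is an explicit multinomial over at most $k+1$ values, which is convenient for the derandomization in Theorem~\ref{thm:deterministicWithRepetition}; your scheme would derandomize equally well (choose the type $i$ minimizing the conditional expectation at each round), but the conditional-expectation formula would look slightly different. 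Both approaches meet the same $O(n^2d^4k\log d)$ budget by the same two ingredients you identified: collapsing $|U|$ candidates to $n$ types via symmetry, and invoking Lemma~\ref{lem:computeSumDetContainingFixedSet} rather than Lemma~\ref{sumofProductDet}.
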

Theorem \ref{Thm:effic-randomized-hardcore} says that steps  \eqref{step:copyingStepAlgRepetition}-\eqref{step:samplingStepAlgRepetition} in Algorithm \ref{alg:duplicateSampleSamex_i} can be efficiently implemented. Other steps except \eqref{step:copyingStepAlgRepetition}-\eqref{step:samplingStepAlgRepetition} obviously use  \(O\pr{n^2d^4k\log d}\) number of arithmetic operations, so the above statement  implies that Algorithm \ref{alg:duplicateSampleSamex_i} runs in  \(O\pr{n^2d^4k\log d}\) number of arithmetic operations.
Again, by Observation \ref{obs:n=k+d^2}, the number of arithmetic operations is in fact \(O\pr{(k+d^2)^2d^4k\log d}\).
 
\begin{proof}
Let $m_i = qx_i'$ be the number of copies of vector $v_i$ (recall that \(q=\frac{2n}{\epsilon k}\))\cut{ and \(m=\sum_{i=1}^n m_i\)}. \cut{We may assume that $q>k$ by picking smaller \(\epsilon\).} Let $w_{i,j}$ denote the $j$th copy of vector $v_i$. Write $U=\{(i,j):i\in[n],j\in[m_i]\}$ be the new set of indices after the copying procedure. Denote $\SS$  a random subset (not multiset) of $U$ that we want to sample. Write $W$ as the matrix with columns $w_{i,j}$ for all \((i,j)\in U\). Let $E_i= \{w_{i,j}:j=1,\ldots,m_i\}$ be the set of copies of vector \(v_{i}\). For any \(A\subseteq U\), we say that \textit{$A$ has $k_i$ copies of $v_i$} to mean that $|A\cap E_i| = k_i$.

We can define the sampling algorithm \(\altmathcal{A}\) by sampling, at each step $t=1,\ldots,n$, how many copies of \(v_i\) are to be included in \(\altmathcal{S}\subseteq U\).
Denote \(\mu'\) the volume sampling on \(W\) we want to sample. The problem then reduces to efficiently computing
\begin{align}
\Prob{\mu'}{\altmathcal{S} \text{ has } k_t \text{ copies of } v_t |  \altmathcal{S} \text{ has } k_i \text{ copies of } v_i,\forall  i=1,\ldots,t-1} \nonumber \\
= \frac{\Prob{\mu'}{\altmathcal{S} \text{ has } k_i \text{ copies of } v_i,\forall  i=1,\ldots,t}}{\Prob{\mu'}{\altmathcal{S} \text{ has } k_i \text{ copies of } v_i, \forall  i=1,\ldots,t-1}} \label{eq:without-rep-deter-quantity}
\end{align}
for each \(k_t=0,1,\ldots,k-\sum_{i=1}^{t-1}k_i\). Thus, it suffices to efficiently compute quantity \eqref{eq:without-rep-deter-quantity} for any given $1\leq t\leq n$ and $k_1,\ldots,k_t$ such that $\sum_{i=1}^t k_i \leq k$.

We now fix \(t,k_1,\ldots,k_t\). Note that for any $i\in[n]$, getting any set of $k_i$ copies of $v_i$ is the same, i.e.  events $\altmathcal{S}\cap E_i= F_i$ and $\altmathcal{S}\cap E_i= F_i'$ under \(\SS\sim\mu'\) have the same probability for any subsets $F_i,F_i'\subseteq E_i$ of the same size.
Therefore, we fix one set of $k_i$ copies of $v_i$ to be $F_i=\{w_{i,j}:j=1,\ldots,k_i\}     $ for all \(i\in[n]\) and obtain
\[ \Prob{}{\altmathcal{S} \text{ has } k_i \text{ copies of } v_i, \forall  i=1,\ldots,t} =\prod_{i=1}^t\binom{m_i}{k_i} \Prob{}{\altmathcal{S}\cap E_i = F_i,\forall i=1,\ldots t}\]
Therefore, \eqref{eq:without-rep-deter-quantity} equals
\begin{align}
&\frac{\prod_{i=1}^t\binom{m_i}{k_i} \Prob{}{\altmathcal{S}\cap E_i = F_i,\forall i=1,\ldots t}}{\prod_{i=1}^{t-1}\binom{m_i}{k_i} \Prob{}{\altmathcal{S}\cap E_i = F_i,\forall i=1,\ldots t-1}}= \binom{m_t}{k_t}\frac{\sum_{|S|=k,S\cap E_i = F_i,\forall i=1,\ldots t}\det(W_SW_S^\top)}{\sum_{|S|=k,S\cap E_i = F_i,\forall i=1,\ldots t-1}\det(W_SW_S^\top)} \label{eq:WithRepFractionEachStep}
\end{align}
To compute the numerator, define \(W'\) a matrix of vectors in \(W\) restricted to indices \(U\setminus\left(\bigcup_{i=1}^t E_i\setminus F_i\right)\), and $F:=\bigcup_{i=1}^t F_i$, then we have
\begin{equation}
\sum_{|S|=k,S\subseteq W,S\cap E_i = F_i,\forall i=1,\ldots t}\det(W_SW_S^\top) = \sum_{|S|=k,S\subseteq W',S\supseteq F}\det(W'_S{W_S'}^\top) \label{eq:WithRepSumCauchy}
\end{equation}
By Lemma  \ref{lem:computeSumDetContainingFixedSet}, the
number of arithmetic operations to compute \eqref{eq:WithRepSumCauchy} is \(O\pr{n(d-d_0+1)d_0^2d^2\log d}=O\pr{nd^4\log d}\) (by applying \(d_0=d\)). Therefore, because in each step \(t=1,2,\ldots,n\), we compute \eqref{eq:without-rep-deter-quantity} at most \(k\) times for different values of \(k_t\), the total number of arithmetic operations for  sampling algorithm \(\A\) is \(O\pr{n^2d^4k\log d}\).  
\cut{this reduces to computing  $\det\left( t_1I+\sum_{i\in F}t_3w_iw_i^\top + \sum_{i\in U' \setminus F}t_2 w_iw_i^\top \right)$ efficiently. Observe that
\begin{equation}t_1I+\sum_{i\in F}t_3w_iw_i^\top + \sum_{i\in U' \setminus F}t_2 w_iw_i^\top=t_1I+ \sum_{i=1}^t k_i t_3v_iv_i^\top + \sum_{i=t+1}^n m_i t_2 v_iv_i^\top \label{eq:WithRepMatrixSum}\end{equation}
\cut{
\begin{align*}
\sum_{|S|=k,S\supseteq F,S\subseteq U'} \det(W_SW_S^T) &= \sum_{r=0}^d \binom{|U'|-|F|-d+r}{k-|F|-d+r}  \text{Coef}\left(\det\left( \sum_{i\in F}t_3w_iw_i^\top + \sum_{i \in U'\setminus F}t_2 w_iw_i^\top \right),t_2^{d-r}t_3^r \right)\\
&=\sum_{r=0}^d \binom{|U'|-|F|-d+r}{k-|F|-d+r}  \text{Coef}\left(\det\left( \sum_{i=1}^t k_i t_3v_iv_i^\top + \sum_{i=t+1}^n m_i t_2 v_iv_i^\top \right),t_2^{d-r}t_3^r \right)
\end{align*}
And we note that $\det\left( \sum_{i=1}^t k_i t_3v_iv_i^\top + \sum_{i=t+1}^n m_i t_2 v_iv_i^\top \right)$ is still efficient to compute.
}
 is the sum of only \(n+1\) matrices of size \(d\times d\), each possibly with some constant \(m_i\) and variables \(t_j\) in the front. Therefore, \eqref{eq:WithRepMatrixSum} can be computed in \[\poly(n,d,\log k_1,\ldots,\log k_n,\log m_1,\ldots,\log m_n)=\poly(n,d,k,\log q)=\poly(n,d,k,\log (1/\epsilon))\] at each step  \(t\in [n]\). Because \(\log|U'|\leq \log(qk)=\log(2n/\epsilon), \) Lemma  \ref{lem:computeSumDetContainingFixedSet} implies that the numerator \eqref{eq:WithRepSumCauchy} can be computed in \(\poly(n,d,k,\log (1/\epsilon))\)
time. Similarly, the denominator can be computed in \(\poly(n,d,k,\log (1/\epsilon))\)
time. The quantity \( \binom{m_t}{k_t}\) can be computed in \(\log m_t^{k_t} \leq k\log q = \poly(k,\log(1/\epsilon))\) time. Therefore, \eqref{eq:WithRepFractionEachStep} can be computed in \(\poly(n,d,k,\log (1/\epsilon))\) time.}
\end{proof}
\begin{remark} \label{rem:run-time-size-with-eps}
Although Theorem \ref{Thm:effic-randomized-hardcore} and Observation \ref{obs:n=k+d^2} imply that randomized rounding for \textit{A}-optimal design with repetition  takes \(O\pr{(k+d^2)^2d^4k\log d}\) number of arithmetic operations, this does not take into account the size of numbers used in the computation which may scale with input \(\epsilon\). It is not hard to see that the sizes of coefficients \(f(t_1,t_2,t_3)\) in Lemma  \ref{lem:computeSize-d-SumDeternimantFixedIntersection}, of the number \(\binom{m-|F|-d_0+r}{k-|F|-d_0+r}\) in the proof of Lemma \ref{lem:computeSumDetContainingFixedSet}, and of  \(\binom{m_t}{k_t}\) in \eqref{eq:WithRepFractionEachStep} scale linearly with \(O(k\log\pr{m})\) where \(m=\sum_{i=1}^nm_i\). As we apply \(m\leq qk=\frac{2n}{\epsilon}\) in the proof of Theorem \ref{Thm:effic-randomized-hardcore}, the runtime of randomized rounding for \textit{A}-optimal design with repetition, after taking into account the size of numbers in the computation, has an extra factor of \(k\log(\frac{n}{\epsilon})\) and becomes \(O\pr{(k+d^2)^2d^4k^2\log d\log(\frac{k+d^2}{\epsilon}))}\).
\end{remark}
\subsection{Efficient Deterministic Implementation of $\frac{k}{k-d+1}$-Approximation Algorithm With Repetitions} \label{Sec:deterministicAlgwithRep}

We show  a \textit{deterministic} implementation of proportional volume sampling used for the  $\frac{k}{k-d+1}$-approximation algorithm with repetitions. In particular, we derandomized the efficient implementation of steps \eqref{step:copyingStepAlgRepetition}-\eqref{step:samplingStepAlgRepetition} of Algorithm \ref{alg:duplicateSampleSamex_i}, and show that the running time of deterministic version is the same as that of the randomized one.

\begin{theorem}\label{thm:deterministicWithRepetition}
Given  inputs  $n,d,k,\epsilon ,x\in \R_+^n$ with $\sum_{i=1}^n x_i = k$, and vectors $v_1,\ldots,v_n$ \(\)\(\) to Algorithm \ref{alg:duplicateSampleSamex_i}, we define \(q,U,W\) as in Algorithm \ref{alg:duplicateSampleSamex_i}. Then, there exists a deterministic algorithm \(\mathcal{A'}\) that outputs $S^*\subseteq U$ of size $k$ such that
\[ \tr\left(W_{S^*}W_{S^*}^\top\right)^{-1}\geq \Expectation{\SS \sim \mu' }{\tr\left(W_{\altmathcal{S}}W_{\altmathcal{S}}^\top\right)^{-1}} \]
where \(\mu'\) is a distribution over all subsets \(S\subseteq U\) of size \(k\) defined by \(\mu'(S)\)  \(\propto\det(W_SW_S^\top) \) for each set \(S\subseteq U\) of size \(k\). Moreover, \(\A'\) runs in  \(O\pr{n^2d^4k\log d}\) number of arithmetic operations.
\end{theorem}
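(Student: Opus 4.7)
I would derandomize the sampling procedure of Theorem~\ref{Thm:effic-randomized-hardcore} via the method of conditional expectations, in the same spirit as Theorem~\ref{thm:deterministic}, but choosing at each step the branch that \emph{maximizes} the current conditional expectation so as to achieve the $\ge$ bound asserted. The decisions will be: at step $t=1,2,\ldots,n$, decide how many copies $k_t \in \{0,1,\ldots,k-\sum_{i<t}k_i\}$ of $v_t$ to include in $S^*$. Because $\mu'$ is invariant under permuting the $m_i$ copies of each $v_i$ (they are identical columns of $W$), the conditional expectation only depends on the counts $(k_1,\ldots,k_t)$; so after all counts are fixed, we can fix $F_i := \{w_{i,1},\ldots,w_{i,k_i}\}$ and set $S^* = \bigcup_{i=1}^n F_i$.

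\textbf{Derandomization.} Define
\[
X(k_1,\ldots,k_t) \;=\; \Expectation{\SS\sim\mu'}{\tr\!\left(W_\SS W_\SS^\top\right)^{-1} \,\Big|\, |\SS\cap E_i| = k_i,\ i=1,\ldots,t}.
\]
At step $t$, given committed counts $k_1,\ldots,k_{t-1}$, the algorithm computes $X(k_1,\ldots,k_{t-1},k_t)$ for every admissible $k_t$ and picks the one maximizing $X$. By the tower property, $\max_{k_t} X(k_1,\ldots,k_t) \ge X(k_1,\ldots,k_{t-1})$, so iterating to $t=n$ gives
\[
\tr\!\left(W_{S^*}W_{S^*}^\top\right)^{-1} \;=\; X(k_1^*,\ldots,k_n^*) \;\ge\; X() \;=\; \Expectation{\SS\sim\mu'}{\tr\!\left(W_\SS W_\SS^\top\right)^{-1}},
\]
which is the stated inequality.

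\textbf{Efficient evaluation of $X$.} The crux is computing $X$ despite $|U| = qk$ potentially being huge in $1/\epsilon$. Using $\tr(M)^{-1} = E_{d-1}(M)/E_d(M)$ together with Cauchy--Binet,
\[
X(k_1,\ldots,k_t) \;=\; \frac{\sum_{|S|=k,\, S\cap E_i=F_i\ \forall i\le t} E_{d-1}(W_S W_S^\top)}{\sum_{|S|=k,\, S\cap E_i=F_i\ \forall i\le t} E_{d}(W_S W_S^\top)},
\]
where the combinatorial factors $\prod_{i\le t}\binom{m_i}{k_i}$ cancel between numerator and denominator by the copy-symmetry of $\mu'$. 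Restricting to the reduced ground set $U' := U \setminus \bigcup_{i\le t}(E_i \setminus F_i)$ and setting $F := \bigcup_{i\le t} F_i$, each sum becomes $\sum_{|S|=k,\,S\subseteq U',\,S\supseteq F} E_{d_0}(W'_S {W'_S}^{\!\top})$ with $d_0 \in \{d-1,d\}$, which Lemma~\ref{lem:computeSumDetContainingFixedSet} reduces to extracting coefficients of the $d\times d$ determinant polynomial of Lemma~\ref{lem:computeSize-d-SumDeternimantFixedIntersection}. The crucial compactness is that
\[
\sum_{i\in F} t_3 w_i w_i^\top \;=\; \sum_{i\le t} k_i\, t_3\, v_i v_i^\top, \qquad \sum_{i\in U'\setminus F} t_2\, w_i w_i^\top \;=\; \sum_{i>t} m_i\, t_2\, v_i v_i^\top,
\]
so the matrix whose determinant we need is a polynomial-valued sum of only $n$ rank-one terms, independent of the (possibly exponential) multiplicities $m_i$.

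\textbf{Cost and main obstacle.} Each call to Lemma~\ref{lem:computeSumDetContainingFixedSet} costs $O(n(d-d_0+1)d_0^2 d^2 \log d) = O(n d^4 \log d)$ arithmetic operations, and we invoke it twice (for $d_0 = d-1$ and $d_0 = d$) for each of the at most $k$ candidates per step, over $n$ steps, totalling $O(n^2 d^4 k \log d)$ arithmetic operations and matching the claim. The one technical point that requires care (the ``main obstacle'') is verifying that the conditional expectation truly is a function of the counts $(k_1,\ldots,k_t)$ alone; this follows from the symmetry of $\det(W_S W_S^\top)$ under permutations of columns within each $E_i$ (all copies of a given $v_i$ are the same vector), which makes every sum over $\{S: |S\cap E_i|=k_i\}$ identical to $\binom{m_i}{k_i}$ times the sum over $\{S: S\cap E_i = F_i\}$ for any fixed $F_i$, so both numerator and denominator of $X$ scale by the same factor and depend only on the counts.
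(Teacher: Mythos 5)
Your proposal is correct and follows essentially the same route as the paper's own proof: derandomize the sampler of Theorem~\ref{Thm:effic-randomized-hardcore} via conditional expectations on the counts $(k_1,\ldots,k_t)$; observe the copy-symmetry of $\mu'$ so the conditional expectation depends only on the counts and the $\prod_{i\le t}\binom{m_i}{k_i}$ factors cancel between numerator and denominator; reduce each conditional expectation to two calls of Lemma~\ref{lem:computeSumDetContainingFixedSet} (with $d_0=d-1$ and $d_0=d$) on the compressed $n$-distinct-vector representation; and count $O(n^2 d^4 k \log d)$ arithmetic operations. One remark, not a defect of your proposal but an inherited quirk of the statement: for a derandomization of a \emph{minimization} problem one would normally choose the branch that \emph{minimizes} the conditional expectation and conclude $\tr(W_{S^*}W_{S^*}^\top)^{-1} \le \Expectation{\SS\sim\mu'}{\tr(W_\SS W_\SS^\top)^{-1}}$; the displayed $\geq$ with maximization appears in both the theorem statement and the paper's proof (and in the analogous Theorem~\ref{thm:deterministic}), and you reproduce it faithfully, but flipping both to minimize/$\le$ is what makes the derandomization useful for the approximation guarantee.
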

Again, together with Observation \ref{obs:n=k+d^2} and Remark \ref{rem:run-time-size-with-eps}, Theorem \ref{thm:deterministicWithRepetition} implies that the $\frac{k}{k-d+1}$-approximation algorithm for \textit{A}-optimal  design with repetitions can be implemented deterministically in \(O\pr{(k+d^2)^2d^4k\log d}\) number of arithmetic operations  and, after taking into account the size of numbers in the computation, in \(O\pr{(k+d^2)^2d^4k^2\log d\log(\frac{k+d^2}{\epsilon})}\) time.
     
\begin{proof}
We can define the deterministic algorithm \(\altmathcal{A'}\) by deciding, at each step $t=1,\ldots,n$, how many copies of \(v_i\) are to be included in \(S^*\subseteq U\).
The problem then reduces to efficiently computing
\begin{equation}
\label{eq:expectationDeterministicWithRepetition}
X(k_1,\ldots,k_t):=\Expectation{\mu'}{\tr\left(W_\altmathcal{S}W_\altmathcal{S}^\top\right)^{-1}|  \altmathcal{S} \text{ has } k_i \text{ copies of } v_i,\forall  i=1,\ldots,t-1,t}
\end{equation}
where \(k_1,\ldots , k_{t-1}\) is already decided by previously steps of the algorithm, and now we compute \eqref{eq:expectationDeterministicWithRepetition} for each  \(k_t =0,1,\ldots, k-\sum_{i=1}^{t-1} k_i\).
\(\mathcal{A}'\) then chooses value of \(k_t\) which maximizes \eqref{eq:expectationDeterministicWithRepetition} to complete  step \(t\).

Recall the definitions from proof of Theorem \ref{Thm:effic-randomized-hardcore}  that \(F_i,E_i\) are the sets of fixed $k_i$ copies and all copies of \(v_i\), respectively,   $W'$ is the matrix of vectors in \(W\) restricted to indices $U\setminus\left(\bigcup_{i=1}^t E_i\setminus F_i\right)$, and $F:=\bigcup_{i=1}^t F_i$. Consider that

\begin{align*}
X(k_1,\ldots,k_t) &= \sum_{\substack{S \subseteq U;|S|=k; \\ |S\cap E_i|=k_i,\forall i=1,\ldots,t }} \Prob{\mu'}{\altmathcal{S}=S|  \altmathcal{S} \text{ has } k_i \text{ copies of } v_i, \forall  i=1,\ldots,t}\tr\left[(W_SW_S^\top)^{-1}\right] \\
&= \sum_{\substack{S \subseteq U;|S|=k; \\ |S\cap E_i|=k_i,\forall i=1,\ldots,t }} \frac{\det(W_SW_S^\top)}{\sum_{S' \subseteq U;|S'|=k;|S'\cap E_i|=k_i,\forall  i=1,\ldots,t}\det(W_{S'}W_{S'}^\top)}\tr\left[(W_SW_S^\top)^{-1}\right] \\
&=  \frac{\sum_{S \subseteq U;|S|=k;|S\cap E_i|=k_i,\forall  i=1,\ldots,t}E_{d-1} ( W_SW_S^\top)}{\sum_{S \subseteq U;|S|=k;|S\cap E_i|=k_i,\forall  i=1,\ldots,t}\det(W_{S}W_{S}^\top)}
\\
&= \frac{\prod_{i=1}^t\binom{m_i}{k_i}\sum_{S \subseteq U;|S|=k;S\supseteq F} E_{d-1} ( W'_S{W_S'}^\top)}{\prod_{i=1}^t\binom{m_i}{k_i}\sum_{S \subseteq U;|S|=k;S\supseteq F} \det( W'_S{W_S'}^\top)} \\
&= \frac{\sum_{S \subseteq U;|S|=k;S\supseteq F} E_{d-1} (  W'_S{W_S'}^\top)}{\sum_{S \subseteq U;|S|=k;S\supseteq F} \det( W'_S{W_S'}^\top)}
\end{align*}

By Lemma  \ref{lem:computeSumDetContainingFixedSet},
we can compute the numerator and denominator in \(O\pr{n(d-d_0+1)d_0^2d^2\log d}=O\pr{nd^4\log d}\) (by applying \(d_0=d-1,d\)) number of arithmetic operations. Therefore, because in each step \(t=1,2,\ldots,n\), we compute \eqref{eq:expectationDeterministicWithRepetition} at most \(k\) times for different values of \(k_t\), the total number of arithmetic operations for  sampling algorithm \(\A\) is \(O\pr{n^2d^4k\log d}\).
\end{proof}

\subsection{Efficient Implementations for the Generalized Ratio Objective} \label{sec:eff-implementation-gen}
In Section \ref{Sec:EfficientRandomizedProportionalVolume}-\ref{sec:deterministic} we obtain efficient randomized and deterministic implementations of proportional volume sampling with measure \(\mu \) when \(\mu\) is a hard-core distribution over all subsets \(S\in \U\) (where \(\U\in\{ \U_k,\U_{\leq k} \} \)) with any given parameter \(\lambda\in\R_+^n\). Both implementations run in \(O\pr{n^4dk^2\log(dk)}\) number of arithmetic operations. In Section \ref{sec:effreplace}-\ref{Sec:deterministicAlgwithRep}, we obtain efficient randomized and deterministic implementations of proportional volume sampling over exponentially-sized matrix \(W=[w_{i,j}]\) of \(m\) vectors  containing \(n\) \textit{distinct} vectors in \(O\pr{n^2d^4k\log d}\) number of arithmetic operations.\cut{\footnote{In  with-repetition setting, we also require \(V\) to contain at least \(k\) copies  for each distinct vector, but it is easy to see that the efficient randomized and deterministic implementations can still be obtained if this requirement is dropped by adjusting the range of possible \(k_i\) used in the proofs.}} In this section, we show that the results from Section  \ref{Sec:EfficientRandomizedProportionalVolume}-\ref{Sec:deterministicAlgwithRep} generalize to proportional \(l\)-volume sampling for generalized ratio problem. \begin{theorem}
\label{thm:effic-randomized-hardcore-Gen}
Let \(n,d,k\) be positive integers, $\lambda \in\R_+^n$, \(\U\in\{ \U_k,\U_{\leq k} \} \), $V=[v_1,\ldots,v_n]\in\R^{d \times n}$, and \(0\leq l' < l \leq d\) be a pair of integers. Let \(\mu'\) be the \(l\)-proportional volume sampling distribution over \(\U\) with hard-core measure \(\mu\) of parameter $\lambda$, i.e. \(\mu'(S)\propto\lambda^SE_l\pr{V_SV_S^\top}\) for all \(S\in \U\). There are

\begin{itemize}
\item {}an implementation to sample from \(\mu'\) that  runs in \(O\pr{n^4lk^2\log(lk)}\) number of arithmetic operations, and
\item a deterministic algorithm that outputs a set $S^*\in \U$ of size \(k\) such that
\begin{equation}
\left(
\frac{E_{l'}(V_{S^*} V_{S^*}^\top)}{E_l(V_{S^*} V_{S^*}^\top)} \right)^{\frac{1}{l-l'}}\geq \Expectation{\SS\sim\mu'}{\left(\frac{E_{l'}(V_\SS V_\SS^\top)}{E_l(V_\SS V_\SS^\top)}\right)^{\frac{1}{l-l'}}} \label{eq:deterministic-gen-ineq}
\end{equation}
that  runs in \(O\pr{n^4lk^2\log(lk)}\) number of arithmetic operations.
\end{itemize}
Moreover, let \(W=[w_{i,j}]\) be a matrix of \(m\) vectors where \(w_{i,j}=v_i\) for all \(i\in[n]\) and \(j\). Denote \(U\) the index set of \(W\). Let \(\mu'\) be the \(l\)-proportional volume sampling over all subsets \(S\subseteq U\) of size \(k\) with  measure \(\mu\) that is uniform, i.e. \(\mu'(S)\propto E_l\pr{W_SW_S^\top}\) for all \(S \subseteq U,|S|=k\).
There are
\begin{itemize}
\item
an implementation to sample from \(\mu'\) that  runs in \(O\pr{n^2(d-l+1)l^2d^2k\log d}\) number of arithmetic operations,
and\item
 a deterministic algorithm  that outputs a set $S^*\in \U$ of size $k$ such that
\begin{equation}
\left(
\frac{E_{l'}(W_{S^*} W_{S^*}^\top)}{E_l(W_{S^*} W_{S^*}^\top)} \right)^{\frac{1}{l-l'}}\geq \Expectation{\SS\sim\mu'}{\left(\frac{E_{l'}(W_\SS W_\SS^\top)}{E_l(W_\SS W_\SS^\top)}\right)^{\frac{1}{l-l'}}} \label{eq:deterministic-gen-ineq-with-rep}
\end{equation}
that  runs in \(O\pr{n^2\pr{(d-l'+1)l'^2+(d-l+1)l^2}d^2k\log d}\) number of arithmetic operations.
\end{itemize}
\end{theorem}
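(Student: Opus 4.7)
The plan is to generalize each of the four implementations from Sections \ref{Sec:EfficientRandomizedProportionalVolume}--\ref{Sec:deterministicAlgwithRep} by replacing $\det=E_d$ with $E_l$ and $E_{d-1}$ with $E_{l'}$ wherever they appear, and then re-verifying that the underlying polynomial-coefficient-extraction machinery (Lemma~\ref{sumofProductDet} for the hard-core case and Lemmas~\ref{lem:computeSize-d-SumDeternimantFixedIntersection}--\ref{lem:computeSumDetContainingFixedSet} for the repetition case) still applies with the parameter $d_0$ set to $l$ or $l'$ rather than $d$ or $d-1$. The point is that Lemma~\ref{sumofProductDet} was already stated for an arbitrary $0\le d_0\le d$, so nothing in its proof has to change. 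The running times will only differ in the degree bounds on $t_2$, which is why the bounds above involve $l$ and $l'$ in place of $d$.

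First, for the randomized sampler of the $l$-proportional volume measure $\mu'(S)\propto \lambda^S E_l(V_SV_S^\top)$ over $\U\in\{\U_k,\U_{\le k}\}$, I would mimic the proof of Theorem~\ref{thm:effic-randomized-hardcore}: sample element-by-element, computing each conditional inclusion probability as the ratio
\[
\frac{\sum_{S\in\U,\,I\cup\{i\}\subseteq S,\,J\cap S=\emptyset}\lambda^S E_l(V_SV_S^\top)}{\sum_{S\in\U,\,I\subseteq S,\,J\cap S=\emptyset}\lambda^S E_l(V_SV_S^\top)}.
\]
Using the Cauchy--Binet identity $E_l(V_SV_S^\top)=\sum_{|T|=l,\,T\subseteq S}\det(V_T^\top V_T)$ both sums fit the form \eqref{exp:hardcoreSum} with $d_0=l$, so Lemma~\ref{sumofProductDet} evaluates each in $O(n^3 lk|I|\log(lk))$ arithmetic operations, giving the total claimed in the theorem. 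For the deterministic version one instead keeps track of the conditional expectation of the ratio $E_{l'}/E_l$, which splits into a ratio of sums of exactly the same form: the numerator is Lemma~\ref{sumofProductDet} with $d_0=l'$ and the denominator is with $d_0=l$. The method of conditional expectations then selects, at each step, the branch that does not increase the conditional expectation of $E_{l'}/E_l$; Jensen's inequality (concavity of $z\mapsto z^{1/(l-l')}$) then converts the resulting guarantee on $E_{l'}/E_l$ into the claimed guarantee \eqref{eq:deterministic-gen-ineq} on its $(l-l')$-th root.

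For the last two parts (the exponentially-sized matrix $W$ with at most $n$ distinct columns), I would reuse the structure of Sections~\ref{sec:effreplace} and \ref{Sec:deterministicAlgwithRep}: the randomized algorithm fixes, for each step $t$, a count $k_t$ of copies of $v_t$ and computes the conditional probability as the ratio
\[
\binom{m_t}{k_t}\frac{\sum_{|S|=k,\,S\supseteq F\cup\{j\}}E_l(W'_S{W'_S}^\top)}{\sum_{|S|=k,\,S\supseteq F}E_l(W'_S{W'_S}^\top)},
\]
where $W'$ and $F$ are defined as in the proof of Theorem~\ref{Thm:effic-randomized-hardcore}. By Lemma~\ref{lem:computeSumDetContainingFixedSet} applied with $d_0=l$, each of these sums reduces to extracting a coefficient of the $d\times d$ polynomial determinant from Lemma~\ref{lem:computeSize-d-SumDeternimantFixedIntersection}, which can be compacted to an expression over only the $n$ distinct vectors and thus computed in $O(n(d-l+1)l^2 d^2\log d)$ operations; the outer loop over $t\le n$ and over $k_t\le k$ multiplies this by $nk$. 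For the deterministic version, we again evaluate the conditional expectation of $E_{l'}/E_l$, which is a ratio of the same kind of sums with $d_0=l'$ for the numerator and $d_0=l$ for the denominator, picking $k_t$ to minimize it; Jensen on $z^{1/(l-l')}$ converts this into \eqref{eq:deterministic-gen-ineq-with-rep}, with the two running-time contributions added.

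The main subtlety I expect, and the step I would check most carefully, is the Jensen-inequality bridge between the quantity we can actually control by conditional expectations (the raw ratio $E_{l'}/E_l$) and the quantity appearing in the theorem (its $(l-l')$-th root). Since $z\mapsto z^{1/(l-l')}$ is concave and monotone on $\mathbb{R}_+$, a conditional-expectation argument that produces an $S^*$ with $E_{l'}(V_{S^*})/E_l(V_{S^*})\le \mathbb{E}_{\mu'}[E_{l'}(V_\SS)/E_l(V_\SS)]$ combined with the approximation guarantee in Theorem~\ref{thm:nearind-to-Aopt-Gen} (which was itself derived by exactly this Jensen step, see \eqref{eq:z^(l-l')-convex}) still yields the same $\alpha$-approximation against $\CP$ that the randomized algorithm attains. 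Everything else is a mechanical substitution of parameters into the already-proved lemmas, and the running-time bounds follow by adding the contributions from the $l$ and $l'$ cases of Lemma~\ref{sumofProductDet} (respectively Lemma~\ref{lem:computeSumDetContainingFixedSet}).
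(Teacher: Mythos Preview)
Your proposal is correct and follows essentially the same route as the paper: both reduce the derandomization to controlling the raw ratio $E_{l'}/E_l$ via the Jensen step (the paper phrases it as convexity of $z\mapsto z^{l-l'}$, you as concavity of $z\mapsto z^{1/(l-l')}$, which are equivalent), and both observe that Lemmas~\ref{sumofProductDet} and~\ref{lem:computeSize-d-SumDeternimantFixedIntersection}--\ref{lem:computeSumDetContainingFixedSet} were already stated for general $d_0$, so one simply substitutes $d_0\in\{l',l\}$ for $d_0\in\{d-1,d\}$ and reads off the running times. Your exposition is somewhat more explicit than the paper's (which just says ``follow the exact same calculation\ldots replacing $T,R,d-1,d$ with $T',T,l',l$''), but the argument is the same.
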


As in Observation \ref{obs:n=k+d^2}, note that  we can replace \(n=k+d^2\) in all running times in Theorem \ref{thm:effic-randomized-hardcore-Gen}  so that running times of all variants of proportional volume sampling are independent of \(n\).
We also note, as in Remark \ref{rem:run-time-size-with-eps}, that running times of \(l\)-proportional volume sampling over \(m\) vectors with \(n\) distinct vectors has an extra factor of \(k\log m\) after taking into account the size of numbers in computation, allowing us to do sampling over exponential-sized ground set \([m]\). 

\begin{proof}
By the convexity of \(f(z)=z^{l-l'}\) over positive reals \(z\), we have \(\Expectation{}{X}\geq \left(\Expectation{}{X^{\frac{1}{l-l'}} } \right)^{l-l'} \) for a nonnegative random variable \(X\). Therefore, to show \eqref{eq:deterministic-gen-ineq}, it is sufficient to show that \begin{equation}
\frac{E_{l'}(V_{S^*} V_{S^*}^\top)}{E_l(V_{S^*} V_{S^*}^\top)} \geq \Expectation{\SS\sim\mu'}{\frac{E_{l'}(V_\SS V_\SS^\top)}{E_l(V_\SS V_\SS^\top)}}
\end{equation} That is, it is enough to derandomized with respect to the objective
\(\frac{E_{l'}(V_{S} V_{S}^\top)}{E_l(V_{S} V_{S}^\top)}\), and the same is true for showing \eqref{eq:deterministic-gen-ineq-with-rep}. Hence, we choose to calculate the conditional expectations with respect to this objective.  

We follow the exact same calculation for \(l\)-proportional volume sampling for generalized ratio objective as original  proofs of efficient implementations of all four algorithms in \(A\)-optimal objective. We observe that those proofs in \(A\)-optimal objective ultimately rely on the ability to, given disjoint \(I,J\subseteq [n]\) (or in the other case, \([m]\)),  efficiently compute
\begin{align*}
\sum_{S \in \U,I\subseteq S,J\cap S=\phi}\lambda^{S}\sum_{|R|=d,R\subseteq S}\det(V_RV_R^\top)\text{ and }\sum_{S \in \U,I\subseteq S,J\cap S=\phi}\lambda^{S}\sum_{|T|=d-1,T\subseteq S}\det(V_T^\top V_T)
\end{align*}  (or in the other case, replace \(V\) with \(W\) and \(\lambda^S=1\) for all \(S\)). The proofs for generalized ratio objective  follow the same line as those proofs of four algorithms, except that we instead need to efficiently compute
\begin{align*}
\sum_{S \in \U,I\subseteq S,J\cap S=\phi}\lambda^{S}\sum_{|T|=l,R\subseteq S}\det(V_T^\top V_T)\text{ and }\sum_{S \in \U,I\subseteq S,J\cap S=\phi}\lambda^{S}\sum_{|T'|=l',T'\subseteq S}\det(V_{T'}^\top V_{T'})
\end{align*} (note the change of \(R,T\) of size \(d,d-1\) to \(T,T'\) of size \(l,l'\) respectively). But the computations can indeed be done efficiently by using different \(d_0=l',l\) instead of \(d_0=d-1,d\) when applying Lemmas \ref{sumofProductDet}, \ref{lem:computeSize-d-SumDeternimantFixedIntersection}, and \ref{lem:computeSumDetContainingFixedSet} in the proofs and then following a similar calculation. The proofs for running times are identical.
\end{proof}

\section{Integrality Gaps}

\subsection{Integrality Gap for $E$-Optimality}
\label{sec:alon-boppana}

Here we consider another objective for optimal design of experiments,
the $E$-optimal design objective, and show that our results in the
asymptotic regime do not extend to it. Once again, the input is a set
of vectors $v_1, \ldots, v_n \in \R^d$, and our goal is to select a
set $S\subseteq [n]$ of size $k$, but this time we minimize the
objective
$\bigl\|(\sum_{i \in S}{v_i v_i^\top})^{-1}\bigr\|$,
where $\| \cdot \|$ is the operator norm, i.e.~the largest singular
value. By taking the inverse of the objective, this is equivalent to maximizing \(\lambda_1(\sum_{i \in S}{v_i v_i^\top}) \), where \(\lambda_i(M)\) denotes the \(i\)th smallest eigenvalue of \(M\). This problem also has a natural convex relaxation, analogous to
the one we use for the $A$ objective:
\begin{align}
  &\max \lambda_1 \left(\sum_{i = 1}^n x_i v_i  v_i^\top \right) \label{eq:CP-E-obj}\\
  &\text{s.t.} \notag\\
  &\sum_{i = 1}^n{x_i} = k\\
  &0 \le x_i \le 1 \ \ \ \forall i \in [n]\label{eq:CP-E-bounds}
\end{align}
We prove the following integrality gap result for
\eqref{eq:CP-E-obj}--\eqref{eq:CP-E-bounds}.

\begin{theorem}\label{thm:E-int-gap}
  There exists a constant $c> 0$ such that the following holds. For
  any small enough $\epsilon > 0$, and all integers $d \ge
  d_0(\epsilon)$, if $k < \frac{cd}{\epsilon^2}$, then there exists an instance $v_1, \ldots v_n \in \R^d$ of the
  $E$-optimal design problem, for which the value $\CP$ of
  \eqref{eq:CP-E-obj}--\eqref{eq:CP-E-bounds} satisfies
  \[
  \CP > (1+\epsilon)\OPT = (1+\epsilon) \max_{S\subseteq [n]: |S| = k}\lambda_1 \left(\sum_{i \in S} v_i  v_i^\top \right) \]
\end{theorem}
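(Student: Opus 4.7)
The plan is to exhibit the gap on an instance coming from the complete graph $K_N$ on $N = d+1$ vertices, with vectors indexed by edges. Concretely, I would take $n = \binom{N}{2}$, and for each edge $e = \{u,v\}$ set $v_e = \tfrac{1}{\sqrt{2}}\,\pi(e_u - e_v)$, where $\pi:\R^N \to \mathbf{1}^\perp \cong \R^d$ is the orthogonal projection onto the mean-zero subspace. On this instance the full sum $\sum_e v_e v_e^\top$ is, up to the factor $\tfrac12$, the Laplacian of $K_N$ acting on $\mathbf{1}^\perp$, whose spectrum there is $N$ with multiplicity $d$.

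For the relaxation, I would use the uniform fractional solution $x_e = k/\binom{N}{2}$, which is clearly feasible. A direct computation gives $\sum_e x_e v_e v_e^\top = \tfrac{k}{N-1} I_d = \tfrac{k}{d} I_d$, so $\CP \ge k/d$. For any integral solution $S \subseteq E(K_N)$ with $|S|=k$, the matrix $\sum_{e \in S} v_e v_e^\top$ is $\tfrac12 L(G_S)$ restricted to $\mathbf{1}^\perp$, where $G_S$ is the subgraph with edge set $S$; its smallest eigenvalue on this subspace equals $\tfrac12 \mu_2(G_S)$, the algebraic connectivity of $G_S$. (Disconnected $G_S$ gives $\mu_2 = 0$ and hence infinite $E$-optimal objective, so such choices can be excluded.) Thus the integrality gap reduces to an upper bound on $\mu_2(G)$ over graphs $G$ on $N$ vertices with $k$ edges and average degree $\bar{d} := 2k/N$.

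The main obstacle, and the technical heart of the proof, is an Alon-Boppana style inequality for the unnormalized Laplacian of irregular graphs: I would establish that any such $G$ satisfies $\mu_2(G) \le \bar{d} - 2\sqrt{\bar{d}-1} + o_N(1)$ as $N \to \infty$. Following the classical Alon-Boppana paradigm, I would exhibit a test vector $f \perp \mathbf{1}$ with small Rayleigh quotient $f^\top L f / f^\top f$ by picking two vertices $u,v$ at large distance in $G$ and defining $f$ to decay away from each with Chebyshev-like weights, then orthogonalizing to $\mathbf{1}$. The non-regularity is the delicate point: the contribution of an edge $\{a,b\}$ to $f^\top L f$ depends on the local degrees at $a,b$, so one has to normalize the test vector by an appropriate function of degrees, or average the Rayleigh bound over random center pairs and invoke a double-counting argument that only sees the global average degree $\bar{d}$.

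Combining the two estimates yields
\[
\frac{\CP}{\OPT} \;\ge\; \frac{k/d}{\tfrac12\bigl(\bar{d} - 2\sqrt{\bar{d}-1}\bigr) + o(1)}
\;\ge\; 1 + \frac{2}{\sqrt{\bar{d}}} - o(1),
\]
using $k/d \approx \bar{d}/2$ for $N = d+1$. This exceeds $1+\epsilon$ whenever $\bar{d} < 4/\epsilon^2$, i.e.\ whenever $k < cd/\epsilon^2$ for some absolute constant $c$, as long as $d$ (and hence $N$) is taken large enough that the $o_N(1)$ error in the Alon-Boppana bound is negligible compared to $\epsilon/\sqrt{\bar{d}}$, giving the threshold $d \ge d_0(\epsilon)$ in the statement.
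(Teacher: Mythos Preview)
Your overall framework matches the paper exactly: the same complete-graph instance $K_{d+1}$ with projected edge-incidence vectors, the same uniform fractional solution, and the same reduction of the integrality gap to an Alon--Boppana-type upper bound on $\lambda_2$ of the \emph{unnormalized} Laplacian of an arbitrary (possibly irregular) graph with a given \emph{average} degree $\bar d = 2k/(d+1)$.

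The point where your proposal diverges from the paper, and where it has a genuine gap, is the proof of that Alon--Boppana bound. The paper does not claim the sharp constant $2$ you write; it proves only $\lambda_2(L_G)\le \Delta - c\sqrt{\Delta}$ for some absolute $c>0$ (which is all the theorem needs), and explicitly notes that such a statement for the unnormalized Laplacian with average degree does not seem to be in the literature. Their argument is \emph{not} the Chebyshev/two-far-vertices paradigm at all, but a short local construction: if some vertex has degree $\le \Delta-\tfrac{1}{10}\sqrt{\Delta}$, its indicator already works; otherwise a counting argument shows most vertices have degree $\le \Delta+\tfrac12\sqrt{\Delta}$, and among those one finds a vertex $u$ with a constant fraction of its neighbors also of small degree. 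The test vector is then $1$ at $u$, $1/\sqrt{\delta}$ on those small-degree neighbors, and $0$ elsewhere.

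Your sketch---Chebyshev-like weights radiating from two far centers, then ``normalize by degrees'' or ``average over random centers so that only $\bar d$ appears''---is precisely the step that does not obviously go through for the unnormalized Laplacian. The numerator $\sum_{(a,b)\in E}(f_a-f_b)^2$ is governed by the \emph{local} degrees along the support of $f$, and averaging over centers does not reduce this to a function of the global average degree: a ball around any center may sit in a region where degrees are well above $\bar d$, inflating the Rayleigh quotient regardless of where you start. Degree-reweighting the test vector couples the denominator to local degrees as well, so you are back to a normalized-Laplacian-type statement rather than the unnormalized one you need. In short, the phrase you flag as ``the delicate point'' is a real gap, not a routine detail. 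Since only some constant $c>0$ is required, either adopt the paper's local test-vector argument, or downgrade your claimed bound from $\bar d - 2\sqrt{\bar d - 1}+o(1)$ to $\bar d - c\sqrt{\bar d}$ and supply a concrete construction.
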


Recall that for the $A$-objective we achieve a
$(1+\epsilon)$-approximation for $k = \Omega(\frac{d}{\epsilon} +
\frac{\log(1/\epsilon)}{\epsilon^2})$. Theorem~\ref{thm:E-int-gap}
shows that such a result is impossible for the $E$-objective, for
which the results in~\cite{Allen-Zhu17nearoptimal} cannot be improved.

Our integrality gap instance comes from a natural connection to
spectral graph theory. Let us first describe the instance for any
given $d$. We first define $n = {d+1 \choose 2}$ vectors in
$\R^{d+1}$, one for each unordered pair $(i, j) \in {[d+1] \choose
  2}$. The vector corresponding to $(i,j)$, $i < j$, is $u_{ij}$ and
has value $1$ in the $i$-th coordinate, $-1$ in the $j$-th coordinate,
and $0$ everywhere else. In other words, the $u_{ij}$ vectors are the
columns of the vertex by edge incidence matrix $U$ of the complete
graph $K_{d+1}$, and $UU^\top = (d+1)I_{d+1} - J_{d+1}$ is the
(unnormalized) Laplacian of $K_{d+1}$. (We use $I_m$ for the $m\times
m$ identity matrix, and $J_m$ for the $m\times m$ all-ones matrix.) All the $u_{ij}$ are orthogonal to
the all-ones vector $1$; we define our instance by writing $u_{ij}$ in
an orthonormal basis of this subspace: pick any orthonormal basis $b_1,
\ldots, b_d$ of the subspace of $\R^{d+1}$ orthogonal to $1$, and
define $v_{ij} = B^\top u_{ij}$ for $B = (b_i)_{i = 1}^d$. Thus
\[M = \sum_{i = 1}^{d+1} \sum_{j = i+1}^{d+1} v_{ij} v_{ij}^\top =
(d+1) I_d.\]
 We consider the fractional solution $x = \frac{k}{{d+1
    \choose 2}} 1$, i.e.~each coordinate of $x$ is
$k/{d+1\choose 2}$. Then $M(x) = \sum_{i = 1}^{d+1} \sum_{j = i+1}^{d+1} x_{ij}v_{ij} v_{ij}^\top= \frac{2k}{d} I_d$, and the
objective value of the solution is $\frac{2k}{d}$.

Consider now any integral solution $S \subseteq {[d+1]\choose 2}$ of
the $E$-optimal design problem. We can treat $S$ as the edges of a
graph $G = ([d+1], S)$, and the Laplacian $L_G$ of this graph is $L_G
= \sum_{(i,j) \in S}{u_{ij} u_{ij}^\top}$. If the objective value
of $S$ is at most $(1+\epsilon)\CP$, then the smallest eigenvalue of
$M(S) = \sum_{(i,j) \in S}{v_{ij} v_{ij}^\top}$ is at least
$\frac{2k}{d(1+\epsilon)} \ge (1-\epsilon) \frac{2k}{d}$. Since $M(S)
= B^\top L_G B$, this means that the second smallest eigenvalue
of $L_G$ is at least $(1-\epsilon) \frac{2k}{d}$. The average degree
$\Delta$ of $G$ is $\frac{2k}{d+1}$. So, we have a graph $G$ on $d+1$
vertices with average degree $\Delta$ for which the second smallest
eigenvalue of its Laplacian is at least $(1-\epsilon)(1
-\frac{1}{d+1}) \Delta \ge (1-2\epsilon)\Delta$, where the inequality
holds for $d$ large enough. The classical Alon-Boppana
bound~(\cite{Alon86,Nilli91}) shows that, up to lower order terms, the
second smallest eigenvalue of the Laplacian of a $\Delta$-regular graph
is at most $\Delta - 2\sqrt{\Delta}$. If our graph $G$ were regular,
this would imply that $\frac{2k}{d+1} = \Delta \ge
\frac{1}{\epsilon^2}$. In order to prove Theorem~\ref{thm:E-int-gap},
we extend the Alon-Boppana bound to not necessarily regular graphs,
but with worse constants. There is an extensive body of work on
extending the Alon-Boppana bound to non-regular graphs: see the recent
preprint~\cite{SrivastavaT17} for an overview of prior work on this
subject. However, most of the work focuses either on the normalized
Laplacian or the adjacency matrix of $G$, and we were unable to find
the statement below in the literature.

\begin{theorem}\label{thm:alon-boppana}
  Let $G = (V, E)$ be a graph with average degree $\Delta =
  \frac{2|E|}{|V|}$, and let $L_G$ be its unnormalized Laplacian
  matrix. Then, as long as $\Delta$ is large enough, and $|V|$ is
  large enough with respect to $\Delta$,
  \[
  \lambda_2(L_G) \le \Delta - c\sqrt{\Delta},
  \]
  where $\lambda_2(L_G)$ is the second smallest eigenvalue of $L_G$,
  and $c> 0$ is an absolute constant.
\end{theorem}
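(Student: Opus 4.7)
The plan is to apply the Courant--Fischer characterization
\[
\lambda_2(L_G) \;=\; \min_{f \perp \mathbbm{1},\, f \ne 0} \frac{\sum_{uv \in E}(f(u)-f(v))^2}{\sum_v f(v)^2},
\]
and to exhibit a test function $f \perp \mathbbm{1}$ with Rayleigh quotient at most $\Delta - c\sqrt{\Delta}$, splitting on the minimum degree of $G$. If some $v_0 \in V$ has $d_{v_0} \le \Delta - 2c\sqrt{\Delta}$, I would take $f := \mathbbm{1}_{v_0} - \frac{1}{|V|-1}\mathbbm{1}_{V\setminus\{v_0\}}$; only the $d_{v_0}$ edges incident to $v_0$ contribute to the numerator, giving $f^\top L_G f = d_{v_0}(1 + \tfrac{1}{|V|-1})^2$ and $\|f\|^2 = \tfrac{|V|}{|V|-1}$, so the Rayleigh quotient equals $d_{v_0}\tfrac{|V|}{|V|-1} \le \Delta - c\sqrt{\Delta}$ once $|V|$ is large. (Equivalently, Fiedler's inequality $\lambda_2(L_G) \le \tfrac{|V|}{|V|-1}d_{\min}$ gives the same conclusion.)

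Otherwise $d_v > \Delta - 2c\sqrt{\Delta}$ for every $v$, and since $\sum_v d_v = \Delta|V|$ this forces $\sum_v (d_v - \Delta)_+ \le 2c|V|\sqrt{\Delta}$; hence the "atypical" set $H := \{v : d_v > \Delta + C\sqrt{\Delta}\}$ has $|H| \le \tfrac{2c}{C}|V|$, which can be made an arbitrarily small fraction of $|V|$ by taking $C$ a large enough multiple of $c$. I would fix a BFS radius $r$ growing slowly with $\Delta$. Since the vertices within graph-distance $r$ of $H$ number at most $|H|\bigl(\Delta + C\sqrt{\Delta}\bigr)^r$, for $|V|$ large enough relative to $\Delta^r$ one can find $v_0 \in V\setminus H$ whose ball $B_r(v_0)$ in $G$ is disjoint from $H$; all vertices in $B_r(v_0)$ then have degree in the tight window $(\Delta - 2c\sqrt{\Delta},\,\Delta + C\sqrt{\Delta}]$.

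Let $L_0,\ldots,L_r$ denote the BFS levels around $v_0$ and set $m_i := |E(L_i, L_{i+1})|$. I would use the test function $f(u) := g(d(u,v_0))$ supported on $B_r(v_0)$, where $g(i) = p(i)(\Delta-1)^{-i/2}$ and $p$ is a trigonometric cutoff profile (e.g.\ $p(i) = \sin(\tfrac{\pi(i+1)}{r+2})$) that mimics the spectral-bottom eigenfunction of the Laplacian of the infinite $\Delta$-regular tree. Since edges within a single BFS level contribute zero to the numerator, the Rayleigh quotient becomes
\[
\frac{\sum_{i=0}^{r-1}(g(i)-g(i+1))^2 m_i + g(r)^2 m_r}{\sum_{i=0}^{r} g(i)^2 |L_i|},
\]
whose analogue on the $\Delta$-regular tree (where $m_i = (\Delta-1)|L_i|$ and $|L_{i+1}| = (\Delta-1)|L_i|$) evaluates to $\Delta - 2\sqrt{\Delta-1}$. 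The tight degree window lets me show that the actual value differs by only $O((c+C)\sqrt{\Delta})$, giving a Rayleigh quotient of at most $\Delta - \bigl(2 - O(c+C)\bigr)\sqrt{\Delta} \le \Delta - c\sqrt{\Delta}$ for $c$ and $C$ chosen as sufficiently small absolute constants. Finally, orthogonalizing against $\mathbbm{1}$ by subtracting $\langle f, \mathbbm{1}\rangle/|V|$ changes the Rayleigh quotient by only $O(|B_r(v_0)|/|V|) = o(1)$ once $|V|$ is large relative to $\Delta^r$, which the hypothesis permits.

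The main obstacle will be the quantitative version of the second case: showing that the perturbation of the tree Rayleigh quotient by the non-regularity of $G$ inside $B_r(v_0)$ is genuinely $O((c+C)\sqrt{\Delta})$ and not larger. Degree fluctuations inflate both the cross-level edge counts $m_i$ and the growth rates $|L_{i+1}|/|L_i|$, and edges internal to a level (which contribute $0$ to the numerator but do push $m_i$ up through the degree sum) must be absorbed carefully. Because we only require \emph{some} absolute constant $c$ in the conclusion, rather than the classical $c \approx 2$, there is ample slack to tolerate these perturbations; the assumption that $|V|$ be large relative to $\Delta$ is precisely what allows $r$ to be chosen large enough that the finite tree calculation becomes asymptotic and the Rayleigh quotient approaches its tree value.
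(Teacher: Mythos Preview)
Your overall strategy---case split on the minimum degree, then in the ``near-regular'' case build a Nilli-style test function around a well-chosen root---is reasonable and is in the spirit of the paper's approach. However, there is a genuine gap in the second case, and it is not where you think it is.

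You claim that the vertices within distance $r$ of $H$ number at most $|H|\,(\Delta + C\sqrt{\Delta})^r$. This bound is not valid: the degree cap $\Delta + C\sqrt{\Delta}$ holds for vertices \emph{outside} $H$, but vertices \emph{in} $H$ may have arbitrarily large degree, so even $N_1(H)$ is not controlled by $|H|$ alone. Concretely, once $\Delta > \tfrac{1}{4c^2}$ (which eventually happens, since $c$ is a fixed small constant and the theorem requires $\Delta$ large), one can build a graph with average degree $\Delta$, every degree at least $\Delta - 2c\sqrt{\Delta}$, and a single hub $h$ adjacent to all other vertices; then $h \in H$ and $B_1(v_0)\ni h$ for every $v_0$, so no $v_0$ with $B_r(v_0)\cap H=\emptyset$ exists. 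The step you flagged as the ``main obstacle'' (the perturbation analysis of the tree Rayleigh quotient) is therefore moot: the preceding existence step already fails.

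The paper's proof sidesteps this by working only at radius $1$ and \emph{not} demanding a root whose neighborhood avoids the high-degree set entirely. Instead, by averaging the edge count $|E(S,T)| \le |E|$, it finds a low-degree vertex $u$ at most a bounded \emph{fraction} $\alpha<1$ of whose neighbors lie in the high-degree set $T$; it then takes the two-level test vector $x_u=1$, $x_v=1/\sqrt{\delta}$ for neighbors $v$ of $u$ in $S$, and $0$ elsewhere. The high-degree neighbors are simply excluded from the support, and their contribution to the Rayleigh quotient is absorbed as a constant-fraction loss. Because only a constant $c$ (rather than the sharp $2$) is needed, this crude radius-$1$ construction suffices and yields the cleaner requirement $|V|\ge C\Delta$. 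If you want to rescue the deeper-BFS approach, you would need an analogous averaging argument showing that \emph{some} root has only a controlled fraction of high-degree vertices at each BFS level, not a root whose ball avoids $H$ altogether.
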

\begin{proof}
  By the variational characterization of eigenvalues, we need to find
  a unit vector $x$, orthogonal to $1$, such that $x^\top L_G x
  \le \Delta - c\sqrt{\Delta}$. Our goal is to use a vector $x$
  similar to the one used in the lower bound on the number of edges of
  a spectral sparsifier in~\cite{BSS12}. However, to apply this
  strategy we need to make sure that $G$ has a low degree vertex most
  of whose neighbors have low degree. This requires most of the work
  in the proof.

  So that we don't have to worry about making our ``test vector''
  orthogonal to $1$, observe that
  \begin{equation}
    \label{eq:test-vector}
    \lambda_2(L_G) = \min_{x \in \R^V}
    \frac{x^\top L_G  x}{x^\top x - (1^\top x)^2/|V|}.
  \end{equation}
  Indeed, the denominator equals $y^\top y$ for the projection
  $y$ of
  $x$ orthogonal to $1$, and the numerator is equal to $y^\top
  L_G y$. Here, and in the remainder of the proof, we work in $\R^V$,
  the space of $|V|$-dimensional real vectors indexed by $V$, and
  think of $L_G$ as being indexed by $V$ as well.

  Observe that if $G$ has a vertex $u$ of degree $\Delta(u)$ at most
  $\Delta - \frac{1}{10}\sqrt{\Delta}$, we are done. In that case we
  can pick $x \in \R^V$ such that $x_u = 1$ and $x_v = 0$ for all $v
  \neq u$. Then
  \[
  \frac{x^\top L_G  x}{x^\top x - (1^\top x)^2/n}
  = \frac{\sum_{(u,v) \in E}{(x_u - x_v)^2}}{1 - \frac{1}{|V|}}
  \le \frac{\Delta - \frac{1}{10}\sqrt{\Delta}}{1 - \frac{1}{|V|}},
  \]
  which, by \eqref{eq:test-vector}, implies the theorem for all large
  enough $|V|$. Therefore, for the rest of the proof we will assume
  that $\Delta(u) \ge \Delta - \frac{1}{10}\sqrt{\Delta}$ for all $u
  \in V$.

  Define $T = \{u \in V: \Delta(u) \ge \Delta +
  \frac{1}{2}\sqrt{\Delta}\}$ to be the set of large degree vertices,
  and let $S = V \setminus T$. Observe that
  \begin{align*}
  |V|\Delta &\ge |T|\bigl(\Delta + \frac{1}{2}\sqrt{\Delta}\bigr) + |S|\bigl(\delta -  \frac{1}{10}\sqrt{\Delta}\bigr)\\
  &=  |V|\Delta + \bigl(\frac{1}{2}|T| - \frac{1}{10}|S|\bigr)\sqrt{\Delta}.
  \end{align*}
  Therefore, $|S| \ge {5}|T|$, and, since $T$ and $S$
  partition $V$, we have $|S| \ge \frac{5}{6}|V|$.

  Define
  \[
  \alpha = \min\left\{\frac{|\{v \sim u: v \in T\}|}{\Delta - \frac{1}{10}\sqrt{\Delta}}: u \in S\right\},
  \]
  where $v\sim u$ means that $v$ is a neighbor of $u$.  We need to
  find a vertex in $S$ such that only a small fraction of its
  neighbors are in $T$, i.e.~we need an upper bound on $\alpha$. To
  show such an upper bound, let us define $E(S, T)$ to be the set of
  edges between $S$ and $T$; then
  \[
  \frac12 \Delta |V| = |E| \ge |E(S, T)|
  \ge |S| \alpha \left(\Delta - \frac{1}{10}\sqrt{\Delta}\right)
  \ge \frac{5}{6}|V| \alpha \Delta \left(1 - \frac{1}{10\sqrt{\Delta}}\right).
  \]
  Therefore, $\alpha \le \frac35 \bigl(1 - \frac{1}{10\sqrt{\Delta}}\bigr)^{-1}$.

  Let $u \in S$ be a vertex  with at most
  $\alpha \Delta - \frac{\alpha}{10}\sqrt{\Delta}$  neighbors in $T$, and let $\delta = |\{v \sim u:
  v \in S\}|$. By the choice of $u$,
  \[
  \delta \ge \Delta(u)-\alpha  \Delta + \frac{\alpha}{10}\sqrt{\Delta}
  \ge (1-\alpha)\Delta \Biggl(1  - \frac{1}{10\sqrt{\Delta}}\Biggr).
  \]
  Assume that $\Delta$ is large enough so that $\Biggl(1  -
  \frac{1}{10\sqrt{\Delta}}\Biggr) \ge \frac{16}{25}$. Then, $\delta \ge
  \frac{16}{25}(1-\alpha)\Delta$.

  We are now ready to define our test vector $x$ and complete the
  proof. Let $x_u = 1$, $x_v = \frac{1}{\sqrt{\delta}}$ for any neighbor $v$
  of $u$ which is in $S$, and $x_w = 0$ for any $w$ which is in $T$ or
  is not a neighbor of $u$. We calculate
  \begin{align*}
  x^\top L_G x &=
  |\{v \sim u: v \in S\}| \left(1 -  \frac{1}{\sqrt{\delta}}\right)^2
  +
  |\{v \sim u: v \in T\}|
  +
  \sum_{v \sim u, v \in S}\sum_{w \sim v, w \neq u} \frac{1}{\delta}\\
  &\le
  \delta \left(1 -  \frac{1}{\sqrt{\delta}}\right)^2
  + \Delta(u) - \delta
  + \Delta + \frac12 \sqrt{\Delta} - 1,
  \end{align*}
  where we used the fact for any $v \in S$, $\Delta(v) \le \Delta +
  \frac12 \sqrt{\Delta}$ by definition of $S$. The right hand side simplifies
  to
  \[
  \Delta(u) - 2\sqrt{\delta} + \Delta + \frac12 \sqrt{\Delta}
  \le
  2\Delta - \left(\frac{8}{5}\sqrt{(1-\alpha)} - \frac12\right)\sqrt{\Delta} .
  \]
  Since $\alpha \le \frac35\bigl(1 -
  \frac{1}{10\sqrt{\Delta}}\bigr)^{-1}$, $\frac{8}{5}\sqrt{(1-\alpha)} - \frac12
  \ge \frac12$ for all large enough $\Delta$, and by
  \eqref{eq:test-vector}, we have
  \[
  \lambda_2(G) \le
  \frac{x^\top L_G x}{x^\top x - (1^\top x)^2}
  \le \frac{2\Delta - \frac12 \sqrt{\Delta}}{2\left(1 - \frac{1 +
        \sqrt{\Delta}}{2|V|}\right)}
  = \left(\Delta - \frac14 \sqrt{\Delta}\right)
  \left(1 - \frac{1 + \sqrt{\Delta}}{2|V|}\right)^{-1}.
  \]
  The theorem now follows as long as $|V| \ge C \Delta$ for a
  sufficiently large constant $C$.
\end{proof}

To finish the proof of Theorem~\ref{thm:E-int-gap}, recall that the
existence of a $(1+\epsilon)$-approximate solution $S$ to our instance
implies that, for all large enough $d$, the graph $G = ([d+1], S)$
with average degree $\Delta = \frac{2k}{d+1}$ satisfies
$\lambda_2(L_G) \ge (1 - 2\epsilon)\Delta$. By
Theorem~\ref{thm:alon-boppana}, $\lambda_2(L_G) \le \Delta -
c\sqrt{\Delta}$  for large enough $d$ with respect to $\Delta$. We
have $\Delta \ge \frac{c^2}{4\epsilon^2}$, and re-arranging the terms
proves the theorem.

Note that the proof of Theorem~\ref{thm:alon-boppana} does not require
the graph $G$ to be simple, i.e.~parallel edges are allowed. This
means that the integrality gap in Theorem~\ref{thm:E-int-gap} holds
for the $E$-optimal design problem with repetitions as well.

\subsection{Integrality Gap for $A$-optimality}\label{sec:integralityA}
\begin{theorem} \label{thm:IntegralityGapLowerBound}
For any given positive integers \(k,d \), there exists an instance \(V=[v_1,\ldots,v_n]\in\R^{d \times n}\) to the \(A\)-optimal design problem such that 
\[ \OPT \geq \left(\frac{k}{k-d+1}-\delta\right)\cdot \CP \]
for all $\delta >0,$ where $\OPT$ denotes the value of the optimal integral solution and $\CP$ denotes the value of the convex program.\end{theorem}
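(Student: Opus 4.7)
The plan is to observe that this is the specialization $l'=d-1$, $l=d$ of the generalized ratio integrality gap (Theorem \ref{thm:IntegralityGapLowerBound-Gen}): with those parameters, $(E_{l'}(V_SV_S^\top)/E_l(V_SV_S^\top))^{1/(l-l')} = E_{d-1}(V_SV_S^\top)/\det(V_SV_S^\top) = \tr((V_SV_S^\top)^{-1})$ by \eqref{eq:keyobsv}, which is exactly the $A$-optimal objective; and $\frac{k}{k-l'} = \frac{k}{k-d+1}$. So in principle the proof is already there. Still, I would include a self-contained argument by recording the instance and the two-sided bound so that the $A$-optimal case can be read on its own.

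Concretely, I would take $v_i = \sqrt{N}\, e_i$ for $i=1,\ldots,d-1$ and $v_d = e_d$, together with $k$ copies of each of these $d$ vectors (so that $n = dk$ and any integral multiplicity vector supported on $\{v_1,\ldots,v_d\}$ with coordinates at most $k$ is feasible, whether or not repetitions are allowed). Here $N > 0$ is a free parameter to be sent to infinity. Any feasible integral solution $S$ of size $k$ is described by a multiplicity vector $y \in \Z_{\ge 0}^d$ with $\sum_i y_i = k$, and
\[
V_S V_S^\top = \diag(y_1 N,\, y_2 N,\, \ldots,\, y_{d-1} N,\, y_d),
\qquad
\tr(V_S V_S^\top)^{-1} = \frac{1}{N}\sum_{i=1}^{d-1}\frac{1}{y_i} + \frac{1}{y_d}.
\]
Finiteness forces $y_i \ge 1$ for every $i$, so as $N \to \infty$ the first sum vanishes and the expression is minimized by $y_1 = \cdots = y_{d-1} = 1$, $y_d = k - d + 1$. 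Hence $\OPT \to \frac{1}{k-d+1}$.

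For the relaxation, I would distribute the fractional mass of the first $d-1$ coordinates thinly across their $k$ copies and concentrate the last coordinate: set $x_i = \frac{k}{\sqrt{N}}$ on each of the $k$ copies of $v_i$ for $i < d$ (legal provided $N \ge 1$, giving $x_i \le 1$ on each copy) and $x_d = k - (d-1)\frac{k}{\sqrt{N}}$ split appropriately across the copies of $v_d$. Because each $v_i$ is a rescaled standard basis vector, $V(x) V(x)^\top$ is again diagonal with entries $(k\sqrt{N},\ldots,k\sqrt{N},\, x_d)$, so
\[
\CP \le \tr\bigl(V(x)V(x)^\top\bigr)^{-1} = \frac{d-1}{k\sqrt{N}} + \frac{1}{x_d} \xrightarrow{N \to \infty} \frac{1}{k}.
\]
Therefore $\OPT/\CP \to \frac{k}{k-d+1}$, and for any $\delta > 0$ the conclusion $\OPT \ge (\frac{k}{k-d+1} - \delta) \CP$ is obtained by choosing $N$ large enough.

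The only nontrivial point is that the instance must be feasible in the without-repetitions model, which is precisely why we put $k$ copies of each $v_i$ into the ground set: with this padding the per-coordinate bound $x_i \le 1$ is not active for the fractional solution described above, and an integral solution is still free to pick up to $k$ copies of any single direction. Everything else is a diagonal calculation, so there is no real obstacle beyond bookkeeping.
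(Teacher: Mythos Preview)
Your proposal is correct and follows essentially the same construction as the paper: scaled standard basis vectors $v_i = \sqrt{N}\,e_i$ for $i<d$ together with $v_d=e_d$ (the paper writes $N$ in place of $\sqrt{N}$, a harmless reparametrization), padded with $k$ copies each so the without-repetitions constraint is slack. The only difference is that the paper computes the exact fractional optimum via Lagrange multipliers, whereas you simply exhibit a feasible fractional point and upper bound $\CP$; both give the same limit $\CP\to 1/k$ and $\OPT\to 1/(k-d+1)$, so this distinction is cosmetic.
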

This implies that the gap is at least $\frac{k}{k-d+1}$. The theorem statement applies to both with and without repetitions.

\begin{proof} 
The instance $V=[v_1,\ldots,v_n]$ will be the same  with or without repetitions. For each \(1\leq i \leq d\), let \(e_i\) denote the unit vector in direction of axis \(i\).  
Let $v_i = N\cdot e_i$ for each $i=1,\ldots,d-1$, where \(N>0\) is a constant to be chosen later and $v_d=e_d$. Set the rest $v_i,i>d$ to be at least $k$ copies of each of these $v_i$ for $i\leq d$, as we can make $n$ as big as needed. Hence, we may assume that we are allowed to pick only $v_i,i\leq d$, but with repetitions.

The fractional optimal solution which can be calculated by Lagrange's multiplier technique is $y^* = (\delta_0,\delta_0,\ldots,\delta_0,k-(d-1)\delta_0)$ for small $\delta_0 = \frac{k}{\sqrt{N}+d-1}$. The optimal integral solution is $x^*=(1,1,\ldots,1,k-d+1)$. Therefore, as $N\rightarrow \infty$, we have $\CP = \frac{d-1}{\delta_0 N}+\frac{1}{k-(d-1)\delta_0} \rightarrow \frac{1}{k}$, and $\OPT = \frac{d-1}{N} + \frac{1}{k-d+1} \rightarrow \frac{1}{k-d+1}$. Hence,
\[\frac{\OPT}{\CP} \rightarrow \frac{k}{k-d+1}\]
proving the theorem.
\end{proof}

\section{Hardness of Approximation}

In this section we show that the $A$-optimal design problem is
$\NP$-hard to approximate within a fixed constant when $k=d$. To the
best of our knowledge, no hardness results for this problem were previously known.
Our reduction is inspired by the hardness of approximation for
$D$-optimal design proved in~\cite{summa2015largest}. The hard problem
we reduce from is an approximation version of Partition into Triangles.

Before we prove our main hardness result, Theorem~\ref{thm:hardness},
we describe the class of instances we consider, and prove some basic
properties. Given a graph $G = ([d], E)$, we define a vector $v_e$ for
each edge $e = (i, j)$ so that its $i$-th and $j$-th coordinates are
equal to $1$, and all its other coordinates are equal to $0$. Then the
matrix $V = (v_e)_{e \in E}$ is the undirected vertex by edge
incidence matrix of $G$. The main technical lemma needed for our
reduction follows.

\begin{lemma}\label{lm:triangles}
  Let $V$ be the vertex by edge incidence matrix of a graph $G = ([d],
  E)$, as described above. Let $S \subseteq E$ be a set of $d$ edges
  of $G$ so that the submatrix $V_S$ is invertible. Then each
  connected component of the subgraph $H = ([d], S)$ is the disjoint
  union of a spanning tree and an edge. Moreover, if $t$ of the
  connected components of $H$ are triangles, then
  \begin{itemize}
  \item for $t = \frac{d}{3}$, $\tr((V_SV_S^\top)^{-1}) = \frac{3d}{4}$;
  \item for any $t$, $\tr((V_SV_S^\top)^{-1}) \ge d - \frac{3t}{4}$.
  \end{itemize}
\end{lemma}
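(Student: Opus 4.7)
The plan is to first use a rank argument on the (unsigned) vertex–edge incidence matrix to pin down the structure of each component, then decompose the trace over components by block‑diagonality, and finally establish a per‑component bound by an inductive ``pendant edge addition'' argument seeded by an explicit calculation on an odd cycle.

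For the structure claim, since distinct components share no vertices, $V_S$ is block‑diagonal (up to row/column permutation) with blocks $V_{S_i}$, so invertibility of $V_S$ forces $\mathrm{rank}(V_{S_i}) = n_i$ for every component $C_i$. The classical fact that the unsigned incidence matrix of a connected graph on $n_i$ vertices has rank $n_i$ if the graph is non‑bipartite and $n_i - 1$ otherwise then tells me every $C_i$ is non‑bipartite. Combining $\mathrm{rank}(V_{S_i}) = n_i \le |S_i|$ with $\sum_i |S_i| = d = \sum_i n_i$, equality must hold, so each $C_i$ has exactly $n_i$ edges, i.e.\ a spanning tree plus a single extra edge, and that extra edge must close an odd cycle.

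Block‑diagonality also gives
$$\tr\bigl((V_SV_S^\top)^{-1}\bigr) = \sum_i \tr\bigl((V_{C_i}V_{C_i}^\top)^{-1}\bigr),$$
so I argue component‑by‑component. For a triangle, $V_\triangle V_\triangle^\top = I_3 + J_3$ has eigenvalues $4,1,1$, giving contribution exactly $\tfrac14 + 1 + 1 = \tfrac94$; summing over $t = d/3$ triangles yields $\tfrac{3d}{4}$, which proves the first bullet. For the second bullet I will show that every non-triangle component contributes at least $n_i$, because then
$$\tr\bigl((V_SV_S^\top)^{-1}\bigr) \ge t\cdot\tfrac{9}{4} + (d - 3t) = d - \tfrac{3t}{4}.$$

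To prove the per-component bound for non-triangle components, I build $C$ from its odd cycle of length $n_0$ by adding pendant edges one at a time. For the base case, the identity $e_v = \tfrac12\sum_{j=0}^{n_0-1}(-1)^j v_{e_{v+j}}$ (alternating signs around the cycle, which telescopes correctly because $n_0$ is odd) shows every column of $V_C^{-1}$ has $n_0$ entries of magnitude $\tfrac12$, so $\tr((V_CV_C^\top)^{-1}) = \|V_C^{-1}\|_F^2 = n_0^2/4$. For the inductive step, attaching a pendant edge from existing vertex $v$ to a new vertex $u$ produces
$$V_{C'}V_{C'}^\top = \begin{pmatrix} M + e_v e_v^\top & e_v \\ e_v^\top & 1 \end{pmatrix},\qquad M := V_CV_C^\top,$$
and the block‑inverse formula (with Schur complement of the $1$ entry equal to $M$) gives $\tr((V_{C'}V_{C'}^\top)^{-1}) = \tr(M^{-1}) + 1 + (M^{-1})_{vv}$, so the trace grows by at least $1$ at every step.

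The verification then splits into two cases. If $n_0 \ge 5$, then $n_0^2/4 \ge n_0$, and after $n-n_0$ pendants the trace is at least $n_0^2/4 + (n-n_0) \ge n$. If $n_0 = 3$ and $C$ is not just the triangle, at least one pendant is added; by Sherman–Morrison the triangle's inverse is $I_3 - \tfrac14 J_3$, so $(M^{-1})_{vv} = \tfrac34$ at every triangle vertex, and the first pendant contributes $1 + \tfrac34 = \tfrac74$, bringing the trace to $\tfrac94 + \tfrac74 = 4 = n$ at the four-vertex stage; subsequent pendants each add at least $1$, preserving the inequality. The main subtlety is this $n_0 = 3$ case: the triangle's deficit of $\tfrac34$ below $n$ is exactly matched by the $\tfrac34$ boost from the first pendant, a fortunate numerical match enabled by the triangle's symmetry which ensures the first pendant has nowhere else to attach than a vertex with $(M^{-1})_{vv} = \tfrac34$.
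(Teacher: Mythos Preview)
Your proof is correct and takes a genuinely different route from the paper. The paper proceeds by explicitly constructing the rows $u_e$ of $V_S^{-1}$: for an edge $e$ on the unique cycle $C_\ell$ of its component, $u_e$ has entries $\pm\tfrac12$ on all $d_\ell$ vertices of that component (so $\|u_e\|^2 = d_\ell/4$), while for $e$ off the cycle, $u_e$ has $\pm 1$ entries on the vertices of a subtree (so $\|u_e\|^2 \ge 1$); summing $\|u_e\|^2$ gives $\tr((V_SV_S^\top)^{-1}) \ge \sum_\ell \bigl(|C_\ell|\,d_\ell/4 + (d_\ell - |C_\ell|)\bigr)$, from which $d - 3t/4$ follows. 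You instead compute the odd-cycle base case (your $n_0^2/4$ coincides with the paper's $|C_\ell|\,d_\ell/4$ when the component is the bare cycle) and then grow the pendant trees one leaf at a time via a Schur-complement identity, picking up at least $1$ per step. The paper's explicit-inverse approach is more geometric and delivers the whole inverse in one description, making the equality case immediate; your inductive approach is purely algebraic and avoids having to guess the form of $u_e$, at the price of the small case split at $n_0 = 3$ (where the triangle's $3/4$ deficit is exactly recovered by the first pendant). Both arrive at the same per-component conclusion that non-triangle components contribute at least $d_\ell$.
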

\begin{proof}
  Let $H_1, \ldots, H_c$ be the connected components of $H$. First we
  claim that the invertibility of $V_S$ implies that none of the $H_\ell$
  is bipartite. Indeed, if some
  $H_\ell$ were bipartite, with bipartition $L \cup R$, then the nonzero vector
  $x$ defined by
  \[
  x_i =
  \begin{cases}
    1 &i \in L\\
    -1 &i \in R\\
    0 &\text{otherwise},
  \end{cases}
  \]
  is in the kernel of $V_S$. In particular, each $H_\ell$ must have at
  least as many edges as vertices. Because the number of edges of $H$
  equals the number of vertices, it follows that \emph{every}
  connected component $H_\ell$ must have exactly as many edges as
  vertices, too. In particular, this means that every $H_\ell$ is the
  disjoint union of a spanning tree and an edge, and the edge creates
  an odd-length cycle.


  Let us explicitly describe the inverse $V_S^{-1}$.  For each $e \in
  S$ we need to give a vector $u_e \in \R^d$ so that $u_e^\top v_e =
  1$ and $u_e^\top v_f = 0$ for every $f \in S$, $f \neq e$. Then
  $U^\top = V_S^{-1}$, where $U = (u_e)_{e \in S}$ is the matrix whose
  columns are the $u_e$ vectors.  Let $H_\ell$ be, as above, one of
  the connected components of $H$. We will define the vectors $u_e$
  for all edges $e$ in $H_\ell$; the vectors for edges in the other
  connected components are defined analogously. Let $C_\ell$ be
  the unique cycle of $H_\ell$. Recall that $C_\ell$ must be an odd
  cycle. For any $e = (i,j)$ in $C_\ell$, we set the $i$-th and the
  $j$-th coordinate of $u_e$ to $\frac{1}{2}$. Let $T$ be the spanning
  tree of $H_\ell$ derived from removing the edge $e$.  We set the
  coordinates of $u_e$ corresponding to vertices of $H_\ell$ other
  than $i$ and $j$ to either $-\frac12$ or $+\frac12$, so that the
  vertices of any edge of $T$ receive values with opposite signs. This
  can be done by setting the coordinate of $u_e$ corresponding to
  vertex $k$ in $H_\ell$ to $\frac12 (-1)^{\delta_T(i, k)}$, where
  $\delta_T(i,k)$ is the distance in $T$ between $i$ and $k$. Because
  $C_\ell$ is an odd cycle, $\delta_T(i,j)$ is even, and this
  assignment is consistent with the values we already determined for
  $i$ and $j$. Finally, the coordinates of $u_e$ which do not
  correspond to vertices of $H_\ell$ are set to $0$. See
  Figure~\ref{fig:inverse-cycle} for an example. It is easy to verify
  that $u_e^\top v_e = 1$ and $u_e^\top v_f = 0$ for any edge $f \neq
  e$. Notice that $\|u_e\|_2^2 = \frac{d_\ell}{4}$, where $d_\ell$ is
  the number of vertices (and also the number of edges) of $H_\ell$.

  \begin{figure}[htp]
    \centering
    \includegraphics{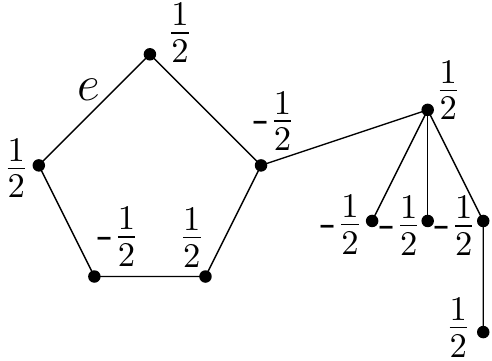}
    \caption{The values of the coordinates of $u_e$ for $e \in C_\ell$.}
    \label{fig:inverse-cycle}
  \end{figure}

  It remains to describe $u_e$ when $e = (i,j) \not \in C_\ell$. Let
  $T$ be the tree derived from $H_\ell$ by contracting $C_\ell$ to a
  vertex $r$, and set $r$ as the root of $T$. Without loss of
  generality, assume that $j$ is the endpoint of $e$ which is further
  from $r$ in $T$. We set the $j$-th coordinate of $u_e$ equal to
  $1$. We set the coordinates of $u_e$ corresponding to vertices in
  the subtree of $T$ below $j$ to either $-1$ or $+1$ so that the
  signs alternate down each path from $j$ to a leaf of $T$ below
  $j$. This can be achieved by setting the coordinate of $u_e$
  corresponding to vertex $k$ to $(-1)^{\delta_T(j,k)}$, where
  $\delta_T(j,k)$ is the distance between $j$ and $k$ in $T$. All
  other coordinates of $u_e$ are set equal to $0$. See
  Figure~\ref{fig:inverse-tree} for an example. Notice that
  $\|u_e\|_2^2 \ge 1$ (and in fact equals the number of nodes in the
  subtree of $T$ below the node $j$).

  \begin{figure}[htp]
    \centering
    \includegraphics{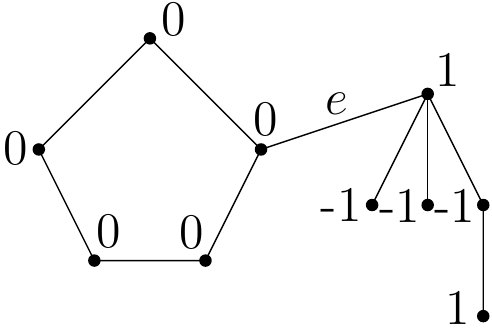}
    \caption{The values of the coordinates of $u_e$ for $e \not\in C_\ell$.}
    \label{fig:inverse-tree}
  \end{figure}

  We are now ready to finish the proof. Clearly if $[d]$ can be
  partitioned into $t = \frac{d}{3}$ disjoint triangles, and the union of their
  edges is $S$, then
  \[
  \tr((V_S V_S^\top)^{-1}) = \tr(U U^\top) = \sum_{e \in
    S}{\|u_e\|_2^2} = \frac{3|S|}{4} = \frac{3d}{4}.
  \]
  In the general case, we have
  \begin{align*}
  \tr((V_S V_S^\top)^{-1}) = tr(U U^\top)
  &= \sum_{e \in S}{\|u_e\|_2^2} \\
  &\ge \sum_{\ell = 1}^c{\frac{|C_\ell|\cdot d_\ell}{4} + d_\ell - |C_\ell|}\\
  &\ge \frac{9t}{4} + d-3t = d - \frac{3t}{4},
  \end{align*}
  where $|C_\ell|$ is the length of $C_\ell$, and $d_\ell$ is the
  number of edges (and also the number of vertices) in $H_\ell$. The
  final inequality follows because any connected component $H_\ell$
  which is not a triangle contributes at least $d_\ell$ to the sum.
\end{proof}

Recall that in the Partition into Triangles problem we are given a
graph $G = (W, E)$, and need to decide if $W$ can be partitioned
into $\frac{|W|}{3}$ vertex-disjoint triangles. This problem is
$\NP$-complete (\cite{GJ79} present a proof in
Chapter 3 and cite personal communication with Schaeffer), and this,
together with Lemma~\ref{lm:triangles}, suffice to show that the
$A$-optimal design problem is $\NP$-hard when $k=d$. To prove hardness
of approximation, we prove hardness of a gap version of Partition into
Triangles. In fact, we just observe that the reduction from
3-Dimensional Matching to Partition into Triangles in~\cite{GJ79} and
known hardness of approximation of 3-Dimensional Matching give the
result we need.

\begin{lemma}\label{lm:tri-gap}
  Given a graph $G = (W, E)$, it is $\NP$-hard to distinguish the
  two cases:
  \begin{enumerate}
  \item $W$ can be partitioned into $\frac{|W|}{3}$ vertex-disjoint triangles;
  \item every set of vertex-disjoint triangles in $G$ has cardinality
    at most $\alpha\frac{|W|}{3}$,
  \end{enumerate}
  where  $\alpha \in (0,1)$ is an absolute constant.
\end{lemma}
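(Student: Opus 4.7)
\medskip
\noindent\textbf{Proof proposal.} The plan is to obtain the gap by combining a known APX\nobreakdash-hardness result for Maximum 3-Dimensional Matching (Max-3DM) with the classical reduction from 3-Dimensional Matching to Partition into Triangles attributed to Schaeffer (and described in Garey–Johnson). Recall that in Max-3DM we are given three disjoint sets $X,Y,Z$ of size $n$ each and a collection $T$ of triples in $X\times Y\times Z$, and we seek the largest pairwise-disjoint sub-collection. By the result of Kann (and later sharpenings), there is an absolute constant $\beta\in(0,1)$ such that it is $\NP$-hard to distinguish the two cases: (Yes) $T$ contains a perfect matching of size $n$; (No) every matching in $T$ has size at most $\beta n$. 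Moreover, one may assume the instance is in bounded-occurrence form, so that $|T|=O(n)$; this is critical for the gap to survive the reduction.

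Next I would invoke the Garey–Johnson/Schaeffer transformation that, given a 3DM instance $(X,Y,Z,T)$, constructs a graph $G=(W,E)$ whose triangle packings are in tight correspondence with matchings in $T$. The construction uses a fixed-size gadget for each triple (plus the $3n$ ``element'' vertices of $X\cup Y\cup Z$), engineered so that (i) any partition of $W$ into $|W|/3$ vertex-disjoint triangles must, inside each gadget, make a binary ``selected/unselected'' choice that induces a matching in $T$ covering the element vertices, and (ii) every matching $M\subseteq T$ extends to a triangle packing of $G$ in which each element vertex is covered iff it is covered by $M$. Thus a perfect matching in $T$ yields a full triangle partition of $W$, i.e., $|W|/3$ disjoint triangles.

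In the No case, any triangle packing of $G$ projects to a matching in $T$ of size at most $\beta n$, leaving at least $3(1-\beta)n$ element vertices uncovered. Each uncovered element vertex forces a loss of $\Omega(1)$ triangles compared with $|W|/3$ (because element vertices can only participate in triangles of the prescribed gadget form). Since $|W|=\Theta(n)$ by the bounded-occurrence assumption, this yields a gap
\[
\text{No-case packing size} \;\le\; \Bigl(1 - c(1-\beta)\Bigr)\cdot\frac{|W|}{3}
\]
for an absolute constant $c>0$, and we may set $\alpha := 1-c(1-\beta) \in (0,1)$.

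The main obstacle I anticipate is in the second paragraph: verifying that a gap-preserving reduction of the required form actually exists in (or can be straightforwardly extracted from) the literature. The Garey–Johnson proof is stated only for the decision version, so one must either inspect the gadget to confirm the linear ``loss per uncovered element'' property above, or substitute an alternative reduction (for instance, directly from MAX-E3-SAT-$B$ or from bounded-occurrence 3DM using the standard ``element-covering'' gadget) that is transparently gap-preserving. Once the linear relationship between uncovered elements and missing triangles is established, bounding $|W|=O(n)$ via bounded occurrence is immediate, and the constant $\alpha$ follows.
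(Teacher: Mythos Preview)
Your proposal is correct and follows essentially the same route as the paper: invoke a known gap-hardness result for bounded-occurrence 3-Dimensional Matching (the paper cites Petrank rather than Kann) and verify that the Garey--Johnson/Schaeffer gadget reduction to Partition into Triangles is gap-preserving. The paper carries out exactly the gadget inspection you flag as the ``main obstacle,'' showing that a triangle packing of size at least $\alpha|W|/3$ yields a matching of size at least $(10\alpha-9)m$, using $|F|\le 3m$ from bounded occurrence; so your anticipated difficulty is real but routine.
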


To prove Lemma~\ref{lm:tri-gap} we use a theorem of Petrank.

\begin{theorem}[\cite{Petrank94}]\label{thm:3dm-gap}
  Given a collection of triples $F\subseteq X \times Y \times Z$, where
  $X$, $Y$, and $Z$ are three disjoint sets of size $m$ each, and each
  element of $X \cup Y \cup Z$ appears in at most $3$ triples of $F$, it is
  $\NP$-hard to distinguish the two cases
  \begin{enumerate}
  \item there is a set of disjoint triples $M \subseteq F$ of
    cardinality $m$;
  \item every set of disjoint triples $M \subseteq F$ has cardinality
    at most $\beta m$,
  \end{enumerate}
  where $\beta \in (0,1)$ is an absolute constant.
\end{theorem}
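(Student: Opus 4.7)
The plan is to prove Lemma \ref{lm:tri-gap} by a gap-preserving reduction from the gap 3-Dimensional Matching problem of Theorem \ref{thm:3dm-gap}. Given an instance $F \subseteq X \times Y \times Z$ with $|X|=|Y|=|Z|=m$ and with each element appearing in at most $3$ triples of $F$, I would apply the standard reduction of Garey and Johnson (Chapter 3) from 3DM to Partition into Triangles, which for each triple $t = (x,y,z) \in F$ attaches a constant-size gadget to the element vertices $x,y,z$. The gadget introduces $c$ new ``private'' vertices (for an absolute constant $c$) connected to $\{x,y,z\}$ in such a way that, in any vertex-disjoint triangle packing that covers all private vertices of the gadget, exactly one of two local configurations occurs: an \emph{on}-configuration, in which $x,y,z$ are jointly covered by triangles that pass through the gadget (encoding ``$t$ is selected'' in the 3DM solution), and an \emph{off}-configuration, in which $x,y,z$ are untouched by any triangle inside the gadget. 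Crucially, the gadget contains no spurious local triangle packing covering only one or two of $\{x,y,z\}$.

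With this gadget in hand, the reduction produces a graph $G = (W, E)$ with $|W| = 3m + c|F|$ vertices. Since Petrank's instance has each element in at most $3$ triples, $|F| \le 3m$, and hence $|W|/3 = \Theta(m)$. In the YES case (a perfect 3DM of size $m$ exists) I would exhibit a partition of $W$ into exactly $|W|/3$ vertex-disjoint triangles by placing the gadgets of the $m$ matched triples into their on-configuration and all other gadgets into their off-configuration; a direct count verifies that the total number of triangles used is $|W|/3$. In the NO case I would argue contrapositively: from any vertex-disjoint triangle packing $P$ in $G$, define $M(P) \subseteq F$ as the set of triples whose gadget is in the on-configuration. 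Then $M(P)$ is a valid 3DM matching (since each element of $X\cup Y\cup Z$ appears in at most one on-gadget), so Petrank's hypothesis gives $|M(P)| \le \beta m$. Conversely, each ``unmatched'' element (those not covered by any on-gadget) forces at least one vertex near its gadget to be left out of $P$, so $|P| \le |W|/3 - \Omega((1-\beta)m)$. Since $|W|/3 = \Theta(m)$, this yields $|P| \le \alpha\cdot|W|/3$ for an absolute constant $\alpha = 1 - \Omega(1-\beta) \in (0,1)$.

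The main obstacle is the precise construction and case analysis of the gadget: one must verify that every locally feasible triangle packing in the gadget falls into exactly the on/off dichotomy above, with no third configuration that (for example) covers one of $x,y,z$ but leaves a private vertex exposed. Once the gadget is fixed, the remaining arithmetic is routine: the constants $c$ (gadget size), the degree bound $|F|\le 3m$ from Theorem~\ref{thm:3dm-gap}, and Petrank's constant $\beta$ combine linearly with $m$, which is exactly why an additive 3DM gap of size $(1-\beta)m$ translates into a multiplicative triangle-packing gap of a fixed constant $\alpha < 1$, as required by the statement.
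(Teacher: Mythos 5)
You have misidentified the statement to be proved. The statement given is Theorem~\ref{thm:3dm-gap}, which asserts the $\NP$-hardness of the \emph{gap 3-Dimensional Matching} problem, attributed to Petrank. Your proposal opens with ``The plan is to prove Lemma~\ref{lm:tri-gap} by a gap-preserving reduction from the gap 3-Dimensional Matching problem of Theorem~\ref{thm:3dm-gap}'' — that is, you \emph{assume} Theorem~\ref{thm:3dm-gap} and use it to derive a different result (the gap hardness of Partition into Triangles). This does not prove Theorem~\ref{thm:3dm-gap}; it is circular from that point of view.

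In the paper, Theorem~\ref{thm:3dm-gap} is not proved from first principles. It is a cited result, and the paper's entire ``proof'' is a short remark: Petrank's original formulation allows $M$ to have intersecting triples and measures the number of elements covered \emph{exactly once}, showing it is hard to distinguish all $3m$ elements being covered exactly once from at most $3\beta m$ being so covered; the paper observes that this formulation immediately implies the disjoint-matching version stated in Theorem~\ref{thm:3dm-gap}. What is needed from you is exactly that short translation argument: why a large set of elements each covered exactly once yields a large set of pairwise-disjoint triples (and conversely), and how Petrank's bounded-occurrence hypothesis carries over. Reproducing Petrank's PCP-style hardness proof from scratch is out of scope; the task is only the bridging observation.

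As an aside, the construction you sketch (gadget-per-triple, on/off configurations, counting $|W| = 3m + c|F|$, using $|F| \le 3m$ to convert an additive $(1-\beta)m$ gap into a multiplicative constant $\alpha < 1$) is essentially the paper's proof of Lemma~\ref{lm:tri-gap}, with $c = 9$ and the gadget of Figure~\ref{fig:tri-gadget}. So your reasoning is sound, just aimed at the wrong target.
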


We note that Petrank gives a slightly different version of the
problem, in which the set $M$ is allowed to have intersecting triples,
and the goal is to maximize the number of elements $X \cup Y \cup Z$
that are covered exactly once. Petrank shows that it is hard to
distinguish between the cases when every element is covered exactly
once, and the case when at most $3\beta m $ elements are covered
exactly once. It is immediate that this also implies
Theorem~\ref{thm:3dm-gap}.

\medskip
\begin{proofof}{Lemma~\ref{lm:tri-gap}}
  We will show that the reduction in~\cite{GJ79} from 3-Dimensional
  Matching to Partition into Triangles is approximation
  preserving. This follows in a straightforward way from the argument
  in~\cite{GJ79}, but we repeat the reduction and its analysis for the
  sake of completeness.

  Given $F \subseteq X \cup Y \cup Z$ such that each element of $X
  \cup Y \cup Z$ appears in at most $3$ tripes of $F$, we construct a
  graph $G = (W, E)$ on the vertices $X \cup Y \cup Z$ and $9|F|$
  additional vertices: $a_{f1}, \ldots a_{f9}$ for each $f \in F$. For
  each triple $f \in F$, we include in $E$ the edges $E_f$ shown in
  Figure~\ref{fig:tri-gadget}. Note that the subgraphs spanned by the sets $E_f$, $E_g$ for
  two different triples $f$ and $g$ are edge-disjoint, and the only vertices they
  share are in $X \cup Y \cup Z$.

  \begin{figure}[htp]
    \centering
    \includegraphics{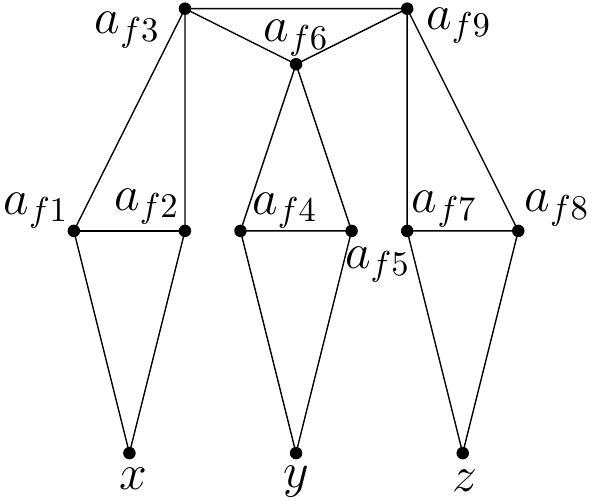}
    \caption{The subgraph with edges $E_f$ for the triple $f = \{x, y, z\}$. (Adapted from~\cite{GJ79})}
    \label{fig:tri-gadget}
  \end{figure}

  First we show that if $F$ has a matching $M$ covering all
  elements of $X \cup Y \cup Z$, then $G$ can be partitioned into
  vertex-disjoint triangles. Indeed, for each $f = \{x,y,z\} \in M$ we can take
  the triangles $\{x, a_{f1}, a_{f2}\}$, $\{y, a_{f4}, a_{f5}\}$,
  $\{z, a_{f7}, a_{f8}\}$, and $\{a_{f3}, a_{f6}, a_{f9}\}$. For each
  $f \not \in M$ we can take the triangles $\{a_{f1}, a_{f2},
  a_{f3}\}$, $\{a_{f4}, a_{f5}, a_{f6}\}$, and $\{a_{f7}, a_{f8},
  a_{f9}\}$.

  In the other direction, assume there exists a set $T$ of at least
  $\alpha \frac{|W|}{3}$ vertex disjoint triangles in $G$, for a value
  of $\alpha$ to be chosen shortly. We need to show that $F$ contains
  a matching of at least $\beta m$ triples. To this end, we construct
  a set $M$ which contains all triples $f$, for each $E_f$ which
  contains at least $4$ triangles of $T$. Notice that the only way to
  pick three vertex disjoint triangles from $E_f$ is to include the
  lower three triangles (see Figure), so any two triples $f$ and $g$
  in $M$ must be disjoint. The cardinality of $T$ is at most $4|M| +
  3(|F| - |M|) = |M| + 3|F|$. Therefore,
  \[
  |M| + 3|F| \ge \alpha \frac{|W|}{3} = \alpha (m + 3|F|),
  \]
  and we have $|M| \ge \alpha m - (1-\alpha)3|F| \ge (10\alpha - 9)m$,
  where we used the fact that $|F| \le 3m$ because each element of $X$
  appears in at most $3$ triples of $F$. Then, if $\alpha \ge \frac{9
    + \beta}{10}$ we have $|M| \ge \beta m$. This finishes the proof
  of the lemma.
\end{proofof}

We now have everything in place to finish the proof of our main
hardness result.

\medskip
\begin{proofof}{Theorem~\ref{thm:hardness}}
  We use a reduction from (the gap version of) Partition into
  Triangles to the $A$-optimal design problem. In fact the reduction
  was already described in the beginning of the section: given a graph
  $G = ([d], E)$, it outputs the columns $v_e$ of the vertex by edge
  incidence matrix $V$ of $G$.

  Consider the case in which the vertices of $G$ can be partitioned into
  vertex-disjoint triangles. Let $S$ be the union of the edges of the
  triangles. Then, by Lemma~\ref{lm:triangles},
  $\tr((V_SV_S^\top)^{-1}) = \frac{3d}{4}$.

  Next, consider the case in which every set of vertex-disjoint
  triangles in $G$ has cardinality at most $\alpha\frac{d}{3}$. Let
  $S$ be any set of $d$ edges in $E$ such that $V_S$ is
  invertible. The subgraph $H = ([d], S)$ of $G$ can have at most
  $\alpha \frac{d}{3}$ connected components that are triangles, because any
  two triangles in distinct connected components are necessarily
  vertex-disjoint. Therefore, by Lemma~\ref{lm:triangles},
  $\tr((V_SV_S^\top)^{-1}) \ge \frac{(4 -\alpha) d}{4}$.

  It follows that a $c$-approximation algorithm for the $A$-optimal
  design problem, for any $c < \frac{4-\alpha}{3}$, can be used to
  distinguish between the two cases of Lemma~\ref{lm:tri-gap}, and,
  therefore, the $A$-optimal design problem is $\NP$-hard to
  $c$-approximate.
\end{proofof}


\end{document}